\setlist[itemize]{leftmargin=2.5ex, topsep=0.5ex}
\setlist[enumerate]{leftmargin=3ex, itemsep=0ex}
\newtheorem{theorem}{Theorem}
\newtheorem{observation}[theorem]{Observation}
\newtheorem{definition}[theorem]{Definition}
\newtheorem{lemma}[theorem]{Lemma}
\newtheorem*{lemma-rep}{Lemma 11}
\newtheorem*{corollary-rep}{Corollary 16}
\newtheorem{proposition}[theorem]{Proposition}
\newtheorem{corollary}[theorem]{Corollary}
\newtheorem{example}[theorem]{Example}
\newtheorem{algor}[theorem]{Algorithm}
\algnewcommand\algorithmicinput{\textbf{Input:}}
\algnewcommand\algorithmicoutput{\textbf{Output:}}
\algrenewcommand\algorithmicprocedure{\textbf{Procedure:}}
\algnewcommand\algorithmicbreak{\textbf{break}}
\algrenewcommand\textproc[1]{#1}
\algnewcommand\Input{\item[\algorithmicinput]}
\algnewcommand\Output{\item[\algorithmicoutput]}
\algrenewcommand\Procedure[2]{\item[\algorithmicprocedure]\textproc{#1}\ifthenelse{\equal{#2}{}}{}{(#2)}}
\DeclareMathOperator{\Succ}{succ}
\newcommand{\ALC}[0]{\ensuremath{\mathcal{A}\mathcal{L}\mathcal{C}}}
\newcommand{\ALCQ}[0]{\ensuremath{\ALC\mathcal{Q}}}
\newcommand{\ALCO}[0]{\ensuremath{\ALC\mathcal{O}}}
\newcommand{\ALCHQ}[0]{\ensuremath{\ALC\mathcal{HQ}}}
\newcommand{\ALCSCC}[0]{\ensuremath{\ALC\mathcal{SCC}}}
\newcommand{\ALCISCC}[0]{\ensuremath{\ALC\mathcal{ISCC}}}
\newcommand{\ALCplus}[0]{\ensuremath{\ALCSCC^{++}}}
\newcommand{\ALCIplus}[0]{\ensuremath{\ALCISCC^{++}}}
\newcommand{\suc}[0]{\ensuremath{\mathit{succ}}}
\newcommand{\constr}[0]{\ensuremath{\mathit{sat}}}
\newcommand{\rc}[0]{\ensuremath{\overline{r}}}
\newcommand{\I}[0]{\ensuremath{\mathcal{I}}}
\newcommand{\C}[0]{\ensuremath{\mathcal{C}}}
\newcommand{\E}[0]{\ensuremath{\mathcal{E}}}
\newcommand{\M}[0]{\ensuremath{\mathcal{M}}}
\newcommand{\T}[0]{\ensuremath{\mathcal{T}}}
\newcommand{\A}[0]{\ensuremath{\mathcal{A}}}
\newcommand{\At}[0]{\ensuremath{\mathbb{A}}}
\newcommand{\R}[0]{\ensuremath{\mathcal{R}}}
\newcommand{\universe}[0]{\ensuremath{\mathcal{U}}}
\newcommand{\atleastq}[3]{\mathopen{\geqslant}\, #1\, #2 . #3}
\newcommand{\atmostq}[3]{\mathopen{\leqslant}\, #1\, #2 . #3}
\newcommand{\atmostc}[2]{(\mathopen{\leqslant}\, #1\, #2)}
\newcommand{\Div}[0]{\,\mathsf{dvd}\,}
\DeclareMathOperator{\Var}{Var}
\newcommand{\Ind}[0]{\mathop{\mathsf{Ind}}}
\newcommand{\ars}[0]{\ensuremath{\mathit{ars}^\I}}
\newcommand{\con}[0]{\ensuremath{\mathit{Con}}}
\newcommand{\subs}[2]{\sigma^{#1}_{#2}}
\newcommand{\tubs}[2]{\tau^{#1}_{#2}}
\newcommand{\complclass}[1]{{\sc #1}\xspace} 
\newcommand{\NP}{\complclass{NP}}
\newcommand{\PSpace}{\complclass{PSpace}}
\newcommand{\ExpTime}{\complclass{ExpTime}}
\newcommand{\NExpTime}{\complclass{NExpTime}}
\newcommand{\TwoExpTime}{\complclass{2ExpTime}}
\newcommand{\state}{\mathfrak{q}}
\newcommand{\query}{{q}}
\newcommand{\Vlang}{{V}}
\newcommand{\kb}{\ensuremath{\mathcal{K}}}
\newcommand{\tuplei}[1]{(#1)}
\def\hmath$#1${\texorpdfstring{{\rmfamily\textit{#1}}}{#1}}
\newcommand{\abox}{\mathcal{A}}
\newcommand{\tbox}{\mathcal{T}}
\newcommand{\ercbox}{\mathcal{R}}
\newcommand{\DI}{\Delta^{\I}}
\newcommand{\IndA}{\Ind_\A}
\newcommand{\DInamed}{\DI_{\mathrm{named}}}
\newcommand{\Iunrav}{\I^{\to}}
\newcommand{\cdotIunrav}{\cdot^{\Iunrav}}
\newcommand{\DIunrav}{\Delta^{\Iunrav}}
\newcommand{\DIunravnamed}{\DIunrav_{\mathrm{named}}}
\newcommand{\J}{\mathcal{J}}
\newcommand{\Ji}[1]{ {\J}_{#1} }
\newcommand{\cdotJi}[1]{ {\cdot}^{\Ji{#1}} }
\newcommand{\DJi}[1]{ {\Delta}^{\Ji{#1}} }
\newcommand{\DJinamed}[1]{ {\Delta_{\mathrm{named}}}^{\Ji{#1}} }
\renewcommand{\DJ}{\Delta^{\J}}
\newcommand{\DJnamed}{\DJ_{\mathrm{named}}}
\newcommand{\indA}{\Ind_\A}
\newcommand{\last}[1]{\mathsf{last}(#1)}
\newcommand{\first}[1]{\mathsf{first}(#1)}
\newcommand{\Neib}[2]{\mathsf{N}_{#1}(#2)}
\newcommand{\izo}{\mathfrak{f}}
\newcommand{\homo}{\mathfrak{h}}
\newcommand{\blocked}[2]{\mathsf{Bl}^{[#1]}_{#2}}
\newcommand{\Iloose}[1]{\I^{[#1]}}
\newcommand{\DIloose}[1]{\Delta^{\Iloose{k}}}
\newcommand{\DIloosenamed}[1]{\DIloose{k}_{\mathrm{named}}}
\newcommand{\Nat}{\mathbb{N}}
\newcommand{\ALCQt}[0]{\ensuremath{\ALC\mathcal{Q}t}}
\newcommand{\eqbisim}{\equiv_{\mathsf{fb}}}
\newcommand{\SuccI}[2]{\mathsf{Succ}_{#1}(#2)}
\newcommand{\ALCH}[0]{\ensuremath{\ALC\mathcal{H}}}
\newcommand{\ALCHcap}{\ALCH^{\cap}}
\newcommand{\suff}[2]{\mathsf{suff}_{#1}(#2)}
\newcommand{\maxfr}[1]{{#1}_{\mathsf{mfr}}} 
\newcommand{\forkrev}[1]{{#1}_{\mathsf{fr}}}
\newcommand\restr[2]{{
  \left.\kern-\nulldelimiterspace 
  #1 
  \vphantom{\big|} 
  \right|_{#2} 
 }}
\begin{document}

\title{Satisfiability and Query Answering in Description Logics with Global and Local Cardinality Constraints}

\author{Franz Baader\inst{1}, Bartosz Bednarczyk\inst{1}\inst{2}, and Sebastian Rudolph\inst{1}}

\institute{Faculty of Computer Science, TU Dresden, Germany\\
           \email{firstname.lastname@tu-dresden.de}
\and
Institute of Computer Science, University of Wroc\l aw, Poland\\
          	\email{bartosz.bednarczyk@cs.uni.wroc.pl} }

\authorrunning{Baader, Bednarczyk, and Rudolph}
	
\titlerunning{DLs with Global and Local Cardinality Constraints}

\maketitle

\begin{abstract}
We introduce and investigate the expressive description logic (DL) $\ALCplus$, in which the global and local cardinality constraints introduced in previous papers can
be mixed. On the one hand, we prove that this does not increase the complexity of satisfiability checking and other standard inference problems. 
On the other hand, the satisfiability problem becomes undecidable if inverse roles are added to the languages. In addition, even without
inverse roles, conjunctive query entailment in this DL turns out to be undecidable. We prove that decidability of querying can be regained
if global and local constraints are not mixed and the global constraints are appropriately restricted. The latter result is based on a 
locally-acyclic model construction, and it reduces query entailment to ABox consistency in the restricted setting, i.e., to ABox consistency
w.r.t.\ restricted cardinality constraints in $\ALCSCC$, for which we can show an ExpTime upper bound.
\end{abstract}

\section{Introduction}
\label{sect:introduction}

Description Logics (DLs) \cite{BCNMP03} are a well-investigated family of logic-based knowledge representation languages,
which are frequently used to formalize ontologies for application domains such as biology and medicine \cite{HoSG15}.
To define the important notions of such an application domain as formal concepts, DLs state necessary and
sufficient conditions for an individual to belong to a concept.
These conditions can be Boolean combinations of atomic properties required for the
individual (expressed by concept names) or properties that refer
to relationships with other individuals and their properties
(expressed as role restrictions). Using an example from \cite{BaEc17}, the concept of a motor vehicle can be formalized by the concept description
$$
\textit{Vehicle} \sqcap \exists\textit{part}.\textit{Motor},
$$
which uses the concept names \textit{Vehicle} and \textit{Motor} and the role name \textit{part} as well as
the concept constructors conjunction ($\sqcap$) and existential restriction ($\exists r.C$).
The concept inclusion (CI)
$$
\textit{Motor-vehicle} \sqsubseteq \textit{Vehicle} \sqcap \exists\textit{part}.\textit{Motor}
$$
then states that every motor vehicle needs to belong to this concept description.
Numerical constraints on the number of role successors (so-called number restrictions) have been
used early on in DLs \cite{BBMA89,HoNS90,HoBa91b}. For example, using number restrictions, motorcycles
can be constrained to being motor vehicles with exactly two wheels:
$$
\begin{array}{r@{\ }c@{\ }l}
\textit{Motorcycle} &\sqsubseteq& \textit{Motor-vehicle}\ \sqcap\mbox{} 
(\atmostq{2}{\textit{part}}{\textit{Wheel}}) \sqcap (\atleastq{2}{\textit{part}}{\textit{Wheel}}).
\end{array}
$$
The exact complexity of reasoning in $\ALCQ$, the DL that has all Boolean operations and number
restrictions of the form $(\atmostq{n}{r}{C})$ and $(\atleastq{n}{r}{C})$ as concept constructors,
was determined by Stephan Tobies \cite{Tobi99,Tobi01a}: it is \PSpace-complete without CIs and \ExpTime-complete w.r.t.\ CIs,
independently of whether the numbers occurring in the number restrictions are encoded in unary or binary.
Note that, using unary coding of numbers, the number $n$ is assumed to contribute $n$ 
to the size of the input, whereas with binary coding the size of the number $n$ is $\log n$. Thus, for large numbers,
using binary coding is more realistic.

Whereas number restrictions are local in the sense that they consider role successors of an individual
under consideration (e.g.\ the wheels that are part of a particular motor vehicle), cardinality restrictions on concepts (CRs) \cite{BaBH96,Tobi00}
are global, i.e., they consider all individuals in an interpretation. For example, the cardinality restriction
$$
\atmostc{45000000}{(\textit{Car}\sqcap\exists\textit{registered-in}.\textit{German-district})}
$$
states that at most 45 million cars are registered all over Germany. Such cardinality restrictions can be seen as quantitative extensions
of CIs since a CI of the form $C\sqsubseteq D$ can be expressed by the CR $\atmostc{0}{(C\sqcap\neg D)}$.
The availability of CRs
increases the complexity of reasoning: as mentioned above, consistency in $\ALCQ$ w.r.t.\ CIs is \ExpTime-complete,
but consistency w.r.t.\ CRs is \NExpTime-complete 
if the numbers occurring in the CRs are assumed to be encoded in binary \cite{Tobi00}.
With unary coding of numbers, consistency stays \ExpTime-complete even w.r.t.\ CRs \cite{Tobi00}. However, as
the above example considering 45 million cars indicates, unary coding does not yield a realistic measure for the
input size  if numbers with large values are employed.

In two previous publications we have, on the one hand, extended the DL
$\ALCQ$ by more expressive number restrictions using cardinality and set constraints expressed
in the quantifier-free fragment of Boolean Algebra with Presburger Arithmetic (QFBAPA) \cite{KuRi07}.
In the resulting DL $\ALCSCC$, which was introduced and investigated in \cite{Baad17}, 
cardinality and set constraints are applied locally, i.e., they refer to the role successors of an individual under consideration.
For example, we can state that the number of cylinders of a motor must coincide with the number of spark plugs in this motor, without fixing
what this number actually is, using the following $\ALCSCC$ CI:
$$
\mathit{Motor}\sqsubseteq \suc(|\mathit{part}\cap\mathit{Cylinder}| = |\mathit{part}\cap\mathit{SparkPlug}|).
$$
It was shown in \cite{Baad17} that pure concept satisfiability in $\ALCSCC$ is a \PSpace-complete problem, and concept satisfiability
w.r.t.\ a general TBox is \ExpTime-complete. This shows that the more expressive number restrictions do not increase
the complexity of reasoning since reasoning in $\ALCQ$ has the same complexity, as mentioned above.

On the other hand, we have extended the terminological formalism of the well-known description logic $\ALC$\footnote{%
The DL $\ALC$ is the fragment of $\ALCQ$ in which only number restrictions of the form
$(\atmostq{0}{r}{\neg C})$ (written $\forall r.C$) and $(\atleastq{1}{r}{C})$ (written $\exists r.C$) are available.
}
from CIs not only to CRs, but to more general cardinality constraints expressed in QFBAPA \cite{BaEc17}, which we called
extended cardinality constraints (ECBoxes). These constraints are global since they refer to all individuals in the interpretation domain.
An example of a constraint expressible this way, but not expressible using CRs is
$$
\begin{array}{l}
2\cdot |\textit{Car}\sqcap\exists\textit{registered-in}.\textit{German-district}\sqcap\exists\textit{fuel}.\textit{Diesel}| \\
\leq
|\textit{Car}\sqcap\exists\textit{registered-in}.\textit{German-district}\sqcap\exists\textit{fuel}.\textit{Petrol}|,
\end{array}
$$
which states that,
in Germany, cars running on petrol outnumber cars running on diesel by a factor of at least two.
It was shown in \cite{BaEc17} that reasoning w.r.t.\ ECBoxes is still in \NExpTime even if the numbers occurring in the constraints are encoded
in binary. The \NExpTime lower bound follows from the result of Tobies~\cite{Tobi00} CRs mentioned above.
This complexity can be lowered to \ExpTime if a restricted form of
cardinality constraints (RCBoxes) is used. Such RCBoxes are still powerful enough to express statistical knowledge bases \cite{PePo17}.

An obvious way to generalize these two approaches is to combine the two extensions, i.e., to consider extended cardinality constraints,
but now on $\ALCSCC$ concepts rather than just $\ALC$ concepts. This combination was investigated in~\cite{Baad19,Baad19b}, where a \NExpTime upper 
bound was established for reasoning in $\ALCSCC$ w.r.t.\ ECBoxes. It is also shown in \cite{Baad19,Baad19b} that reasoning w.r.t.\ RCBoxes
stays in \ExpTime also for $\ALCSCC$.

Here we go one step further by allowing for a tighter integration of global and local constraints. 
The resulting logic, which we call $\ALCplus$, allows, for example, to relate the number of role successors of a given individual with the overall
number of elements of a certain concept. 
For example, the $\ALCplus$ concept description\footnote{%
To distinguish between constraint expressions in $\ALCSCC$ and in $\ALCplus$, which have a different semantics, we use different keywords for them.
}
$$
\constr(|\mathit{likes}\cap \mathit{Car}| = |\mathit{Car}|)
$$
describes car lovers, i.e., individuals that like \emph{all} cars, independently of whether these cars are related to them by some role or not.
More generally, DLs that can express both local cardinality constraints (i.e., constraints concerning the role successors of  specific individuals)
and global cardinality constraints (i.e., constraints on the overall cardinality of concepts) can,
for instance, be used to check the correctness of statistical statements. For example,
if a German car company claims that they have produced more than $N$ cars in a certain year, and $P$\%\ of the tires used
for their cars were produced by Betteryear, this may be contradictory to a statement of Betteryear that they have
sold less than $M$ tires in Germany. Such statistical information may, of course, also influence the answers to queries.
If we know that the car company VMW uses only tires from Betteryear or Badmonth, but the statistical information
allows us to conclude that another car company has actually bought all the tires sold by Betteryear, then we know that the cars
sold by VMW all have tires produced by Badmonth. This motivates investigating DLs with expressive cardinality constraints,
and to consider not just standard inferences such as satisfiability checking for these DLs, but also query answering.

In the present paper, we show that, from a worst-case complexity point of view, the extended expressivity of $\ALCplus$ 
comes for free if we consider classical reasoning problems. Concept satisfiability in $\ALCplus$ has the same complexity as in 
$\ALC$ and $\ALCSCC$ with global cardinality constraints: it is \NExpTime-complete. However, if we add inverse roles, then
concept satisfiability becomes undecidable. In addition, for effective conjunctive query answering this
logic turns out to be too expressive. We show that conjunctive query entailment w.r.t.\ $\ALCplus$ knowledge bases is, in fact, undecidable. 
In contrast, we can show that conjunctive query entailment w.r.t.\ (an extension of) $\ALCSCC$ RCBoxes is decidable and, in fact, only \ExpTime-complete. 
To proof this result, we first show that standard ABox reasoning in this setting is \ExpTime-complete. Then, we
reduce query entailment over arbitrary structures to query entailment over locally acyclic graphs,
based on an appropriate model construction, which proceeds in three steps. 
Once this is achieved, the \ExpTime upper bound for conjunctive query entailment is shown by a reduction to ABox reasoning,
adapting the approach used by Lutz in~\cite{Lutz08} for $\ALCHQ$.

We assume the reader to be sufficiently familiar with all the standard notions of description logics \cite{BCNMP03,baader_horrocks_lutz_sattler_2017,rudolph2011fodl}.

\section{The logic~\hmath$\ALCplus$}
\label{prelim:sect}

As in \cite{Baad17,BaEc17,Baad19,Baad19b}, we use the quantifier-free fragment of Boolean Algebra with Presburger Arithmetic (QFBAPA) \cite{KuRi07}
to express our constraints.  We start with a brief introduction of QFBAPA (see \cite{KuRi07} and \cite{Baad17} for more details).

In the logic QFBAPA, one can build \emph{set terms} by applying Boolean operations
(intersection $\cap$, union $\cup$, and complement $\cdot^c$)
to set variables as well as the constants $\emptyset$ and $\universe$. Set terms $s,t$ can then be used to state
\emph{set constraints}, which are equality and inclusion constraints of the form $s = t, s\subseteq t$,
where $s,t$ are set terms. \emph{Presburger Arithmetic (PA) expressions}
are built from integer constants and set cardinalities $|s|$ using addition as well as multiplication with
an integer constant.\footnote{%
The definition of QFBAPA in \cite{KuRi07} also allows for integer variables, which we do not use when integrating QFBAPA into our DL.
}
They can be used to form \emph{cardinality constraints}  of the form $k = \ell, k < \ell, N \Div \ell$, where
$k, \ell$ are PA expressions, $N$ is an integer constant, and $\Div$ stands for divisibility. A \emph{QFBAPA formula}
is a Boolean combination of set and cardinality constraints using the Boolean operations $\wedge, \vee, \neg$.
  
A \emph{substitution} $\sigma$ assigns a finite set $\sigma(\universe)$ to $\universe$, the empty set to $\emptyset$, and subsets of $\sigma(\universe)$ to set variables.
It is extended to set terms by interpreting the Boolean operations $\cap$, $\cup$, and $\cdot^c$ as set intersection, set union,
and set complement w.r.t.\ $\sigma(\universe)$, respectively. The substitution $\sigma$ satisfies the set constraint $s = t$ ($s\subseteq t$) if
$\sigma(s) = \sigma(t)$ ($\sigma(s) \subseteq \sigma(t)$). It is further extended to a mapping from PA expressions to integers by interpreting $|s|$ as the cardinality
of the finite set $\sigma(s)$, and addition and multiplication with an integer constant in the usual way. The substitution $\sigma$ satisfies the
cardinality constraint $k = \ell$ if $\sigma(k) = \sigma(\ell)$, $k < \ell$ if $\sigma(k) < \sigma(\ell)$, and $N \Div \ell$ if the integer constant $N$ is a divisor of $\sigma(\ell)$.
The notion of satisfaction of a Boolean combination of set and cardinality constraints is now defined in the obvious way by interpreting $\wedge, \vee, \neg$ as in propositional logic.
The substitution $\sigma$ is a \emph{solution} of the QFBAPA formula $\phi$ if it satisfies $\phi$ in this sense.
A QFBAPA formula $\phi$ is \emph{satisfiable} if it has a solution. In \cite{KuRi07} it is shown that the satisfiability problem 
for QFBAPA formulae is \NP-complete.

We are now ready to define our new logic, which we call $\ALCplus$ to indicate that it is an extension of the logic
$\ALCSCC$ introduced in \cite{Baad17}. When defining the semantics of \ALCplus,
we restrict the attention to \emph{finite} interpretations to ensure that cardinalities of concept descriptions
are always well-defined non-negative integers.

\begin{definition}[$\ALCplus$]
Given disjoint \emph{finite} sets $N_C$ and $N_R$ of \emph{concept names} and \emph{role names}, respectively, 
$\ALCplus$ \emph{concept descriptions} (short: \emph{concepts}) are inductively defined as follows:
\begin{itemize}
\item
   Every concept name $A\in N_C$ is an $\ALCplus$ concept.
\item
   If $C, D$ are $\ALCplus$ concepts, then so are $C\sqcap D$ (\emph{conjunction}), $C\sqcup D$ (\emph{disjunction}), and 
   $\neg C$ (\emph{negation}).
\item
   If $\con$ is a set constraint or a cardinality constraint that uses role names and 
   already defined $\ALCplus$ concepts in place of set variables, then
   $\constr(\con)$ is an $\ALCplus$ concept. We call $\constr(\con)$ a \emph{constraint expression}.
\end{itemize}
As usual, we use $\top$ (\emph{top}) and $\bot$ (\emph{bottom}) as abbreviations for $A\sqcup \neg A$ and $A\sqcap \neg A$,
respectively, where $A$ is an arbitrary concept name.

A \emph{finite interpretation} of $N_C$ and $N_R$ consists of a finite, non-empty set $\Delta^\I$ and a mapping $\cdot^\I$ that
maps every concept name $A\in N_C$ to a subset $A^\I$ of $\Delta^\I$ and every role name $r\in N_R$ to a binary relation $r^\I$ over $\Delta^\I$.
For a given element $d\in \Delta^\I$ we define
   $$
     r^\I(d) := \{e\in \Delta^\I \mid (d,e) \in r^\I\}.
   $$
The substitution $\subs{\I}{d}$ assigns 
the finite set $\Delta^\I$ to $\universe$, the empty set to $\emptyset$, and
the sets $r^\I(d)$ to $r$ and $A^\I$ to $A$, where $r\in N_R$ and $A\in N_C$ are viewed as set variables.
 
The interpretation function $\cdot^\I$ and the substitutions $\subs{\I}{d}$ for $d\in \Delta^\I$ are
inductively extended to $\ALCplus$ concepts by interpreting
the Boolean operators as usual: 
\begin{itemize}
\item
  $\subs{\I}{d}(C\sqcap D) = (C\sqcap D)^\I = C^\I\cap D^\I$,
\item
  $\subs{\I}{d}(C\sqcup D) = (C\sqcup D)^\I = C^\I\cup D^\I$,
\item
  $\subs{\I}{d}(\neg C) = (\neg C)^\I = \Delta^\I \setminus C^\I$,
\end{itemize}
and the constraint expressions as follows:
\begin{itemize}
\item
  $\subs{\I}{d}(\constr(\con)) = \constr(\con)^\I = \{d \in \Delta^\I \mid \mbox{the substitution $\subs{\I}{d}$ satisfies $\con$}\}$.\footnote{%
Note that, by induction, we can assume that $\subs{\I}{d}$ is defined on the set variables (i.e., role names and concepts)
occurring in $\con$.
}
\end{itemize}

The $\ALCplus$ concept description $C$ is \emph{satisfiable} if there is a finite interpretation
$\I$ such that $C^\I \neq \emptyset$.
\end{definition} 
  
Note that the interpretation of concepts as set variables in $\ALCplus$ is global in the sense that it does not depend on $d$,
i.e., $\subs{\I}{d}(C) = C^\I = \subs{\I}{e}(C)$ for all $d, e\in \Delta^\I$. In contrast, the interpretation of role names $r$ as set variables
is local since only the $r$-successors of $d$ are considered by $\subs{\I}{d}(r)$.
In $\ALCSCC$, also the interpretation of concepts as set variables is local since in the semantics of
$\ALCSCC$ the substitution $\subs{\I}{d}$ considers only the elements of $C^\I$ that are role successors of $d$ for
some role name in $N_R$ (see \cite{Baad17}). To reflect this difference in the semantics also on the syntactic level,
we use the keyword $\suc$ (for \emph{succ}sessor) in place of $\constr$ for constraint expressions in $\ALCSCC$,
and call these expressions \emph{successor expressions}. 
For the sake of completeness, we now give a detailed definition of the DL $\ALCSCC$ as well as of $\ALCSCC$ TBoxes, ABoxes, and ECBoxes
(see also \cite{Baad17,BaEc17} and \cite{Baad19,Baad19b}).

\begin{definition}[$\ALCSCC$]\label{ALCSCC:def}
Given disjoint \emph{finite} sets $N_C$ and $N_R$ of \emph{concept names} and \emph{role names}, respectively, 
$\ALCSCC$ \emph{concept descriptions} (short: \emph{concepts}) are inductively defined as follows:
\begin{itemize}
\item
   Every concept name $A\in N_C$ is an $\ALCSCC$ concept.
\item
   If $C, D$ are $\ALCSCC$ concepts, then so are $C\sqcap D$ (\emph{conjunction}), $C\sqcup D$ (\emph{disjunction}), and 
   $\neg C$ (\emph{negation}).
\item
   If $\con$ is a set constraint or a cardinality constraint that uses role names and 
   already defined $\ALCSCC$ concepts in place of set variables, then
   $\suc(\con)$ is an $\ALCSCC$ concept. We call $\suc(\con)$ a \emph{successor expression}.
\end{itemize}

An $\ALCSCC$ \emph{concept inclusion (CI)} is of the form $C\sqsubseteq D$ where $C, D$ are $\ALCSCC$ concepts,
and an $\ALCSCC$ \emph{TBox} is a finite set of $\ALCSCC$ CIs. 
An $\ALCSCC$ \emph{ABox} is a finite set of \emph{concept assertions} $C(a)$ and \emph{role assertions} $r(a,b)$ where
$C$ is an $\ALCSCC$ concept, $r$ is a role name, and $a,b$ are individual names from a set $N_I$ of such names, which is disjoint with
$N_C$ and $N_R$.
We define \emph{extended cardinality constraints on $\ALCSCC$ concepts} as follows:
\begin{itemize}
\item
  \emph{$\ALCSCC$ cardinality terms} are built from integer constants and concept cardinalities $|C|$ for $\ALCSCC$ concepts $C$
  using addition and multiplication with integer constants;
\item
  \emph{extended $\ALCSCC$ cardinality constraints} are of the form $k = \ell, k < \ell, N \Div \ell$, where $k, \ell$ are $\ALCSCC$ cardinality terms and
  $N$ is an integer constant;
\item
  an \emph{extended $\ALCSCC$ cardinality box (ECBox)} is a Boolean combination of extended $\ALCSCC$ cardinality constraints.
\end{itemize}

A \emph{finite interpretation} of $N_C$ and $N_R$ consists of a finite, non-empty set $\Delta^\I$ and a mapping $\cdot^\I$ that
maps every concept name $A\in N_C$ to a subset $A^\I$ of $\Delta^\I$, every role name $r\in N_R$ to a binary relation $r^\I$ over $\Delta^\I$,
and ever individual name $a\in N_I$ to an element $a^\I$ of $\Delta^\I$.
For a given element $d\in \Delta^\I$ we define
   $$
     r^\I(d) := \{e\in \Delta^\I \mid (d,e) \in r^\I\}\ \ \mbox{and}\ \ 
     \ars(d) := \bigcup_{r\in N_R} r^\I(d).\footnote{%
$\mathit{ars}$ stands for ``all role successors.''}
   $$
The substitution $\tubs{\I}{d}$ assigns 
the finite set $\ars(d)$ to $\universe$, the empty set to $\emptyset$, and
the sets $r^\I(d)$ to $r$ and $A^\I\cap\ars(d)$ to $A$, where $r\in N_R$ and $A\in N_C$ are viewed as set variables.
 
The interpretation function $\cdot^\I$ and the substitutions $\tubs{\I}{d}$ for $d\in \Delta^\I$ are
inductively extended to $\ALCSCC$ concepts by interpreting
the Boolean operators as usual: 
\begin{itemize}
\item
  $(C\sqcap D)^\I = C^\I\cap D^\I$ and $\tubs{\I}{d}(C\sqcap D) = C^\I\cap D^\I\cap\ars(d)$.
\item
  $(C\sqcup D)^\I = C^\I\cup D^\I$ and $\tubs{\I}{d}(C\sqcup D) = (C^\I\cup D^\I)\cap\ars(d)$.
\item
   $(\neg C)^\I = \Delta^\I \setminus C^\I$ and $\tubs{\I}{d}(\neg C) = (\Delta^\I \setminus C^\I)\cap\ars(d)$.
\end{itemize}
and the successor expressions as follows:
\begin{itemize}
\item
   $\suc(\con)^\I = \{d \in \Delta^\I \mid \mbox{the substitution $\tubs{\I}{d}$ satisfies $\con$}\}$ and
\item
   $\tubs{\I}{d}(\suc(\con)) = \suc(\con)^\I \cap\ars(d)$. 
\end{itemize}

The finite interpretation $\I$ is a \emph{model} of the $\ALCSCC$ TBox $\T$ if it satisfies all the CIs
$C\sqsubseteq D$ in \T, which is the case if $C^\I\subseteq D^\I$ holds. It is a \emph{model} of the
$\ALCSCC$ ABox $\A$ if it satisfies all the assertions in $\A$, where $\I$  satisfies the concept assertion
$C(a)$ if $a^\I\in C^\I$ holds, and the role assertion $r(a,b)$ if $(a^\I,b^\I)\in r^\I$ holds.
Concept cardinalities within an ECBox $\E$ are interpreted  in the obvious
way, i.e., $|C|^\I := |C^\I|$. Cardinality terms and cardinality constraints as well as their Boolean combination are then
interpreted as in QFBAPA. 
The finite interpretation $\I$ is a \emph{model} of an ECBox $\E$ if it satisfies the Boolean formula $\E$ according to this semantics.

The $\ALCSCC$ concept description $C$ is \emph{satisfiable} w.r.t.\ the ECBox $\E$ if there is a model 
$\I$ of $\E$ such that $C^\I \neq \emptyset$. The ABox $\A$ is \emph{consistent} w.r.t.\ $\E$ 
if there is a model $\I$ of $\E$ that is also a model of $\A$.
\end{definition} 

The following examples illustrates the difference between the semantics of constraint expressions in $\ALCplus$ and
successor expressions in $\ALCSCC$.

\begin{example}\label{alcplus:ex}
\rm
If $A$ is a concept name and $r$ is a role name, then the following is an $\ALCplus$ concept description:
$$
E := \constr(|A| \geq 4) \sqcap \constr(A\subseteq r) \sqcap \constr(|r| \leq 3).
$$
The first constraint expression requires that the overall size of the concept $A$ is at least four. Thus, if $\I$ is an interpretation
with $|A^\I| \leq 3$, then no element of $\Delta^\I$ can belong to $\constr(|A| \geq 4)^\I$. Otherwise, every element of $\Delta^\I$
belongs to $\constr(|A| \geq 4)^\I$. The second constraint says that every element of $A$ must be an $r$ successor of the given
individual. Thus, $\constr(A\subseteq r)^\I$ consists of those elements of $\Delta^\I$ that are connected, via the role $r$, with every
element of $A^\I$. The third constraint is satisfied by those element of $\Delta^\I$ that have at most three $r$ successors. Thus,
the third and the second constraint put together require that $A^\I$ has at most three elements, which contradicts the first constraint.
Thus, we have seen that the concept $E$ is actually unsatisfiable.

Using the syntax for $\ALCSCC$ introduced in \cite{Baad17}, we can write the following $\ALCSCC$ concept description
$$
E' := 
      \suc(A\subseteq r) \sqcap \suc(|r| \leq 3),
$$
and state the global constraint $|A| \geq 4$ in an ECBox. But now we have that $E'$ is satisfiable w.r.t.\ this ECBox since
the constraints in $E'$ are local. In fact, the first constraint in $E'$ is satisfied by individuals for which every role successor that belongs to
$A$ is also an $r$ successors of this individual. Together with the second constraint, this only implies that an individual that belongs
to $E'$ has at most three role successors belonging to $A$, but this does not constrain the overall number of elements of $A$, and thus
does not contradict the statement in the ECBox, which is global. For example, an interpretation $\I$ consisting of four individuals belonging to $A$,
none of which has any role successors, is a model of the global constraint $|A| \geq 4$, and every of its elements belongs to $E'$. In contrast,
none of the individuals in $\I$ belongs to the $\ALCplus$ concept $E$ since the second constraint of $E$ is clearly violated.
\end{example}

The local successor constraints of $\ALCSCC$ can clearly be simulated in $\ALCplus$ by using $C\cap (\bigcup_{r\in N_R} r)$ instead
of $C$ when formulating the constraints. Thus, $\ALCSCC$ concepts can be expressed by 
$\ALCplus$ concepts. In addition, extended cardinality constraints (ECBoxes), as introduced above,
are expressible within $\ALCplus$ concept descriptions, as are nominals, the universal role, and role negation.
Recall that a \emph{nominal} is of the form $\{a\}$ where $a\in N_I$, and is interpreted as the singleton set $\{a^\I\}$
by any finite interpretation $\I$.  The \emph{universal role} $u$ is interpreted as $u^\I = \Delta^\I\times\Delta^\I$,
\emph{role conjunction} as $(r\sqcap s)^\I = r^\I\cap s^\I$, and
\emph{role negation} as $(\neg r)^\I = (\Delta^\I\times\Delta^\I)\setminus r^\I$.

\begin{proposition}\label{express:prop}
$\ALCplus$ concepts can polynomially express nominals, role conjunctions, and $\ALCSCC$ ECBoxes, and thus also ABoxes,
$\ALC$ ECBoxes and $\ALCSCC$ TBoxes. In addition, they have the same expressivity as concepts of $\ALCSCC$ extended
with the universal role or with role negation, whereas both of these features are not expressible in plain $\ALCSCC$.
\end{proposition}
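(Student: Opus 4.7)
The statement combines three claims---the polynomial expressibility of nominals, role conjunctions and $\ALCSCC$ ECBoxes; the equi-expressiveness of $\ALCplus$ with $\ALCSCC$ extended by either the universal role or role negation; and the inexpressibility of those two features in plain $\ALCSCC$---which I plan to handle by three essentially independent constructions. The two positive directions are obtained through explicit polynomial translations, while the negative result relies on a small-model indistinguishability argument.

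For the polynomial translations, a \emph{nominal} $\{a\}$ will be simulated by allocating a fresh concept name $N_a$ and adding $\constr(|N_a| = 1)$ as a top-level conjunct; since this concept involves no role variable, it evaluates to $\Delta^\I$ or $\emptyset$ globally and hence forces $N_a^\I$ to be a singleton in every satisfying interpretation. \emph{Role conjunction} is immediate, because $\cap$ is a native QFBAPA set operation, so any restriction involving $r \sqcap s$ is rephrased by the set term $r \cap s$ inside a constraint expression. For an $\ALCSCC$ \emph{ECBox} $\E$, I inductively translate each $\ALCSCC$ subconcept $C$ into an $\ALCplus$ concept $C^\star$ by rewriting every $\suc(\con)$ as $\constr(\con^\star)$, where $\con^\star$ is obtained by replacing each occurrence of a (previously translated) concept subterm $D$ inside $\con$ by $D \cap \bigcup_{r \in N_R} r$---the encoding already outlined in the text preceding the statement. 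Each atomic cardinality constraint of $\E$ then yields a global concept of shape $\constr(\psi^\star)$, and the Boolean structure of $\E$ is propagated to the concept level via $\sqcap, \sqcup$ and $\neg$, which is sound precisely because every $\constr(\psi^\star)$ evaluates to $\Delta^\I$ or $\emptyset$. The additional expressibility of ABoxes, $\ALC$ ECBoxes and $\ALCSCC$ TBoxes follows as special cases: a CI $C \sqsubseteq D$ is captured by the ECBox constraint $|C \sqcap \neg D| = 0$; a concept assertion $C(a)$ by $\constr(N_a \subseteq C)$; and a role assertion $r(a,b)$ by a nested constraint expression such as $\constr(N_a \subseteq \constr(|N_b \cap r| \geq 1))$.

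For the equi-expressiveness of $\ALCplus$ with $\ALCSCC$ augmented by the universal role $u$ or by role negation, one direction is essentially already observed in the paragraph preceding the proposition: $\exists u.C$ is captured by $\constr(|C| \geq 1)$, and the negated role $\overline{r}$ by the set complement $r^c$ available inside constraint expressions. The converse translation will exploit the fact that, as soon as the role set contains $u$ (or some $\overline{r}$), we have $\ars(d) = \Delta^\I$ for every $d$; consequently $\tubs{\I}{d}$ and $\subs{\I}{d}$ agree on all set terms, so every $\ALCplus$ concept is semantically equivalent to the $\ALCSCC$ concept obtained by systematically replacing each $\constr$ with $\suc$.

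The only step that genuinely requires an argument is the inexpressibility claim, though it admits a very small witness. Consider the finite interpretations $\I_1$ and $\I_2$ with domains $\{d\}$ and $\{d, e\}$ respectively, in which every concept name and every role name is interpreted as the empty set. Since $d$ has no role successors in either model, a straightforward induction on the structure of $\ALCSCC$ concepts shows that $\tubs{\I_1}{d}$ and $\tubs{\I_2}{d}$ assign the empty set to every set term, so every successor expression evaluates identically at $d$ in the two models; hence $d$ satisfies exactly the same $\ALCSCC$ concepts in $\I_1$ and in $\I_2$. However, once $u$ is added, $\suc(|u| \geq 2)$ holds at $d$ in $\I_2$ but not in $\I_1$, and similarly $\suc(|\overline{r}| \geq 1)$ holds under role negation only in $\I_2$, witnessing that neither feature can be defined using plain $\ALCSCC$.
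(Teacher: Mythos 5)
Your constructions follow essentially the same route as the paper: explicit polynomial translations for the positive claims, and an invariance argument for the inexpressibility claim (your $\I_2$ is exactly the disjoint union of $\I_1$ with one fresh isolated point, so your two-model witness is a concrete instance of the paper's appeal to closure of plain $\ALCSCC$ under disjoint unions). The differences are mostly stylistic: for role conjunction and role negation you use the native QFBAPA operations $\cap$ and $\cdot^c$ inside constraint expressions, whereas the paper introduces fresh role names $t$ resp.\ $\rc$ and axiomatizes them globally via $\constr(\top\subseteq\constr(t=r\cap s))$ and the pair $\constr(\top\subseteq\constr(r\cap\rc\subseteq\emptyset))$, $\constr(\top\subseteq\constr(|r|+|\rc|=|\universe|))$; likewise, for the universal role the paper forces a fresh role $u$ to be universal by $\constr(\top\subseteq\constr(u=\universe))$ rather than substituting $\universe$ for $u$. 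Your versions work and are arguably shorter, but the forward direction should be stated as a full recursive translation of successor expressions (replacing $u$ by $\universe$ and $\neg r$ by $r^c$ throughout), not only the $\exists u.C$ special case; the fact that legitimizes dropping the relativization of concept variables is the one you already invoke for the converse direction, namely that in the extended logics $\ars(d)=\Delta^\I$ for every $d$. Your explicit encodings of ABox assertions via nominal concepts $N_a$ are correct and slightly more detailed than what the paper records.

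One concrete slip: the role-negation witness $\suc(|\rc|\geq 1)$ does not separate your two models. In $\I_1$ the complement of the empty relation contains the pair $(d,d)$, so $d$ has one $\rc$-successor and the concept holds at $d$ in both $\I_1$ and $\I_2$. Replace it by $\suc(|\rc|\geq 2)$ (cardinalities $1$ versus $2$), or note, as the paper does, that role negation can define the universal role via $\suc(r\cup\neg r=u)$, so the universal-role witness $\suc(|u|\geq 2)$ already settles both cases. With this repair the inexpressibility argument is correct.
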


\begin{proof}
ECBoxes correspond to Boolean combinations of concepts of the form
$\constr(\con)$ where $\con$ contains only concept descriptions as set variables. Since the concepts occurring in $\con$ are
interpreted globally when viewed as set variables, such a constraint expression $\constr(\con)$ is satisfied either by no element of 
$\Delta^\I$ or by all of them. 
Consequently, their effect is to enforce the constraint on the whole interpretation domain if they are conjoined to a concept description.

Nominals are concepts that must be interpreted as singleton sets. Given a concept name $A$, we can enforce that it is interpreted as a
singleton set using the constraint expression $\constr(|A| = 1)$. Regarding role conjunction, the constraint
$\constr(\top \subseteq \constr(t = r\cap s))$ ensures that, for every individual $d$, its $t$ successors are exactly the individuals
that are both its $r$ and $s$ successors. 

The constraint $\constr(\top \subseteq \constr(u = \universe))$ ensures that $u$ is the universal role since it says that the $u$-successors
of every individual are all the elements of the interpretation domain. Conversely, if the universal role is available, then every
individual has all individuals as a role successors, and thus the difference between the semantics of $\ALCSCC$ and $\ALCplus$
goes away.

Regarding role negation, for given role names $r,\rc$, the constraint $\constr(\top \subseteq \constr(r\cap\rc\subseteq\emptyset))$
enforces that, for every individual, the sets of its $r$ and $\rc$ successors are disjoint. In addition, the constraint
$\constr(\top \subseteq \constr(|r|+|\rc| = |\universe|))$ says that elements of the domain that are not $r$ successors
of a given individual must be $\rc$ successors. Thus, we can express in $\ALCplus$ that the role $\rc$ is interpreted as the complement
of $r$, i.e. $\rc^\I = \Delta^\I\times\Delta^\I\setminus r^\I$ for every finite interpretation $\I$. Conversely, role negation allows
us to express the universal role in $\ALCSCC$: the $\ALCSCC$ constraint $\constr(r\cup \neg r = u)$ is satisfied by an individual
$d$ if the set of its $u$ successors consists of it $r$ and its $\neg r$ successors, and thus all elements of the interpretation domain.
Thus, conjoining such constraint at every place where $u$ is used ensures that $u$ really acts as the universal role.

Inexpressibility of role negation and of the universal role in $\ALCSCC$ can easily be shown using the fact that
models of $\ALCSCC$ TBoxes are closed under disjoint union of finite interpretations, whereas this is not the case
in the presence of role negation or the universal role.
\end{proof}

\section{Satisfiability of~\hmath$\ALCplus$ concept descriptions}
\label{NExpTime:complex:sect}

In the following we consider an $\ALCplus$ concept description $E$ and show how to test $E$ for satisfiability by reducing this problem
to the problem of testing satisfiability of QFBAPA formulae. Since the reduction is exponential and satisfiability
in QFBAPA is in \NP, this yields a \NExpTime upper bound for satisfiability of $\ALCplus$ concept descriptions.
This bound is optimal since consistency of extended cardinality constraints in \ALC, as introduced in \cite{BaEc17},
is already \NExpTime hard, and can be expressed as an $\ALCplus$ satisfiability problem by Proposition~\ref{express:prop}. 

Our \NExpTime algorithm combines ideas from the satisfiability algorithm for $\ALCSCC$ concept descriptions \cite{Baad17}
and the consistency procedure for $\ALC$ ECBoxes \cite{BaEc17}. In particular, we use the notion of a type, as introduced
in \cite{BaEc17}. This notion is also similar to the Venn regions employed in \cite{Baad17}.
Given a set of concept descriptions $\M$,
the \emph{type} of an individual in an interpretation consists of the elements of $\M$ to which the
individual belongs. Such a type~$t$ can also be seen as a concept description $C_t$, which is the
conjunction  of all the elements of~$t$.
We assume in the following that $E$ is an arbitrary, but fixed $\ALCplus$ concept and $\M_E$ consists of \emph{all subdescriptions} of the concept description $E$ 
as well as the \emph{negations of these subdescriptions}.
In Example~\ref{alcplus:ex}, the set $\M_E$ consists of 
$$
E, \neg E, \constr(|A| \geq 4), \neg\constr(|A| \geq 4), \constr(A\subseteq r), \neg \constr(A\subseteq r), \constr(|r| \leq 3), \neg\constr(|r| \leq 3), A,\neg A.
$$

\begin{definition}\label{type:def}
A subset $t$ of $\M_E$ is a \emph{type} for $E$ if it satisfies the following properties:
\begin{enumerate}
\item\label{type:item:one}
  for every concept description $\neg C\in \M_E$, either $C$ or $\neg C$ belongs to $t$; 
\item\label{type:item:two}
  for every concept description $C\sqcap D\in \M_E$, we have that $C\sqcap D\in t$ iff $C\in t$ and $D\in t$;
\item\label{type:item:three}
  for every concept description $C\sqcup D\in \M_E$, we have that $C\sqcup D\in t$ iff $C\in t$ or $D\in t$.
\end{enumerate}
We denote the set of all types for $E$ with $\text{types}(E)$.
Given an interpretation $\I$ and a domain element $d\in \Delta^\I$, the \emph{type of $d$ w.r.t.\ $E$} is the set
$
t^E_\I(d) := \{ C\in \M_E \mid d\in C^\I\}.
$
\end{definition}
 It is easy to show that the type of an individual really satisfies the conditions stated in the
definition of a type. In our example, the following are the only types containing $E$:
\begin{eqnarray}
t_1 &:=& \{E, \constr(|A| \geq 4), \constr(A\subseteq r), \constr(|r| \leq 3), A\}, \label{type:one}\\
t_2 &:=& \{E, \constr(|A| \geq 4), \constr(A\subseteq r), \constr(|r| \leq 3), \neg A \label{type:two} \}.
\end{eqnarray}
 
Due to Condition~(1) in the definition of types, concept descriptions $C_t, C_{t'}$ induced by different types $t\neq t'$ are disjoint,
and all concept descriptions in $\M_E$ can be obtained as the union of the concept descriptions induced by
the types containing them, i.e., we have 
$$
C^\I = \bigcup_{t\,\textit{type\,with}\,C\in t} C_t^\I
$$ 
for all $C\in \M_E$ and finite interpretations $\I$. 
Since the concepts induced by types are disjoint, the following holds for all finite interpretations $\I$:
$$
|C^\I| = \sum_{t\,\textit{type\,with}\,C\in t} |C_t^\I|\ \ \ \
\mbox{and}\ \ \ \
|C_t^\I| = |\bigcap_{C\in t} C^\I|,
$$
where the latter identity is an immediate consequence of the definition of $C_t$ as the
conjunction  of all the elements of $t$. In our example, we have $|E^\I| = |C_{t_1}^\I| + |C_{t_2}^\I|$.

Given a type $t$, the constraints occurring in the top-level Boolean structure of $t$
induce a QFBAPA formula $\psi_t$, in which the concepts $C$ and roles $r$ occurring in 
these constraints are replaced by set variables $X_C$ and $X^t_r$, respectively.
In our example, $t_1$ and $t_2$ contain the same constraints, and the associated QFBAPA formulae are clearly unsatisfiable:
$$
\psi_{t_i} = |X_A|\geq 4 \wedge X_A \subseteq X_r^{t_i} \wedge |X_r^{t_i}|\leq 3\ \ \mbox{for}\ i=1,2.
$$
Note that set variables corresponding to concepts are independent of the type $t$, i.e., they are shared by all types, whereas the
set variables corresponding to roles are different for different types. This corresponds to the fact that roles are evaluated locally, 
but concepts are evaluated globally in the semantics of $\ALCplus$.
In order to ensure that the Boolean structure of concepts is respected by the set variables, we introduce the formula 
\begin{eqnarray*}\label{beta:t}
\beta = 
\bigwedge_{{C\sqcap D}\in \M_E} X_{C\sqcap D}=X_C\cap X_D\wedge 
\bigwedge_{{C\sqcup D}\in \M_E} X_{C\sqcup D}=X_C\cup X_D\wedge
\bigwedge_{{\neg C}\in \M_E} X_{\neg C}={(X_C)}^c.
\end{eqnarray*}
Overall, we translate the $\ALCplus$ concept $E$ into the QFBAPA formula
$$
\delta_E := (|X_E| \geq 1) \wedge \beta \wedge \bigwedge_{t\in\text{types}(E)} (|\bigcap_{C\in t} X_C| = 0) \vee \psi_t.
$$
Intuitively, to satisfy $E$, we need to have at least one element in it, which explains the first conjunct. 
The third conjunct together with $\beta$ ensures that, for any type that is realized (i.e., has elements), 
the constraints of this type are satisfied.
 
In our example, $\beta$ ensures that $X_E = \bigcap_{C\in t_1} X_C \cup \bigcap_{C\in t_2} X_C$ is satisfied.
Together with $|X_E| \geq 1$ this implies that there is an $i\in\{1,2\}$ such that $|\bigcap_{C\in t_i} X_C| > 0$
must hold. But then we need to satisfy $\psi_{t_i}$, which is impossible since this QFBAPA formula is unsatisfiable.
Thus, we have seen that $\delta_E$ is not solvable, which corresponds to the fact $E$ that is unsatisfiable.

The following two lemmas state that  solvability of $\delta_E$ and satisfiability of $E$ are indeed equivalent.

\begin{lemma}\label{completeness:alcplus}
If the $\ALCplus$ concept description $E$ is satisfiable, then the QFBAPA formula $\delta_E$ is also satisfiable.
\end{lemma}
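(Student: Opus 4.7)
The plan is to start from a witnessing finite interpretation and read off a QFBAPA substitution in the obvious way. Assume $\I$ is a finite interpretation and $d_0\in E^\I$. I would define a substitution $\sigma$ by setting $\sigma(\universe) := \Delta^\I$, $\sigma(X_C) := C^\I$ for every concept $C$ appearing in $\M_E$, and then, for the role set-variables, making a per-type choice. For each type $t\in\text{types}(E)$ that is \emph{realized} in $\I$, that is, such that $\bigcap_{C\in t} C^\I \neq \emptyset$, pick an arbitrary representative $d_t$ in this intersection and set $\sigma(X_r^t) := r^\I(d_t)$ for every role name $r$ occurring in a constraint of $t$. For unrealized types the values of the role variables are immaterial, so I would set them to $\emptyset$.

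Next I would verify the three conjuncts of $\delta_E$ in turn. The conjunct $|X_E|\geq 1$ holds because $d_0\in E^\I = \sigma(X_E)$. The conjunct $\beta$ holds by the very definition of the interpretation of Boolean concept constructors: $\sigma(X_{C\sqcap D}) = (C\sqcap D)^\I = C^\I\cap D^\I = \sigma(X_C)\cap\sigma(X_D)$, and analogously for $\sqcup$ and $\neg$. For the final big conjunction, fix $t\in\text{types}(E)$. If $t$ is not realized, then $\sigma\big(\bigcap_{C\in t} X_C\big) = \bigcap_{C\in t} C^\I = \emptyset$ and the left disjunct is true. Otherwise, by the choice of $d_t$ and the definition of a type as the set of concepts from $\M_E$ containing $d_t$, for every constraint expression $\constr(\con)\in\M_E$ we have that $d_t$ satisfies $\con$ iff $\constr(\con)\in t$; in particular $d_t$ satisfies exactly those top-level constraints that appear positively in $t$ and the negations of those that appear negatively. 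By the semantics of $\constr$, this means that the substitution $\subs{\I}{d_t}$ satisfies the constraints induced by $t$. Since $\subs{\I}{d_t}$ agrees with $\sigma$ on the interpretation of $\universe$ and of every set variable appearing in $\psi_t$ (concept variables are interpreted globally by the same $C^\I$, and the role variables $X_r^t$ were set precisely to $r^\I(d_t)$), we conclude that $\sigma$ satisfies $\psi_t$.

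The main technical point to be careful about is the translation between the per-individual substitutions $\subs{\I}{d}$ used in the semantics of $\ALCplus$ and the single global substitution $\sigma$ used to solve $\delta_E$. The decoupling of role variables across types, that is, using distinct set variables $X_r^t$ for each type $t$, is exactly what allows the two notions to be reconciled: we can localize the role interpretation to the representative $d_t$ of each realized type without clashing with the choices made for other types, while keeping the concept variables global, which reflects the global semantics of concepts in $\ALCplus$. Once this correspondence is set up, everything else is a direct verification from the definitions.
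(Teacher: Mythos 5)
Your proposal is correct and follows essentially the same route as the paper's proof: define $\sigma(X_C):=C^\I$ globally, verify $|X_E|\geq 1$ and $\beta$ directly, and for each realized type pick a witness $d_t$ and set $\sigma(X_r^t):=r^\I(d_t)$ so that the agreement of $\sigma$ with $\subs{\I}{d_t}$ yields satisfaction of $\psi_t$, with unrealized types handled by the left disjunct. The only difference is that you spell out the identification of $t$ with the type of $d_t$ and the variable-by-variable agreement argument, which the paper leaves implicit.
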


\begin{proof}
Assume that the finite interpretation $\I$ satisfies $E$, i.e., there is a $d_0\in \Delta^\I$ such that $d_0\in E^\I$.
We define $\sigma(X_C) := C^\I$ for all concepts $C\in \M_E$. Then
we have $d_0\in \sigma(X_E)$, and thus $\sigma$ satisfies the cardinality constraint $|X_E|\geq 1$.
In addition, $\sigma$ clearly satisfies $\beta$. For example,
$\sigma(X_{C\sqcap D}) = (C\sqcap D)^\I = C^\I\cap D^\I = \sigma(X_C)\cap \sigma(X_D) = \sigma(X_C\cap X_D)$.
For every type $t$ we have $C_t^\I = \bigcap_{C\in t} C^\I = \bigcap_{C\in t} \sigma(X_C) = \sigma(\bigcap_{C\in t} X_C)$,
and thus $\sigma(|\bigcap_{C\in t} X_C|) = 0$ iff $C_t^\I = \emptyset$.

Let $t$ by a type such that $\sigma(|\bigcap_{C\in t} X_C|) \neq 0$. Then there is an individual $d\in \Delta^\I$
such that $d\in C_t^\I$. The semantics of $\ALCplus$ then implies that we can extend $\sigma$ to a solution of $\psi_t$ by interpreting the
set variables with superscript $t$ using the role successors of $d$:
$$
\sigma(X_r^t) := \{e\mid (d,e)\in r^\I\}.
$$
If $t$ is a type such that $\sigma(|\bigcap_{C\in t} X_C|) = 0$, then it is not necessary for $\sigma$ to satisfy $\psi_t$.
We can thus extend $\sigma$ to the set variables with superscript $t$ in an arbitrary way, e.g.\ by interpreting all of
them as the empty set. Overall, this show that we can use an interpretation satisfying $E$ to define a solution $\sigma$ of $\delta_E$.
\end{proof}

Next, we show that the converse of Lemma~\ref{completeness:alcplus} holds as well.

\begin{lemma}\label{soundness:alcplus}
If the QFBAPA formula $\delta_E$ is  satisfiable, then
the $\ALCplus$ concept description $E$ is also satisfiable.
\end{lemma}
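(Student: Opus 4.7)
The plan is to invert the construction from Lemma~\ref{completeness:alcplus}: starting from a solution $\sigma$ of $\delta_E$, I will build a finite interpretation $\I$ with $E^\I \neq \emptyset$. The guiding idea is that the set variables $X_C$ already tell us how every concept in $\M_E$ should be interpreted globally, while the type-indexed role variables $X_r^t$ tell us which role successors a domain element of type $t$ should receive.

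Concretely, I would set $\Delta^\I := \sigma(\universe)$ and $A^\I := \sigma(X_A)$ for all $A\in N_C$. Using $\beta$ and the closure of $\M_E$ under negation and subdescriptions, every $d\in\Delta^\I$ determines a unique type
$$
t(d) := \{C\in \M_E \mid d \in \sigma(X_C)\},
$$
which satisfies the three conditions of Definition~\ref{type:def}. Because this $d$ witnesses $\sigma(|\bigcap_{C\in t(d)}X_C|) \geq 1$, the conjunct of $\delta_E$ associated with $t(d)$ forces $\sigma$ to satisfy $\psi_{t(d)}$. Finally, for each role name $r$ I let
$$
r^\I := \bigcup_{d\in\Delta^\I}\{d\}\times \sigma(X_r^{t(d)}),
$$
so that $r^\I(d) = \sigma(X_r^{t(d)})$ for every $d\in\Delta^\I$.

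The central claim, proved by structural induction on $C\in\M_E$, is that $C^\I = \sigma(X_C)$. For concept names this holds by definition, and the Boolean cases are immediate from $\beta$. For a constraint expression $\constr(\con)\in\M_E$ and $d\in\Delta^\I$, it suffices to show $d\in\constr(\con)^\I$ iff $\constr(\con)\in t(d)$. By the induction hypothesis, for every concept $C$ appearing in $\con$, $\subs{\I}{d}(C) = C^\I = \sigma(X_C)$; by construction, for every role $r$ in $\con$, $\subs{\I}{d}(r) = r^\I(d) = \sigma(X_r^{t(d)})$. Hence the value $\subs{\I}{d}$ assigns to the QFBAPA assertion $\con$ coincides with the value $\sigma$ assigns to the translation of $\con$ that appears (possibly negated) as a conjunct of $\psi_{t(d)}$. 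Since $\sigma$ satisfies $\psi_{t(d)}$, this conjunct is satisfied precisely when $\constr(\con)\in t(d)$ rather than $\neg\constr(\con)\in t(d)$, giving the desired equivalence. Applying the claim to $C := E$ and invoking the first conjunct $|X_E|\geq 1$ of $\delta_E$ then yields $|E^\I| \geq 1$, proving the lemma.

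The main obstacle lies in this last inductive case: one must carefully reconcile two different encodings of the same data. The set variables $X_C$ are shared across types and correspond to the \emph{global} interpretation of concepts, whereas the type-indexed variables $X_r^{t(d)}$ capture the \emph{local} role successors of each $d$, matching exactly the global/local asymmetry in the semantics of $\ALCplus$. The construction is designed so that this bookkeeping lines up, and the subtlety is confirming that $\sigma$ satisfying $\psi_{t(d)}$ really does faithfully mirror the behaviour of $\subs{\I}{d}$ on every top-level constraint recorded in $t(d)$.
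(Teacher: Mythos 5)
Your proposal is correct and takes essentially the same route as the paper's proof: assign to each element of $\sigma(\universe)$ its type, use $\beta$ for the Boolean cases, use the realized-type disjunct of $\delta_E$ to force $\sigma \models \psi_{t(d)}$, interpret roles via $\sigma(X_r^{t(d)})$, and establish the key claim ($C^\I = \sigma(X_C)$, equivalently membership iff $C\in t$) by structural induction. The only, harmless, difference is that you set $\Delta^\I := \sigma(\universe)$ directly, whereas the paper builds an isomorphic domain of type-copies $(t,j)$ together with a bijection $\pi\colon \sigma(\universe)\to\Delta^\I$; your identification also makes the agreement on the universal set (relevant for complements and $|\universe|$ in constraints) immediate.
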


\begin{proof}
Assume that there is a solution $\sigma$ of $\delta_E$.
We claim that, for every element $e\in \sigma(\universe)$,
there is a unique type $t_e$ such that $e\in \bigcap_{C\in t_e} \sigma(X_C)$.
In fact, we can define $t_e$ as
$$
t_e := \{C\in \M_E \mid e\in \sigma(X_C)\}.
$$
Since $\sigma$ satisfies $\beta$, the set $t_e$ is indeed a type. For example, assume that $C\sqcup D\in t_e$.
Then $e\in \sigma(X_{C\sqcup D}) = \sigma(X_C)\cup\sigma(X_D)$ iff $e\in \sigma(X_C)$ or $e\in \sigma(X_D)$ iff
$C\in t_e$ or $D\in t_e$. Satisfaction of the other conditions in the definition of a type
can be shown similarly. Regarding uniqueness, assume that $t$ is a type different from $t_e$. Then there
is an element $C\in \M_E$ such that (modulo removal of double negation) $C\in t_e$ and $\neg C\in t$.
But then $e\in \sigma(X_C)$ implies $e\not\in \sigma((X_C)^c) = \sigma(X_{\neg C})$, and thus
$e\not\in \bigcap_{D\in t} \sigma(X_D)$.

Let
$$
T_\sigma := \{t \mid t\ \mbox{type with}\ \sigma(|\bigcap_{C\in t} X_C|) \neq 0\}
$$
be the set of all types that are realized by $\sigma$. Note that, by what we have shown above, we have
$T_\sigma = \{t_e \mid e\in \sigma(\universe)\}$.

We now define a finite interpretation $\I$ and show that
it satisfies $E$. The interpretation domain consists of copies of the realized types, where the number of copies is
determined by $\sigma$:
$$
\Delta^\I := \{(t,j) \mid t\in T_\sigma\ \mbox{and}\ 1\leq j\leq \sigma(|\bigcap_{C\in t} X_C|)\}.
$$
Since for every element $e\in \sigma(\universe)$ there is a unique type $t_e$ such that $e\in \bigcap_{C\in t_e} \sigma(X_C)$,
there is a bijection $\pi$ from $\sigma(\universe)$ to $\Delta^\I$ such that $\pi(e) = (t,j)$ implies that $t = t_e$.

For concept names $A$ we define
$$
A^\I := \{(t,j)\in \Delta^\I \mid A\in t\}
$$
and for role names $r$
$$
r^\I := \{((t,j),\pi(e)) \mid (t,j)\in \Delta^\I \wedge e \in \sigma(X_r^t)\}.
$$
Since $\sigma$ solves the constraint $X_E\geq 1$, there is a $d_0\in \sigma(X_E)$. Let $t_0$ be the unique type
such that $d_0 \in \bigcap_{C\in t_0} \sigma(X_C)$. Then we have $\sigma(|\bigcap_{C\in t_0} X_C)|) \neq 0$, and thus
$(t_0,1)\in \Delta^\I$. To show that $\I$ satisfies $E$, it is sufficient to show that $(t_0,1)\in E^\I$.

For this, we show the following more general claim:
for all concept descriptions $C\in \M_E$ and all $(t,j)\in \Delta^\I$ we have
\begin{equation}\label{completeness:eq:two}
(t,j) \in C^\I\ \mbox{iff}\ \ C\in t.
\end{equation}
We show $(\ref{completeness:eq:two})$ by \emph{induction on the structure of $C$}:
\begin{itemize}
\item
  Let $C = A$ for $A\in N_C$. Then $(\ref{completeness:eq:two})$ is an immediate consequence of the definition of $A^\I$ for concept names $A$.
\item
  Let $C = \neg D$. Then induction yields $(t,j) \in D^\I$ iff $D\in t$.
  By contraposition, this is the same as $(t,j) \not\in D^\I$ iff $D\not\in t$.
  By Condition~\ref{type:item:one} in the definition of types and the semantics of negation, this
  is in turn equivalent to $(t,j) \in (\neg D)^\I$ iff $\neg D\in t$.
\item
  Let $C = D_1 \sqcap D_2$.
  Then induction yields $(t,j) \in D_1^\I$ iff $D_1\in t$ and $(t,j) \in D_2^\I$ iff $D_2\in t$.
  From this, we obtain $(t,j) \in (D_1\sqcap D_2)^\I$ iff $D_1\sqcap D_2\in t$ using
  Condition~\ref{type:item:two} in the definition of types and the semantics of conjunction.
\item
  The case where $C = D_1 \sqcup D_2$ can be handled similarly, using Condition~\ref{type:item:three} in the definition of types and the semantics of disjunction.
\item
$C = \constr(\con)$ be a constraint expression. First, assume that $C \in t$. Then the translation $\con'$ of $\con$ using set variables $X_D$ and $X_r^t$
is a conjunct in $\psi_{t}$. In addition, since $(t,j)\in \Delta^\I$, we have $\sigma(|\bigcap_{D\in t} X_D|) \neq 0$.
Consequently, $\sigma$ satisfies this translation $\con'$.
Thus, to show that $(t,j)\in C^\I$, it is sufficient to show that the following holds: 
\begin{enumerate}
\item
  $\pi(\sigma(X_r^t)) = r^\I(t,j)$ and
\item
  $\pi(\sigma(X_D)) = D^\I$ for all concepts $D$ occurring in the constraint $c$.
\end{enumerate}
The first statement is an immediate consequence of the definition of the interpretation of the roles in $\I$.
 
To show the second statement, first assume that $e\in \sigma(X_D)$. Then $\pi(e) = (t_e,j')$ where $t_e$ is the unique type such that
$e\in \bigcap_{F\in t_e} \sigma(X_F)$. Thus, $e\in \sigma(X_D)$ implies that $D\in t_e$. By induction, we obtain $\pi(e) = (t_e,j')\in D^\I$.
Second, assume that $\pi(e) = (t_e,j')\in D^\I$. Then induction yields $D\in t_e$, and thus $e\in \sigma(X_D)$.

Conversely, assume that $C \not\in t$. Then $\neg\Succ(\con)\in t$, and thus the translation $\neg \con'$ of $\neg \con$ using set variables $X_D$ and $X_r^t$
is a conjunct in $\psi_{t}$. We can now proceed as in the first case, but with $\neg \con$ and $\neg \con'$ in place of $\con$ and $\con'$.
\end{itemize}
This completes the proof of $(\ref{completeness:eq:two})$ and thus the proof of the lemma.
\end{proof}

We have shown that the question of whether an $\ALCplus$ concept description $E$ is satisfiable can be reduced to checking whether
the corresponding QFBAPA formula $\delta_E$ is satisfiable. Since the size of $\delta_E$ is exponential in  the size of $E$, this yields
the following complexity result.

\begin{theorem}\label{NExp:upper:thm}
Satisfiability of $\ALCplus$ concept descriptions 
is \NExpTime-complete independently of whether the numbers occurring in these descriptions are encoded in unary or binary.
\end{theorem}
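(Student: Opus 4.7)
The plan is to establish the \NExpTime{} upper bound and a matching lower bound separately, and then to argue that both are insensitive to whether integer constants in $E$ are encoded in unary or binary.

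For the \textbf{upper bound}, I would combine Lemmas~\ref{completeness:alcplus} and~\ref{soundness:alcplus}, which together assert that $E$ is satisfiable iff the QFBAPA formula $\delta_E$ is satisfiable, with a careful size analysis of $\delta_E$. Since $\M_E$ consists of subdescriptions of $E$ and their negations, $|\M_E|$ is linear in $|E|$, so $|\mathrm{types}(E)| \leq 2^{|\M_E|}$ is at most exponential in $|E|$. For each type $t$, both the translated constraint $\psi_t$ and the cardinality constraint $|\bigcap_{C\in t} X_C| = 0$ have size polynomial in $|E|$, even under binary coding, because the integer constants of $E$ are merely copied verbatim into $\delta_E$; the auxiliary formula $\beta$ is likewise polynomial. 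Thus $|\delta_E|$ is at most exponential in $|E|$ and $\delta_E$ can be constructed in exponential time. Since QFBAPA satisfiability is in \NP{} with binary coding of constants~\cite{KuRi07}, guessing a solution of $\delta_E$ runs in time polynomial in $|\delta_E|$, i.e., exponential in $|E|$, yielding the \NExpTime{} upper bound.

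For the \textbf{lower bound}, I would appeal to the \NExpTime{} lower bound for consistency of $\ALC$ ECBoxes established in~\cite{BaEc17} (which in turn rests on Tobies' result~\cite{Tobi00}). By Proposition~\ref{express:prop}, any $\ALC$ ECBox can be polynomially translated into an $\ALCplus$ concept description whose satisfiability is equivalent to the consistency of the ECBox, transferring \NExpTime{} hardness to $\ALCplus$ satisfiability.

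Regarding encoding independence: the upper bound automatically transfers from binary to unary coding, since unary representations are at least as long as binary ones, so the exponential-time procedure is unaffected. For the lower bound one must inspect the reduction underlying the \NExpTime{} hardness of $\ALC$ ECBoxes and verify that the integer constants it produces are of polynomial value (so that the input size under unary coding also remains polynomial in the reduction parameter); this is known to be the case. The main expected obstacle is thus just the bookkeeping for the size analysis of $\delta_E$ under binary coding and the verification of the small-constants property of the cited hardness reduction, both of which are routine compared to the conceptual work already done in Lemmas~\ref{completeness:alcplus} and~\ref{soundness:alcplus}.
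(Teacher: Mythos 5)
Your proposal is correct and follows essentially the same route as the paper: it reduces satisfiability of $E$ to satisfiability of $\delta_E$ via Lemmas~\ref{completeness:alcplus} and~\ref{soundness:alcplus}, bounds $|\delta_E|$ by an exponential through the same count of types and polynomial-size conjuncts, invokes the \NP{} bound for QFBAPA under binary coding, and inherits the lower bound from $\ALC$ ECBox consistency via Proposition~\ref{express:prop}, including the observation (from \cite{BaEc17}) that the hardness already holds with unary coding because only small numbers are needed in the reduction.
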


\begin{proof}
Since satisfiability of QFBAPA formulae can be decided within \NP even for binary coding of numbers \cite{KuRi07},
it is sufficient to show that the size of the QFBAPA formula $\delta_E$ is
at most exponential in the size of $E$. This is an easy consequence of the fact that there are at most exponentially many types $t$
since the cardinality of $\M_E$ is linear in the size of $E$.
This implies that the conjunction over all types in $\delta_E$ has only exponentially many conjuncts. The conjunct for a type $t$ is
of the form $(|\bigcap_{C\in t} X_C| = 0) \vee \psi_t$. Since every type contains only linearly many concepts, and these concepts have linear size,
both $(|\bigcap_{C\in t} X_C| = 0)$ and $\psi_t$ is of polynomial size.
Obviously, $(|X_E| \geq 1)$ has linear size,
and the formula $\beta$ has polynomial size since $\M_E$ contains linearly many elements of linear size.

The \NExpTime lower bound is inherited from consistency of $\ALC$ ECBoxes \cite{BaEc17} due to Proposition~\ref{express:prop}.
As argued in \cite{BaEc17}, this lower  bound already holds if numbers are encoded in  unary since one can use small
ECBoxes to generate large numbers from small ones.
\end{proof}

Thanks to Proposition~\ref{express:prop}, the \NExpTime upper bound carries over to satisfiability of $\ALCplus$ knowledge bases, which may feature an ABox, a TBox and an ECBox.

\section{Restricted Cardinality Constraints and ABoxes\\ in~\hmath$\ALCSCC$}
\label{restr:card:sect}

In Definition~\ref{ALCSCC:def}, we have introduced the DL $\ALCSCC$ and ECBoxes. As mentioned above, \NExpTime hardness already
holds for consistency of $\ALCSCC$ ECBoxes, and Theorem~\ref{NExp:upper:thm} yields the matching upper bound since ECBoxes
can be expressed by $\ALCplus$ concepts by Proposition~\ref{express:prop}. The same proposition also states that ABoxes can be
expressed by $\ALCplus$ concepts, which yields a \NExpTime upper bound also for consistency of $\ALCSCC$ ABoxes w.r.t.\ $\ALCSCC$ ECBoxes.

For the sub-logic $\ALC$ of $\ALCSCC$, a restricted notion of cardinality boxes, called RCBoxes, was introduced in \cite{BaEc17},
and it was shown that this restriction lowers the complexity of the consistency problem from \NExpTime to \ExpTime.
In \cite{Baad19,Baad19b} it was shown that the same is true for $\ALCSCC$. 
Here we prove that this result can be extended to consistency of $\ALCSCC$ ABoxes w.r.t.\ $\ALCSCC$ RCBoxes.
In the presence of ECBoxes, this extension is irrelevant since ECBoxes can express nominals, and thus also ABoxes. However, this is
not the case for RCBoxes. Below, we actually consider an extension of RCBoxes, which were called ERCBoxes in \cite{Rudo19}.

\begin{definition}[RCBoxes]\label{RCBox:def}
  \emph{Semi-restricted $\ALCSCC$ cardinality constraints} are of the form
  \begin{align}\label{rc:constr}
    N_1 |C_1|+ \dots+N_k |C_k| + M \le N_{k+1}|C_{k+1}|+\dots+N_{k+\ell}|C_{k+\ell}|,
  \end{align}
  where $C_i$ are $\ALCSCC$ concept descriptions, $N_i$ are integer constants for $1\le i\le k+\ell$, and 
  $M$ is a non-negative integer constant.
  An \emph{extended restricted $\ALCSCC$ cardinality box (ERCBox)} is a positive Boolean combination of semi-restricted $\ALCSCC$ cardinality constraints.

An interpretation $\I$ is a model of the semi-restricted $\ALCSCC$ cardinality constraint \eqref{rc:constr} if
  \begin{align*}
    N_1 |C_1^\I|+ \dots+N_k |C_k^\I| + M \le N_{k+1}|C_{k+1}^\I|+\dots+N_{k+\ell}|C_{k+\ell}^\I|.
  \end{align*}
The notion of a model is extended to ERCBoxes using the usual interpretation of conjunction and disjunction 
in propositional logic.
\end{definition}
Note that $\ALCSCC$ ECBoxes can express both ERCBoxes and ABoxes. The restricted cardinality boxes (RCBoxes)
introduced in \cite{BaEc17,Baad19,Baad19b}
differ from ERCBoxes in that the number $M$ in constraints of the form \eqref{rc:constr} must be zero, and that only
conjunction of such constraints is allowed.
Since \ExpTime-hardness already holds for consistency of RCBoxes in $\ALCSCC$ without an ABox \cite{BaEc17,Baad19,Baad19b}, 
we obtain the following complexity lower bound.
Actually, the hardness proof does not require large number, and thus \ExpTime-hardness even holds
for unary coding of numbers.

\begin{proposition}\label{prop:exptime}
  The consistency of $\ALCSCC$ ERCBoxes w.r.t.\ $\ALCSCC$ ABoxes 
  is \ExpTime-hard, independently of whether numbers are encoded in unary or binary.
\end{proposition}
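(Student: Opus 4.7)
The proposition is essentially an inheritance result, so my plan reduces to checking two syntactic/semantic inclusions, followed by a remark about number encoding. First, I would verify that every $\ALCSCC$ RCBox, as considered in \cite{BaEc17,Baad19,Baad19b}, is syntactically an $\ALCSCC$ ERCBox in the sense of Definition~\ref{RCBox:def}. Indeed, each RCBox constraint is a constraint of the form \eqref{rc:constr} in which the additive constant $M$ happens to be $0$, hence a legal semi-restricted $\ALCSCC$ cardinality constraint, and a conjunction of such constraints is a (positive) Boolean combination, hence an ERCBox.

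Second, I would observe that, viewed this way, consistency of an RCBox $\R$ in isolation is the same problem as the consistency of $\R$ (as an ERCBox) together with the empty $\ALCSCC$ ABox, because every finite interpretation trivially satisfies the empty ABox. This gives a trivial log-space reduction from the problem shown \ExpTime-hard in \cite{BaEc17,Baad19,Baad19b} to the problem at hand, yielding the desired lower bound.

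For the coding-insensitive part of the claim, I would appeal to the analogous statement made for RCBoxes in the cited papers: the hardness reduction there encodes the computation of a polynomially space-bounded alternating Turing machine, and the integer constants appearing in the resulting RCBox are bounded by a polynomial in the input size. Hence their unary representation is only polynomially larger than their binary representation, so the reduction goes through under either encoding convention, and the same holds a fortiori for the ERCBox-with-ABox setting.

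The only step that requires any scrutiny, and the one I would verify most carefully, is the final remark about the magnitudes of the constants in the existing RCBox hardness proof; but this is explicitly asserted in \cite{BaEc17,Baad19,Baad19b} and is a routine check of the cited construction. Beyond that, the proposition is a direct corollary of prior work and should require at most a couple of sentences in the final write-up.
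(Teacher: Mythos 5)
Your proposal is correct and matches the paper's own argument: the paper likewise obtains the lower bound by observing that RCBoxes are ERCBoxes (the case $M=0$ with conjunction only), that the known \ExpTime-hardness for $\ALCSCC$ RCBox consistency without an ABox transfers directly (a trivial/empty ABox suffices), and that the cited hardness construction uses only small numbers, so the bound is insensitive to unary versus binary coding. No gaps.
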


Following the approach in \cite{Baad19,Baad19b} for consistency of $\ALCSCC$ RCBoxes, 
we show the \ExpTime upper bound for numbers encoded in binary using type elimination, 
where the notion of augmented type from \cite{Baad17} is used, and 
a second step for removing types is added to take care of the ERCBox,
similarly to what is done in \cite{BaEc17}. 
In addition, the ABox individuals are taken into account by making them elements of exactly one augmented type.

The \ExpTime upper bound for our procedure on the one hand depends on the following lemma, which applies in our setting due to the 
special form of semi-restricted cardinality constraints. It is an extension of Lemma~10 in \cite{BaEc17}.

\begin{lemma}\label{lem:ineq-system-properties}
  Let $\phi$ be a system of linear inequalities consisting of $A\cdot \boldsymbol{v}\ge \boldsymbol{b}$ and $\boldsymbol{v}\ge \boldsymbol{0}$,
  where $A, B$ are matrices of integer coefficients, $\boldsymbol{b}$ is a vector of non-negative integer parameters, and $\boldsymbol{v}$ is the variable vector. 
  \begin{enumerate}
    \item The solutions of $\phi$ are closed under addition.  
    \item If $\{v_1,\ldots,v_k\}$ is a set of variables such that, for all $v_i\ (1\leq i\leq k)$, 
          $\phi$ has a solution $\boldsymbol{c}_i$ in which the $i$th component $c_i^{(i)}$ is not $0$, 
          then there is a non-negative integer solution $\boldsymbol{c}$ of $\phi$ such that, 
          for all $i, 1\leq i\leq k$, the $i$th component $c^{(i)}$ of $\boldsymbol{c}$ satisfies $c^{(i)}\geq 1$. 
    \item Deciding whether $\phi$ has a non-negative integer solution can be done in polynomial time.
  \end{enumerate}
\end{lemma}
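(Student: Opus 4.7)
The plan is to address the three parts in order, with (1) and (2) being short algebraic observations and (3) being the substantive claim.

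For (1), if $\boldsymbol{c}_1,\boldsymbol{c}_2$ are non-negative integer solutions of $\phi$, then a direct computation gives $A(\boldsymbol{c}_1+\boldsymbol{c}_2)=A\boldsymbol{c}_1+A\boldsymbol{c}_2\ge \boldsymbol{b}+\boldsymbol{b}\ge \boldsymbol{b}$, where the final inequality crucially uses the hypothesis $\boldsymbol{b}\ge\boldsymbol{0}$; non-negativity of the sum is immediate. This is exactly where the sign restriction on $\boldsymbol{b}$ pays off, and it would fail for arbitrary $\boldsymbol{b}$. For (2), I would simply sum the witnesses: setting $\boldsymbol{c}:=\boldsymbol{c}_1+\cdots+\boldsymbol{c}_k$, iterated application of (1) shows that $\boldsymbol{c}$ is again a non-negative integer solution of $\phi$. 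For each $i$, the $i$th component $c^{(i)}$ of $\boldsymbol{c}$ is at least $c_i^{(i)}$, and since $c_i^{(i)}$ is by assumption a nonzero non-negative integer, we have $c_i^{(i)}\ge 1$, giving the required lower bound on every coordinate of $\boldsymbol{c}$.

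For (3), the heart of the argument, the plan is to reduce non-negative \emph{integer} feasibility of $\phi$ to non-negative \emph{rational} feasibility of the same system, the latter being polynomial-time decidable by standard LP techniques such as Khachiyan's ellipsoid method. One direction is trivial: every non-negative integer solution is also a non-negative rational solution. For the non-trivial direction, given a non-negative rational solution $\boldsymbol{v}^\star$, I would clear denominators to write $\boldsymbol{v}^\star=\boldsymbol{p}/q$ with $\boldsymbol{p}$ a non-negative integer vector and $q$ a positive integer, and then compute $A\boldsymbol{p}=q\,A\boldsymbol{v}^\star\ge q\boldsymbol{b}\ge \boldsymbol{b}$, once again invoking $q\ge 1$ together with $\boldsymbol{b}\ge\boldsymbol{0}$. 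Thus $\boldsymbol{p}$ is already a non-negative integer solution of $\phi$, so integer and rational feasibility coincide for systems of this restricted shape.

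The only genuine obstacle is the scaling step in (3): everything there hinges on $\boldsymbol{b}\ge\boldsymbol{0}$, and without this hypothesis one would be facing the \NP-hardness of integer linear programming in general. A minor bookkeeping observation is that the binary size of the LP instance handed to Khachiyan's algorithm is polynomial in the encoding of $A$ and $\boldsymbol{b}$, so the overall procedure runs in polynomial time, as claimed.
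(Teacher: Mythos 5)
Your proposal is correct and follows essentially the same route as the paper: parts (1) and (2) are handled identically (sum of solutions, using $\boldsymbol{b}\ge\boldsymbol{0}$), and part (3) reduces integer to rational feasibility via polynomial-time LP and then scales a rational solution by a positive integer to clear denominators — the paper multiplies by the lcm of the denominators and invokes part (1), while you verify the scaled vector directly, which is an inessential difference.
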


\begin{proof}
(1) Let $\boldsymbol{c}, \boldsymbol{d}$ be two solutions of $\phi$. Since all components of these vectors are non-negative,
this is clearly also the case for their sums. In addition, we have
$$
A\cdot (\boldsymbol{c} + \boldsymbol{d}) =
A\cdot \boldsymbol{c} + A\cdot \boldsymbol{d} \geq \boldsymbol{b} + \boldsymbol{b} \geq \boldsymbol{b},
$$
where the first inequality holds since $\boldsymbol{c}, \boldsymbol{d}$ are solutions of $\phi$,
and the last inequality holds since the components of $\boldsymbol{b}$ are non-negative. 

(2) Given solutions $\boldsymbol{c}_i$ as described in the second part of the lemma, the solution
$\boldsymbol{c}$ satisfying the stated properties can be obtained as their sum. 

(3) It is well-known that solvability in the rational numbers of a system of inequalities of the form stated in the lemma 
can be decided in polynomial time \cite{GrLS88}. In addition, if $\phi$ has a rational solution $\boldsymbol{c}$, then
it also has an integer solution. In fact, let $D$ be the least common multiple (lcm) of the denominators of the components of $\boldsymbol{c}$.
Then $D\cdot\boldsymbol{c}$ is an integer vector that is a solution of $\phi$ due to closure under addition of solutions,
as stated in the first part of the lemma.
\end{proof}

Another important ingredient of our \ExpTime procedure are augmented types, which have been introduced in \cite{Baad17} to show that
satisfiability in $\ALCSCC$ w.r.t.\ concept inclusions is in \ExpTime.
We use the notion
of a type as introduced in Definition~\ref{type:def}
(see also Definition~3 of \cite{Baad19}), but extended such that it takes the ABox $\A$ and the ERCBox $\R$ into account,
i.e., the set $\M(\R,\A)$ of all relevant concept descriptions contains all subdescriptions of the concept
descriptions occurring in $\R$ or $\A$ as well as their negations. In addition, for every individual name $b\in\Ind_\A$
(where $\Ind_\A$ denotes the set of individual name occurring on $\A$), the set
$\M(\R,\A)$  contains this name and its negation.

\begin{definition} 
Let $\A$ be an $\ALCSCC$ ABox and $\R$ be an $\ALCSCC$ ERCBox.
A subset $t$ of $\M(\R,\A)$ is a \emph{type for $\R$ and $\A$} if it satisfies the following properties:
\begin{enumerate}
\item 
  for every concept description $\neg C\in \M(\R,\A)$, either $C$ or $\neg C$ belongs to $t$; 
\item
  for every individual name $b\in \M(\R,\A)$, either $b$ or $\neg b$ belongs to $t$;
\item 
  for every concept description $C\sqcap D\in \M(\R,\A)$, we have that $C\sqcap D\in t$ iff $C\in t$ and $D\in t$;
\item
  for every concept description $C\sqcup D\in \M(\R,\A)$, we have that $C\sqcup D\in t$ iff $C\in t$ or $D\in t$.
\end{enumerate}
\end{definition}

Intuitively, a type containing $b\in\Ind_\A$ is supposed to represent the individual $b$. 
Our type elimination procedure will ensure that, for every individual $b$ exactly one type is available.
However, ERCBoxes do not allow us to express that this type should be realized by only one element of
the model. In our model construction, we will actually have several individuals that realize such a type,
and choose one of them to actually interpret the individual $b$. With respect to membership in concepts,
this ``chosen'' individual and its copies behave the same. However, to satisfy role assertions we must ensure
that role successors are always the chosen individuals. This can be achieved by adding an appropriate
cardinality constraint when defining augmented types (see below).

Augmented types consider not just the concepts to which a single individual belongs, but also the Venn regions to which its role successors
belong. Basically, we define the notion of a Venn region as in \cite{Baad17,Baad19,Baad19b}, but extend it by (i)~always considering the set
of all set variables $X_D$ for subdescriptions $D$ occurring in $\R$ or $\A$ and $X_r$ for $r\in N_R$ rather than just the ones occurring 
in the given QFBAPA formula; and (ii)~additionally considering set variables $X_b$ for all individuals $b\in\Ind_\A$. 

\begin{definition}[Venn region]
Let $\A$ be an $\ALCSCC$ ABox and $\R$ be an $\ALCSCC$ ERCBox, and let $X_1,\ldots,X_k$ be an enumeration of
all set variables $X_C$ for subdescriptions $C$ occurring in $\R$ or $\A$, $X_r$ for $r\in N_R$, and $X_a$ for individual names $a\in \Ind_\A$. 
A \emph{Venn region for $\R$ and $\A$} is of the form
$$
X_1^{c_1}\cap\ldots \cap X_k^{c_k},
$$
where $c_i$ is either empty or $c$ for $i = 1,\ldots,k$. 
\end{definition}

Again, a Venn region containing $b$ says
that this element corresponds to the individual $b\in\Ind_\A$. But now QFBAPA allows us to formulate constraints on
the cardinality of the sets $X_b$. In particular, by adding $|X_b| \leq 1$ we can ensure that there is only one
role successor that belongs to a type containing $b$.

Given a type $t$ for $\R$ and $\A$, we consider the corresponding QFBAPA formula $\phi_t$, which is induced by the (possibly negated) successor
constraints occurring in $t$. We conjoin to this formula the set constraint
    $$
    X_{r_1}\cup\ldots\cup X_{r_n} = \universe,
    $$
where $N_R = \{r_1,\ldots,r_n\}$,\footnote{%
Without loss of generality we assume that $N_R$ contains only the role names occurring in $\R$ and $\A$.
}
as well as the cardinality constraints 
$$
|X_b|\leq 1
$$ 
for $b\in\Ind_\A$. In case $a\in\Ind_\A$ belongs to $t$,
we consider
all role assertions $r_1(a,b_1), \ldots, r_k(a,b_\ell)$ with $a$ in the first component in $\A$,
and add the conjuncts 
$$
|X_{b_1}\cap X_{r_1}|\geq 1, \ldots, |X_{b_\ell}\cap X_{r_\ell}|\geq 1.
$$
For the resulting formula $\phi_t'$, we compute the number $N_t$ that bounds the number of Venn regions that
need to be non-empty in a solution of $\phi_t'$ (see Lemma~1 in \cite{Baad19}). 

\begin{definition}\label{aug:typ:def}
Let $\R$ be an $\ALCSCC$ ERCBox and $\A$ be an $\ALCSCC$ ABox.
An \emph{augmented type} $(t,V)$ for $\R$ and $\A$ consists of a type $t$ for $\R$ and $\A$ together with a set of Venn region $V$
such that $|V|\leq N_t$ and the formula $\phi_t'$ has a solution in which exactly the Venn regions in $V$ are
non-empty.
\end{definition}
The existence of a solution of $\phi_t'$ in which exactly the Venn regions in $V$ are
non-empty can obviously be checked (within NP) by adding to $\phi_t'$ conjuncts that
state non-emptiness of the Venn regions in $V$ and the fact that the union of these
Venn regions is the universal set (see the description of the \PSpace algorithm in the proof of
Theorem~1 in \cite{Baad17}). Another easy to show observation is that there are only exponentially many
augmented types (see the accompanying technical report of \cite{Baad17} for a proof of the following lemma).

\begin{lemma}
\label{aug:type:comp}
Let $\R$ be an $\ALCSCC$ ERCBox and $\A$ be an $\ALCSCC$ ABox.
The set of augmented types for $\R$ and $\A$ contains at most exponentially many elements in the
size of $\R$ and $\A$, and it can be computed in exponential time.
\end{lemma}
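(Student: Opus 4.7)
The plan is to decouple the bound into two factors---the number of admissible types $t$ and, for each such $t$, the number of admissible sets $V$ of Venn regions---and then check each candidate $(t,V)$ for validity via a polynomial-size QFBAPA satisfiability test.

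First I would observe that $\M(\R,\A)$ has cardinality linear in $|\R|+|\A|$, since it collects the subdescriptions of the concepts occurring in $\R$ and $\A$, their negations, and the (negated) individual names from $\Ind_\A$. Hence there are at most $2^{|\M(\R,\A)|}$ many candidates for $t$, i.e., at most exponentially many, and the closure conditions defining a type can be checked in polynomial time per candidate, so the set of types for $\R$ and $\A$ can be enumerated in exponential time.

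Second, I would bound the number of admissible $V$'s for a fixed type $t$. The QFBAPA formula $\phi_t'$ associated with $t$ has polynomial size, and by Lemma~1 of \cite{Baad19} (which is based on the sparse-solution theorem of Kuncak and Rinard \cite{KuRi07}) the bound $N_t$ is polynomial in $|\R|+|\A|$. The total number of Venn regions is $2^k$ where $k$, the number of set variables, is linear in $|\R|+|\A|$; hence the number of subsets $V$ of cardinality at most $N_t$ is bounded by $\binom{2^k}{N_t} \le 2^{k\cdot N_t}$, which is exponential. Multiplying by the exponentially many types yields an exponential overall bound on the number of pairs $(t,V)$.

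Finally, to check whether a candidate pair $(t,V)$ is actually an augmented type, I would form the QFBAPA formula $\phi_t'\wedge \bigwedge_{v\in V}|v|\ge 1 \wedge \bigl(\bigcup_{v\in V} v = \universe\bigr)$. The key (and somewhat subtle) point is that the single coverage constraint $\bigcup_{v\in V} v = \universe$, together with the pairwise disjointness of Venn regions, already forces every Venn region \emph{outside} $V$ to be empty, so we avoid writing exponentially many emptiness constraints. The resulting formula has polynomial size and its satisfiability is decidable in NP \cite{KuRi07}, hence within exponential time deterministically. Combined with the exponential number of candidate pairs, this yields the claimed exponential bound on both the size of the set of augmented types and the time to compute it; the main obstacle in writing this out in detail is just keeping the polynomial bound on $N_t$ honest when numbers in $\R$ are encoded in binary, which follows from the size estimates for $\phi_t'$ already established for $\ALCSCC$ in \cite{Baad17,Baad19}.
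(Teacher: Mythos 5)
Your argument is correct and follows essentially the same route the paper relies on (sketched in the text before the lemma and in the technical report of \cite{Baad17}): exponentially many types over the linear-size set $\M(\R,\A)$, each paired with a set $V$ of at most $N_t$ of the exponentially many Venn regions, where $N_t$ is polynomial by the sparse-solution bound of \cite{KuRi07} applied to the polynomial-size $\phi_t'$, and an NP-checkable QFBAPA test per candidate obtained by adding non-emptiness constraints for the regions in $V$ together with the coverage constraint $\bigcup_{v\in V} v = \universe$ (which, by disjointness of Venn regions, forces all other regions to be empty). The only cosmetic slip is bounding the number of subsets of size \emph{at most} $N_t$ by $\binom{2^k}{N_t}$; summing over all sizes up to $N_t$ still yields a bound of the form $2^{O(k\cdot N_t)}$, so the exponential count and the overall exponential-time enumeration are unaffected.
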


The type elimination procedure checking the consistency of $\ALCSCC$ RCBoxes introduced in \cite{Baad19}
starts with the set of all augmented types, and then successively eliminates augmented
types 
\begin{enumerate}
\item[(i)]whose Venn regions are not realized by the currently available augmented types, or 
\item[(ii)] whose first component is forced to be empty by the constraints in $\R$. 
\end{enumerate}
To make the first reason for elimination more precise, assume that $\At$ is a set of augmented types and that $v$ is a Venn region. 
In the following, let $D$ denote an $\ALCSCC$ concept and $b$ an individual name.
The Venn region $v$ yields a set of concept descriptions $S_v$ that contains, for every set variable $X_D$ ($X_b$)
occurring in $v$, the element $D$ ($b$) in case $v$ contains $X_D$ ($X_b$) 
and the element $\neg D$ ($\neg b$) in case $v$ contains $X_D^c$ ($X_b^c$).
It is easy to see that $S_v$ is actually a subset of $\M(\R,\A)$ (modulo removal of double negation).

\begin{definition}
Let $\At$ be a set of augmented types and $v$ a Venn region,
We say that $v$ is \emph{realized by $\At$} if there is an augmented type $(t,V)\in \At$ such that
$S_v\subseteq t$.
\end{definition}

The fact that both Venn regions and types contain every concept or individual (set variable) 
either positively or
negatively implies that, modulo elimination of double negation, we actually have $S_v = t$ whenever $S_v\subseteq t$.
Note that, for some Venn regions $v$, there may not be a type $t$ such that $S_v\subseteq t$ since in the
definition of Venn regions we do not consider the Boolean structure of concepts (e.g., a Venn region may contain
$X_{C\sqcap D}$ positively, but $X_D$ negatively). However this will not be a problem since in our proofs we will 
always work with Venn regions that are contained in types. 

Also note that the condition that Venn regions must  be realized also takes
care of role assertions. In fact, consider an augmented type $(t,V)$ and assume that the type $t$ contains $a$ and $r(a,b)\in \A$. 
Then $\phi_t'$ contains the conjuncts $|X_{b}\cap X_{r}|\geq 1$ and $|X_b|\leq 1$. Consequently, $V$ contains a
Venn region $v$ in which $X_{r}$ and $X_{b}$ occur positively, and thus $b\in S_v$. If this Venn region is realized by the augmented type $(s,W)$,
then $s$ must contain $b$ (i.e., represent the individual $b$). Intuitively, this ensures that $a$ has $r$-successor $b$.
In order to show this formally, however, some more work is needed since we must ensure that $a$ is actually linked
to the copy chosen to represent $b$ rather than just to a type containing $b$ (see the proof of Lemma~\ref{soundness:RCBoxes} below).

We are now ready to formulate our algorithm.
We assume without loss of generality that $\A$ is non-empty, and thus contains at least one individual.
In addition, we assume that $\R$ is a conjunction of semi-restricted constraints, which we call a \emph{conjunctive ERCBox}. 
We will argue later why is is sufficient to restrict the attention to conjunctive ERCBoxes.

\begin{algor}\label{alg:restr}
Let $\R$ be a conjunctive  $\ALCSCC$ ERCBox and $\A\neq\emptyset$ be an $\ALCSCC$ ABox. 
First, we compute the set $\M(\R,\A)$ consisting of all subdescriptions of $\R$ and $\A$ as well as the
negations of these subdescriptions, together with the set of all individual names occurring in $\A$ and their negations.
Based on this set $\M(\R,\A)$, we compute the set $\widehat{\At}$ of all augmented types for $\R$ and $\A$. 
We now decide consistency of $\A$ w.r.t.\ $\R$ by 
performing the following three steps: 
\begin{enumerate}
\item\label{step:one}
  Compute all maximal subsets $\At$ of $\widehat{\At}$ such that 
  \begin{enumerate}
  \item\label{cond:one}
    for every individual $b\in \Ind_\A$, there is exactly one augmented type $(t,V)\in \At$ with $b\in t$, 
  \item\label{cond:two}
    if $(t,V)\in \At$ and $b \in t$ for an individual $b\in \Ind_\A$, then $C\in t$ for all concept assertion $C(b)\in \A$,
  \end{enumerate}
  To achieve this,
  in a first step, we can remove all augmented types that do not satisfy condition (\ref{cond:two}). In case
  there is an individual $b\in \Ind_\A$ such that all types containing $b$ have been removed, then the algorithm \emph{fails}.
  Otherwise, choose for every $b\in \Ind_\A$ exactly one of the remaining augmented types whose first component contains
  $b$ and remove all the other augmented types containing $b$.

  Check whether the following two steps succeed for one of the sets $\At$ computed this way.
\item\label{step:three}
   If there is an individual $b\in\Ind_\A$ such that $\At$ does not contain an augmented type $(t,V)$ such that $b\in t$, 
   then the algorithm \emph{fails} for the current set of augmented types.
   Otherwise, it checks whether $\At$ contains an element $(t,V)$ such that not all the Venn regions in $V$ are realized by $\At$.
   If there is no such element $(t,V)$ in $\At$, then continue with the next step.
   Otherwise, let $(t,V)$ be such an element, and set $\At := \At \setminus \{(t,V)\}$.
   Continue with this step, but now using the new current set of augmented types.
\item\label{step:four}
  Let $T_\At : = \{t \mid \mbox{there is $V$ such that}\ (t,V)\in \At\}$, and let
  $\phi_{T_\At}$ be obtained from $\R$ by replacing each $|C|$ in $\mathcal{R}$ with $\sum_{t\in T_\At\text{ s.t. }C\in t}v_t$ and adding $v_t\ge 0$ for each $t\in T_\At$.
  Check whether $T_\At$ contains an element $t$ such that $\phi_{T_\At}\wedge v_t\ge 1$ has no solution.
  If this is the case for $t$, then remove all augmented types of the form $(t,\cdot)$ from $\At$, and continue with the previous step. 
  If no type $t$ is removed in this step, then the algorithm \emph{succeeds}. 
\end{enumerate}
\end{algor}

Before proving that this algorithm runs in exponential time, we show that it is sound and complete.

\begin{lemma}[Soundness]\label{soundness:RCBoxes}
Let $\R$ be a conjunctive $\ALCSCC$ ERCBox and $\A\neq \emptyset$ an $\ALCSCC$ ABox.
If Algorithm~\ref{alg:restr} succeeds on input $\R$ and $\A$, then $\A$ is consistent w.r.t.\ $\R$.
\end{lemma}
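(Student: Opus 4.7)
I construct a finite interpretation $\I$ from the successful run of the algorithm and verify that it models both $\A$ and $\R$. Success at Step~\ref{step:four} means that, for every $t\in T_\At$, the system $\phi_{T_\At}\wedge v_t\ge 1$ has a non-negative integer solution $\boldsymbol{c}_t$. Applying Lemma~\ref{lem:ineq-system-properties}(2) to the family $\{\boldsymbol{c}_t\}_{t\in T_\At}$ yields a single non-negative integer solution $\sigma$ of $\phi_{T_\At}$ with $\sigma(v_t)\ge 1$ for every $t\in T_\At$; by Lemma~\ref{lem:ineq-system-properties}(1) we may further scale $\sigma$ so that every $\sigma(v_t)$ is at least as large as the successor counts required below.

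\textbf{Constructing $\I$.} Set $\Delta^\I := \{(t,j)\mid t\in T_\At,\ 1\le j\le\sigma(v_t)\}$ and $A^\I := \{(t,j)\mid A\in t\}$ for each $A\in N_C$. By Condition~(\ref{cond:one}), each $b\in\Ind_\A$ occurs in a unique first-component type $t_b$; put $b^\I := (t_b,1)$. For each copy $(t,j)$ with augmented type $(t,V)$, fix a solution of $\phi_t'$ in which exactly the Venn regions in $V$ are non-empty (guaranteed by Definition~\ref{aug:typ:def}). For every Venn region $v\in V$, this solution determines how many role successors of $(t,j)$ land in $v$. If $v$ contains $X_b$ positively for some $b$, the conjunct $|X_b|\le 1$ in $\phi_t'$ forces at most one such successor, which is taken to be $b^\I$; the Step~\ref{step:three} realizability check together with Condition~(\ref{cond:one}) forces the realizing type for $v$ to be $t_b$, so $b^\I$ matches $v$. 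Otherwise choose distinct copies $(s,j')\in\Delta^\I$ with $S_v\subseteq s$ in the required quantity. Finally, place $((t,j),\text{succ})$ into $r^\I$ precisely when $X_r$ is positive in the Venn region containing $\text{succ}$.

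\textbf{Verification.} The central claim, proved by induction on $C\in\M(\R,\A)$ along the lines of~(\ref{completeness:eq:two}) in the proof of Lemma~\ref{soundness:alcplus}, is that $(t,j)\in C^\I$ iff $C\in t$. The only essentially new case is $C=\suc(\con)$, where the chosen solution of $\phi_t'$ witnesses satisfaction of $\con$ at $(t,j)$ once the induction hypothesis is used to match every successor's concept-membership profile to its Venn region. From this claim: concept assertions $C(b)\in\A$ hold by Condition~(\ref{cond:two}); role assertions $r(a,b)\in\A$ hold because $\phi_{t_a}'$ contains $|X_b\cap X_r|\ge 1$, producing a Venn region in $V_a$ with $X_b$ and $X_r$ positive that the construction routes to $b^\I$; and $\I\models\R$ because $|C^\I|=\sum_{t\in T_\At,\,C\in t}\sigma(v_t)$, so $\R$ evaluates in $\I$ exactly as $\phi_{T_\At}$ does under $\sigma$.

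\textbf{Main obstacle.} The delicate step is the role-successor construction: it must simultaneously realize the prescribed Venn regions at every copy, route all ABox-mandated edges to the \emph{chosen} representative $b^\I$ rather than some other copy of $t_b$, and preserve the inductive concept-membership invariant. The conjuncts $|X_b|\le 1$ built into each $\phi_t'$, together with the uniqueness Condition~(\ref{cond:one}), are exactly what makes these three demands mutually compatible.
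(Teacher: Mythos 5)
Your proposal is correct and follows essentially the same route as the paper's proof: it scales a solution of $\phi_{T_\At}$ with all $v_t\ge 1$ obtained via Lemma~\ref{lem:ineq-system-properties}, builds the domain from copies of the surviving types, interprets each ABox individual as the first copy of its unique type, defines role successors per element from a fixed solution of $\phi_t'$ routed injectively to copies of the Venn-region types (with $|X_b|\le 1$ and Condition~(\ref{cond:one}) forcing the ABox-mandated edges onto the chosen representatives), and then proves the type/concept-membership claim by induction before reading off satisfaction of $\A$ and $\R$. The only cosmetic difference is that the paper takes copies of augmented types $(t,V)^i$ (hence the extra divisibility factor $\prod_{t}n_t$ in its scaling constant $N$), whereas you copy types and attach an augmented type to each copy, which works equally well.
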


\begin{proof}
Assume that the algorithm succeeds on input $\R$ and $\A$, 
and let $\At$ be the final set of augmented types when the algorithm stops successfully. 
Note that $\At\neq\emptyset$ since there is at least one individual $b$ in $\A$, and thus
the algorithm would have failed for an empty set of augmented types.
We show how $\At$ can be used to construct a model $\I$ of $\R$ and $\A$. 

For this construction, we first consider the formula $\phi_{T_\At}$,
which is obtained from $\R$ by replacing each $|C|$ in $\mathcal{R}$ with $\sum_{t\in T_\At\text{ s.t. }C\in t}v_t$ and adding 
$v_t\ge 0$ for each $t\in T_\At$. 
Note that, due to the special form of conjunctive ERCBoxes, we know that this yields a
system of linear inequalities of the form $A\cdot \boldsymbol{v}\ge \boldsymbol{b}$, $\boldsymbol{v}\ge \boldsymbol{0}$.
Since the algorithm has terminated successfully, we know for all $t\in T_\At$ that the formula
$\phi_{T_\At}\wedge v_t\ge 1$ has a solution. By 
Lemma~\ref{lem:ineq-system-properties} this implies that $\phi_{T_\At}$ has a solution in which
all variables $v_t$ for $t\in T_\At$ have a value $\geq 1$ and all variables $v_t$ with $t\not\in T_\At$ have value $0$.
In addition, given an arbitrary number $N\geq 1$, we know that there is a solution $\sigma_N$ of $\phi_{T_\At}$ such that
$\sigma_N(v_t) \geq 1$ and $N | \sigma_N(v_t)$ holds for all $t\in T_\At$. To see this, note that we can just multiply with $N$
a given solution satisfying the properties mentioned in the previous sentence.

We use the  augmented types in $\At$ to determine the right $N$: 
\begin{itemize}
\item
For each augmented type $(t,V)$, we know that the formula
$\phi_t'$ has a solution where exactly the Venn regions in $V$ are non-empty (see Definition~\ref{aug:typ:def}).  
Assume that this
solution assigns a set of cardinality $k_{(t,V)}$ to the universal set.
\item
For each $t\in T_\At$, let $n_t$ be the cardinality of the set
$\{V \mid (t,V)\in \At\}$, i.e., the number of augmented types in $\At$ that have $t$ as their first component. 
\end{itemize}
We now define $N$ as 
$$
N := (\max\{k_{(t,V)} \mid (t,V)\in \At\}) \cdot \prod_{t\in T_\At}n_t,
$$ 
and use the solution
$\sigma_N$ of $\phi_{T_\At}$ to construct a finite interpretation $\I$ as follows. The domain of $\I$ is defined as
$$
\Delta^\I := \{ (t,V)^i \mid (t,V)\in \At\ \mbox{and}\ 1\leq i \leq \sigma_N(v_t)/n_t\}.
$$
Note that $\sigma_N(v_t)/n_t$ is a natural number since $N|\sigma_N(v_t)$ implies $n_t|\sigma_N(v_t)$.
In addition, $\Delta^\I\neq\emptyset$ because $\At\neq\emptyset$ and $\sigma_N(v_t)/n_t \geq 1$ since $\sigma_N(v_t) \geq 1$.
Moreover, for each type $t\in T_\At$, the set
$\{ (t,V)^i \mid (t,V)^i \in \Delta^\I\}$ has cardinality $\sigma_N(v_t)$.
 
The interpretation of the concept names $A$ is based on the occurrence of these names in the first component
of an augmented type, i.e.,
$$
A^\I := \{(t,V)^{i} \in \Delta^\I \mid A\in t\}.
$$
Individual names are treated similarly, however we need to ensure that an individual name is interpreted by a single
element of $\Delta^\I$, and not by a set of cardinality $> 1$. First, note that, due to step~(1) and the failure condition in
step~(2), for each individual name $a\in \Ind_\A$, $\At$ contains exactly one augmented type $(t,V)$ such that $a\in t$.
Let us denote this augmented type with $(t_a,V_a)$. The interpretations domain may contain several copies of
$(t_a,V_a)$, but we interpret $a$ using the first one, i.e., we define
$$
a^\I := (t_a,V_a)^1.
$$

Defining the interpretation of the role names is a bit more tricky. Obviously, it is sufficient to define, for
each role name $r\in N_R$ and each $d\in \Delta^\I$, the set $r^\I(d)$. Thus, consider an element
$(t,V)^{i}\in \Delta^\I$. Since $(t,V)$ is an augmented type in $\At$, the formula $\phi_t'$ has a solution $\sigma$ in which exactly
the Venn regions in $V$ are non-empty, and which assigns a set of cardinality $m:=k_{(t,V)}$ to the universal set. 
In addition, 
  each Venn region $w\in V$ is realized by an augmented type $(t^w,V^w)\in \At$.
Assume that the solution $\sigma$ assigns the finite set $\{d_1,\ldots,d_m\}$ to the
set term $\universe$. We consider an injective mapping $\pi$ of $\{d_1,\ldots,d_m\}$ into
$\Delta^\I$ such that the following holds for each element $d_j$ of $\{d_1,\ldots,d_m\}$:
if $d_j$ belongs to the Venn region $w\in V$, then 
\begin{itemize}
\item
$\pi(d_j) = (t^w,V^w)^{\ell}$ for some $1\leq\ell\leq \sigma_N(v_{t^w})/n_{t^w}$; 
\item
if $w$ contains $X_b$ for an individual name $b\in\Ind_\A$ positively, then $\ell = 1$.
\end{itemize}
Such a bijection exists since,
\begin{itemize}
\item 
  $\sigma_N(v_{t^w})/n_{t^w} \geq \max\{k_{(t',V')} \mid (t',V')\in \At\}\geq k_{(t,V)}=m$;
\item
  due to the presence of the cardinality constraints $|X_b|\leq 1$ in the QFBAPA formula
  $\phi_t'$, there is at most one individual $d_j$ that belongs to a Venn region $w$ containing
  $X_b$ positively. Any other individual $d_k$ belongs to a different Venn region $w'$ not containing $X_b$ positively, and 
  thus $S_w\subseteq t^w$ and $S_{w'}\subseteq t^{w'}$ implies $t^w \neq t^{w'}$ since $b\in t^w$ but $b\not\in t^{w'}$.
  This shows that choosing the index $\ell=1$ when defining $\pi(d_j)$ is possible without getting into conflict
  with the required choice of the index $1$ for a different individual.
\end{itemize}
We now define
$$
r^\I((t,V)^{i}) := \{\pi(d_j) \mid d_j\in \sigma(X_r)\}.
$$

First, note that this definition of the interpretation of roles in $\I$ satisfies the role assertions
in $\A$. To see this, assume that $r(a,b)\in \A$, and let $a^\I = (t_a,V_a)^1$. 
Then $a\in t_a$, which implies that $\phi_{t_a}'$ contains the cardinality constraint $|X_b\cap X_{r}|\geq 1$
as well as the constraint $|X_b|\leq 1$. 
Consequently, $V_a$ contains exactly one Venn region $w$ that contains $X_b$ and $X_r$ positively.
Consider the solution of $\phi_{t_a}'$ used above to define $r^\I((t_a,V_a)^{1})$, and let
$d_j$ be the unique individual belonging to $X_b$ under this solution. 
Then this individual also belongs to $X_r$ under this solution, and we have
$\pi(d_j) = (t^w,V^w)^1 \in r^\I((t_a,V_a)^{1})$.
In addition, $t_w$ contains $b$ since $S_w$ contains $b$ and $S_w\subseteq t_w$. This shows
that $b^\I = (t^w,V^w)^1$, and thus that the role assertion $r(a,b)$ is satisfied by $\I$.

To prove that $\I$ also satisfies the concept assertions in $\A$ and the ERCBox $\R$, we first show
the following claim:\\[.7em]
\textbf{Claim:} \emph{For all concept descriptions
$C\in \M(\R,\A)$, all augmented types $(t,V)\in \At$, and all $i, 1\leq i \leq  \sigma_N(v_t)/n_t$, we have $C\in t$ iff $(t,V)^i\in C^\I$.}\\[.7em]
We prove the claim by induction on the size of $C$:
\begin{itemize}
\item
The cases $C = A$, $C = \neg D$, $C = D_1\sqcap D_2$, and $C = D_1\sqcup D_2$ can be handled as in
the proof of
$(\ref{completeness:eq:two})$ in the proof of Lemma~\ref{soundness:alcplus}.
\item
  Now assume that $C = \suc(\con)$ for a set or cardinality constraint $\con$.
\begin{itemize}
\item
  If $C\in t$, then this constraint
  is part of the QFBAPA formula $\phi'_t$ obtained from $t$, and thus satisfied by the solution $\sigma$ of
  $\phi'_t$ used to define the role successors of $(t,V)^i$. According to this definition, there is a
  $1$--$1$ correspondence between the elements of $\sigma(\universe)$ and the role successors of $(t,V)^i$.
  This bijection $\pi$ also respects the assignment of subsets of $\sigma(\universe)$ to set variables of
  the form $X_r$ (for $r\in N_R$) and $X_D$ (for concept descriptions $D$) occurring in $\phi'_t$, i.e.,
   $$
   \begin{array}{l@{\ \ \ }l}
   (*) & d_j\in \sigma(X_r)\ \mbox{iff}\  \pi(d_j) \in r^\I((t,V)^{i}),\\
       & d_j\in \sigma(X_D)\ \mbox{iff}\  \pi(d_j) \in D^\I.
   \end{array}
   $$
  Once $(*)$ is shown it is easy to see that $(t,V)^i\in \suc(\con)^\I = C^\I$. 
  In fact, the translation $\phi_\con$ of $\con$,
  where $r$ is replaced by $X_r$ and $D$ by $X_D$, is a conjunct in $\phi'_t$ and
  thus $\sigma$ satisfies $\phi_c$. Now $(*)$ shows that (modulo the application of the bijection $\pi$),
  when checking whether $(t,V)^i\in \suc(\con)^\I$,
  roles $r$ and concepts $D$ in $\phi_\con$ are interpreted in the same way as the set variables $X_r$ and $X_D$ in the solution $\sigma$ of $\phi'_t$.
  Thus the fact that $\sigma$ satisfies the conjunct $\phi_\con$ of $\phi'_t$ implies that the role successors of $(t,V)^i$ satisfy $\con$, i.e., $(t,V)^i\in \suc(\con)^\I$ holds.
  Note that, though $\phi'_t$ also contains set variables of the form $X_b$ for individual names $b$, this is not the case for $\phi_\con$
  since individuals occur only in the ABox and not in concepts.

  For role names $r$, property $(*)$ is immediate by the definition of $r^\I((t,V)^{i})$.
  Now consider a concept description $D$ such that $X_D$ occurs in $\phi'_t$. Then $D$ occurs in $\con$, and is thus
  smaller than $C$, which means that we can apply induction to it. If $d_j\in \sigma(X_D)$, then the Venn region
  $w$ to which $d_j$ belongs contains $X_D$ positively. Consequently, $S_w$ contains $D$, and
  the augmented type $(t^w,V^w)$ realizing $w$ satisfies $D\in t_w$. By induction, we obtain
  $\pi(d_j) = (t^w,V^w)^\ell \in D^\I$. Conversely, assume that $\pi(d_j) = (t^w,V^w)^\ell \in D^\I$, where
  $w$ is the Venn region to which $d_j$ belongs w.r.t.\ $\sigma$. By induction, we obtain $D\in t^w$, and thus
  the Venn region $w$ contains $X_D$ positively. Since $d_j$ belongs to this Venn region, we obtain
  $d_j\in \sigma(X_D)$.
\item
  The case where $C\not\in t$ can be treated similarly. In fact, in this case the constraint $\neg \con$
  is part of the QFBAPA formula $\phi'_t$ obtained from $t$, and we can employ the same argument as above, just
  using $\neg \con$ instead of $\con$.
\end{itemize}
\end{itemize}
This finishes the proof of the claim. As an easy consequence of this claim we have for all $C$ occurring in $\R$
that 
$$
\mbox{$C^\I = \{ (t,V)^i \mid C\in t, (t,V)\in\At,\ \mbox{and}\ 1\leq i \leq  \sigma_N(v_t)/n_t\}$.}
$$
Consequently, $|C^\I| = \sum_{t\in T_\At\text{ s.t. }C\in t}\sigma_N(v_t)$, which shows that
$\I$ satisfies $\R$ since $\sigma_N$ solves $\phi_{T_\At}$.

Finally, assume that $C(a)\in \A$. Then $a^\I = (t_a,V_a)^1$ and $C\in t_a$. The claim thus yields
$(t_a,V_a)^1\in C^\I$, which shows that $\I$ also satisfies the concept assertions in $\A$. 
\end{proof}

Next we show that the algorithm is also complete, i.e., whenever $\A$ is consistent w.r.t.\ $\R$, then
it succeeds on this input.

\begin{lemma}[Completeness]
Let $\R$ be a conjunctive $\ALCSCC$ ERCBox and $\A\neq \emptyset$ an $\ALCSCC$ ABox. 
If $\A$ is consistent w.r.t.\ $\R$, then Algorithm~\ref{alg:restr} succeeds on input $\R$ and $\A$.
\end{lemma}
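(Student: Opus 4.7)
The plan is to extract a witness set of augmented types directly from a given model of $\A$ and $\R$ and then to show that the algorithm's non-deterministic choices can be steered so that this witness is never eliminated.

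Suppose $\I$ is a finite model of $\R$ and $\A$. For every $d\in\DI$, set $t_d := \{X\in\M(\R,\A)\mid d\in X^\I\}$; this is a type for $\R$ and $\A$, and for an individual name $b\in\M(\R,\A)$, the condition $b\in t_d$ is equivalent to $d=b^\I$, so each individual name sits in the type of exactly one domain element. The role successors of $d$ in $\I$ naturally induce a solution of $\phi_{t_d}'$: the added $|X_b|\le 1$ conjuncts hold because each $b^\I$ is a single element, and the conjuncts $|X_b\cap X_r|\ge 1$ arising when $a\in t_d$ and $r(a,b)\in\A$ hold because $\I\models\A$. Following the condensation argument from the satisfiability algorithm for $\ALCSCC$ (cf.\ \cite{Baad17,Baad19}), I would compress this solution into another solution of $\phi_{t_d}'$ with at most $N_{t_d}$ non-empty Venn regions, all drawn from those already realized by $d$'s successors; let $V_d$ be that set, and define $\At^\I := \{(t_d,V_d)\mid d\in\DI\}$.

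Two key facts then need to be verified. First, for every $b\in\Ind_\A$, $\At^\I$ contains exactly one augmented type whose first component contains $b$, namely $(t_{b^\I},V_{b^\I})$, and it satisfies $C\in t_{b^\I}$ for every $C(b)\in\A$ since $b^\I\in C^\I$. Second, every Venn region $w\in V_d$ is realized by some $(t_e,V_e)\in\At^\I$, where $e$ is a successor of $d$ that belongs to $w$ (so $S_w\subseteq t_e$). Together these facts show that Step~\ref{step:one} of the algorithm can produce a maximal set $\At$ with $\At^\I\subseteq\At$, by selecting $(t_{b^\I},V_{b^\I})$ for every $b\in\Ind_\A$.

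It remains to show inductively that no element of $\At^\I$ is ever removed during the subsequent iterations of Steps~\ref{step:three} and~\ref{step:four}. For Step~\ref{step:three} this is immediate from the second key fact above. For Step~\ref{step:four}, I would exhibit the assignment $\sigma(v_t):=|\{d\in\DI\mid t_d=t\}|$ for $t\in T_{\At^\I}$, and $\sigma(v_t):=0$ otherwise: since $|C^\I|=\sum_{t\in T_{\At^\I},\,C\in t}\sigma(v_t)$ for every $C$ occurring in $\R$ and $\I\models\R$, the assignment $\sigma$ solves $\phi_{T_\At}$, while $\sigma(v_t)\ge 1$ for every $t\in T_{\At^\I}$ prevents the removal of any such $t$. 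Because the set of augmented types is finite and shrinks monotonically, the procedure terminates with $\At^\I$ still present, and the failure conditions are avoided because $\At^\I$ already covers every ABox individual.

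I expect the main obstacle to be the condensation step producing each $(t_d,V_d)$ as a genuine augmented type with $|V_d|\le N_{t_d}$ while simultaneously guaranteeing that every Venn region in $V_d$ is still covered by an actual role successor of $d$ in $\I$, which is what makes the second key fact above survive the compression. Once this is settled, the rest of the argument is a careful bookkeeping dual of the soundness construction in Lemma~\ref{soundness:RCBoxes}.
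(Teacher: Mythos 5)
Your proposal is correct and follows essentially the same route as the paper's proof: extract the types $t_\I(d)$ and, via the condensation result (Lemma~1 of \cite{Baad19}, which guarantees a solution of $\phi'_t$ whose at most $N_t$ non-empty Venn regions are all among those inhabited by actual role successors of $d$), obtain augmented types $\At_\I$ that satisfy the conditions of Step~(\ref{step:one}), survive Step~(\ref{step:three}) by the realizability argument, and survive Step~(\ref{step:four}) via the assignment counting elements of each type in $\I$. The ``main obstacle'' you flag is exactly what that cited lemma settles, so no gap remains.
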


\begin{proof}
Assume that $\I$ is a model of $\R$ and $\A$. Consider the set of all types of elements of $\I$, i.e.,
$
T_{\I} := \{t_\I(d) \mid d \in \Delta^\I\}, 
$
where 
$$
\begin{array}{r@{\ }c@{\ }l}
t_\I(d) &:=& \{D\in \M(\R,\A) \mid D\ \mbox{concept description and}\ d\in D^\I\}\cup\mbox{}\\
        &  & \{a\in \M(\R,\A)\cap N_I \mid a^\I = d\}\cup\{\neg a\in \M(\R,\A) \mid a\in N_I, a^\I \neq d\}.
\end{array}
$$
It is easy to see that the elements $t_\I(d)$ of $T_{\I}$ are indeed types.  In addition, for every $a\in \Ind_\A$,
there is exactly one type $t$ in $T_{\I}$ that contains $a$, which is $t_\I(a^\I)$. Also note that
$C(a)\in \A$ implies $a^\I\in C^\I$, and thus $C\in t_\I(a^\I)$. This shows that the types in $T_{\I}$
satisfy the conditions on the sets of augmented types computed in step (1) of the algorithm.
However, we still need to equip our types with Venn regions.

Consider $t := t_\I(d)$ for an element $d \in \Delta^\I$. We claim that the QFBAPA formula $\phi'_t$ corresponding to $t$ has
as solution the substitution $\sigma$ in which the universal set $\universe$ consists of all the role successors of $d$, and the other set
variables are assigned sets according to the interpretations of individuals, roles, and concept descriptions in the model $\I$.
The fact that $d\in C^\I$ for all concept descriptions $C\in t$ implies that $\sigma$ satisfies $\phi_t$, and
the fact that $\sigma(\universe)$   consists of all the role successors of $d$ implies that
$X_{r_1}\cup\ldots\cup X_{r_n} = \universe$ is also satisfied by $\sigma$.
The constraints $|X_b|\leq 1$ for $b\in\Ind_\A$ are satisfied since at most one role successors of $d$ can be equal
to $b^\I$.
If $a\in\Ind_\A$ belongs to $t$, then $t = t_\I(a^\I)$ and thus $d = a^\I$. If $r(a,b)\in \A$, then 
$b^\I$ is an $r$-successor of $d$ in $\I$, and thus $b^\I\in \sigma(X_b)\cap\sigma(X_r)$. This shows
that $\sigma$ also satisfies the cardinality constraints of the form $|X_{b}\cap X_{r}|\geq 1$ in $\phi_t'$.

Now, let $\{d_1,\ldots,d_m\} = \sigma(\universe)$ be the set of all role successors of $d$ in $\I$, and $w_i$ the Venn region to
which $d_i$ belongs w.r.t.\ $\sigma$.
By Lemma~1 in \cite{Baad19},
there is a solution $\sigma'$ of $\phi'_t$ such that the set $V$ of non-empty Venn regions
w.r.t.\ $\sigma'$ has cardinality $\leq N_t$ and each of these non-empty Venn regions in $V$ is one of the Venn regions
$w_i$, i.e., $V\subseteq \{w_1\ldots,w_m\}$. If $a\in\Ind_\A$ belongs to $t$ and $r(a,b)\in \A$, then
there is an $i$ such that $d_i = b^\I$. Note that the Venn region $w_i$ then belongs to $V$ since otherwise
$\sigma'$ could not be a solution of $|X_{b}\cap X_{r}|\geq 1$.

By construction, $(t,V)$ is an augmented type. Let $\At_\I$ denote
the set of augmented types obtained by extending the types in $T_\I$ in this way for every $d \in \Delta^\I$. By construction,
for every $t\in T_\I$ there is a set of Venn regions $V$ such that $(t,V)\in \At_\I$. 
It is easy to see that $\At_\I$ satisfies the conditions (\ref{cond:one}) and (\ref{cond:two}) considered
in the first step of the algorithm, and thus there is a set of augmented types $\At$ computed in this first step such that
$\At_\I\subseteq \At$. We now perform the other steps using $\At$ as a starting point.

First, note that no element of $\At_{\I}$ can be removed in Step~\ref{step:four} of our algorithm.
This is an easy consequence of the following observation. Let $T$ be a set of types such that
$T_{\I}\subseteq T$, and let $\phi_{T}$ be obtained from $\R$ by replacing each $|C|$ in $\mathcal{R}$ with 
$\sum_{t\in T\text{ s.t. }C\in t}v_t$ and adding $v_t\ge 0$ for each $t\in T$. Since $\I$ is a model of
$\R$, it is easy to see that $\phi_{T}$ has a solution that also satisfies $v_t\ge 1$ for all $t\in T_{\I}$.

Next, we show that the Venn regions occurring in some augmented type in $\At_\I$ are realized by $\At_\I$.
Thus, let $(t,V)$ be an augmented type constructed from a type $t = t_\I(d)$ as described above, and
let $w\in V$ be a Venn region occurring in this augmented type. Then there is a role successor
$d_i$ of $d$ such that $d_i$ belongs to the Venn region $w = w_i$ w.r.t.\ the solution $\sigma$ of $\phi'_t$
induced by $\I$. We know that $d_i\in D^\I$ for all $D\in S_w$, and thus
$S_w\subseteq t_\I(d_i)$. Since $\At_\I$ contains an augmented type with first component $t_\I(d_i)$, this
shows that $w$ is realized by $\At_\I$.

We claim that, in a the run of Algorithm~\ref{alg:restr}, 
we always have $\At_\I\subseteq \At$ and $T_\I\subseteq T_\At$.
Obviously, this is true 
when we enter the second step for the first time with the set $\At$ satisfying $\At_\I\subseteq \At$.
In addition, in Step~\ref{step:three} of our algorithm, no element of $\At_\I$ can be removed since
we have seen that the Venn regions occurring in some augmented type in $\At_\I$ are realized by $\At_\I$.
Finally, we have also seen above that, in Step~\ref{step:four} of our algorithm, no element
of $T_\I = T_{\At_\I}$ can be removed.

Since $\At_\I$ contains, for every $a\in\Ind_\A$, an augmented type $(t,V)$ such that $a\in t$,
the algorithm cannot fail.
This completes the proof of completeness.
\end{proof}

We have now proved that both the positive and the negative answers given by the algorithm are correct.
This allows us to show our \ExpTime complexity upper bound.

\begin{theorem}\label{exptime:thm}
Consistency of $\ALCSCC$ ABoxes w.r.t.\ $\ALCSCC$ ERCBoxes is an \ExpTime-complete problem.
\end{theorem}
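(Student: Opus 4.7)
The plan is two-pronged: the \ExpTime lower bound follows immediately from Proposition~\ref{prop:exptime}, so the work lies in establishing the matching upper bound via a complexity analysis of Algorithm~\ref{alg:restr} together with a preprocessing step that reduces arbitrary ERCBoxes to conjunctive ones (as the algorithm and its soundness/completeness lemmas were only stated in the conjunctive case).

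For the preprocessing, I would exploit that an ERCBox $\R$ is a \emph{positive} Boolean combination of semi-restricted constraints $c_1,\ldots,c_m$. By monotonicity of positive Boolean combinations, $\A$ is consistent w.r.t.\ $\R$ iff there exists a subset $S\subseteq\{c_1,\ldots,c_m\}$ whose characteristic assignment satisfies $\R$ and such that $\A$ is consistent w.r.t.\ the conjunctive ERCBox $\bigwedge_{c\in S}c$. Enumerating the at most $2^m$ subsets and invoking Algorithm~\ref{alg:restr} on each costs only one further exponential factor, which is absorbed by the overall bound.

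For the complexity of Algorithm~\ref{alg:restr} on a conjunctive ERCBox, I would proceed step by step. Computing $\M(\R,\A)$ is polynomial, and by Lemma~\ref{aug:type:comp} the set $\widehat{\At}$ is of singly exponential size and can be computed in exponential time. In Step~\ref{step:one}, the choice of one augmented type per individual yields at most $|\widehat{\At}|^{|\Ind_\A|}= 2^{\mathrm{poly}\cdot\mathrm{poly}} = 2^{\mathrm{poly}}$ candidate sets $\At$, which can be enumerated deterministically within the \ExpTime budget. Step~\ref{step:three} is a standard type-elimination loop performing at most $|\widehat{\At}|$ rounds, each of which tests realization of polynomially many Venn regions per augmented type over at most exponentially many augmented types, i.e., exponential cost per round. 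In Step~\ref{step:four}, the inequality system $\phi_{T_\At}\wedge v_t\ge 1$ has exponentially many variables $v_t$ but exactly fits the template $A\cdot\boldsymbol{v}\ge \boldsymbol{b}$, $\boldsymbol{v}\ge \boldsymbol{0}$ of Lemma~\ref{lem:ineq-system-properties}, so its solvability is decidable in time polynomial in its (exponential) size. Repeating this check for each $t\in T_\At$ keeps the total within \ExpTime; together with the already proved soundness and completeness, this closes the upper bound.

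The delicate point I would want to verify carefully is that none of the three nested exponential enumerations---over subsets $S$ of the atomic constraints in $\R$, over choices in Step~\ref{step:one}, and over elimination rounds in Steps~\ref{step:three}--\ref{step:four}---combines with the others to push the complexity up to \TwoExpTime. In particular, the apparent double exponential in Step~\ref{step:one} (exponentially many augmented types raised to polynomially many individuals) collapses to a single exponential, and the exponentially large linear program in Step~\ref{step:four} can be solved in exponential time only because the conjunctive shape of $\R$ ensures that Lemma~\ref{lem:ineq-system-properties}(3) applies. These are the two places where one has to be attentive; the rest of the analysis is bookkeeping.
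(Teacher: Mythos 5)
Your proposal matches the paper's own argument in essentially every respect: the lower bound is cited from Proposition~\ref{prop:exptime}, arbitrary ERCBoxes are reduced to exponentially many linear-size conjunctive ones via the satisfying valuations of the positive Boolean structure (the paper phrases your subset enumeration as enumerating Boolean valuations $\rho$ and forming $\R_\rho$), and the upper bound then follows from the same step-by-step exponential-time accounting of Algorithm~\ref{alg:restr}, including the observation that Step~\ref{step:one} yields only singly exponentially many candidate sets and that the exponential-size systems in Step~\ref{step:four} are solvable in time polynomial in their size by Lemma~\ref{lem:ineq-system-properties}. No gaps; this is the paper's proof.
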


\begin{proof}
Given an arbitrary, not necessarily conjunctive ERCBox $\R$, we consider all Boolean valuations of the
semi-restricted cardinality constraints occurring in $\R$, and collect those that evaluate the positive Boolean
structure of $\R$ to true. For each of these valuations $\rho$, we consider the conjunctive ERCBox $\R_\rho$
that is the conjunction of all the semi-restricted cardinality constraints evaluated to true by $\rho$.
There are exponentially many such conjunctive ERCBoxes $\R_\rho$, but each of them has a size that is linearly bounded by
the size of $\R$. In addition, $\R$ is satisfiable iff one of the conjunctive ERCBox $\R_\rho$ obtained this way
is satisfiable.

Thus, it remains to prove that
Algorithm~\ref{alg:restr} indeed runs in exponential time on conjunctive ERCBoxes. To see this, first note that, 
according to Lemma~\ref{aug:type:comp}, 
there are only exponentially many augmented types, and they can be computed in exponential time. 
In the first step, we first need to check whether condition (\ref{cond:two}) is satisfied for 
exponentially many augmented types. This can clearly be done in exponential time.
Then, we consider all possible ways of choosing, for every individual $a$, 
an appropriate augmented type. Since the number of individuals is polynomial
and for each one there are at most exponentially many augmented types containing this individual
in the first component, there are only exponentially many sets that can be generated by a combination
of these choices. 

For each of the sets generated in the first step, 
the iteration between the other two steps can happen only exponentially often since in
each iteration at least one augmented type is removed. A single Step~\ref{step:three} takes only
exponential time since for each of the exponentially many augmented types $(t,V)$, only exponentially
many other augmented types need to be considered. Finally, a single Step~\ref{step:four} takes only
exponential time. In fact, we need to consider exponentially many systems of linear inequalities
$\phi_{T_\At}\wedge v_t\ge 1$. Each of these systems may be of exponential size, but its solvability can
be tested in time that is polynomial in this size, 
and thus exponential in the size of the input.
Lemma~\ref{lem:ineq-system-properties} is applicable since adding $v_t\ge 1$ does not destroy the specific
form of the system required by the lemma.
\end{proof}

One might ask whether the approach used here to deal with individuals in ABoxes could also be used to treat
nominals in concept descriptions, where a nominal is a concept that must be interpreted as a singleton set.
As usual in Description Logic, we write such a nominal as $\{o\}$ where $o$ is an individual name.
The answer to the above question is, unfortunately, negative. From a technical point of view, the
claim in the proof of Lemma~\ref{soundness:RCBoxes} is no longer correct since for a nominal it only
holds for $i =1$, but not for $i>1$. However, in the induction assumption we would need this for arbitrary $i$ and
not just for $i = 1$. Using a reduction from \cite{Tobi01a}, it is actually easy to see that adding nominals
increases the complexity of ERCBox consistency from \ExpTime to \NExpTime even for $\ALC$. As usual, we use $\mathcal{O}$ in the name
of the DL to indicate the presence of nominals.

\begin{proposition}
Consistency of conjunctive $\ALCO$ ERCBoxes is \NExpTime-complete.
\end{proposition}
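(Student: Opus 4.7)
The \NExpTime upper bound is immediate from the results already established. By Proposition~\ref{express:prop}, nominals are polynomially expressible in $\ALCplus$, and so are $\ALCSCC$ ECBoxes; since every conjunctive ERCBox is in particular an ECBox, any instance of conjunctive $\ALCO$ ERCBox consistency translates in polynomial time into a satisfiability problem for $\ALCplus$ concept descriptions. Theorem~\ref{NExp:upper:thm} then yields the desired \NExpTime bound, independently of whether numbers are encoded in unary or binary.

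For the matching lower bound, the plan is to reuse the reduction of Tobies~\cite{Tobi01a}, which shows that adding nominals to logics with number-like constraints raises the complexity of the satisfiability problem from \ExpTime to \NExpTime. Concretely, the target problem is an exponentially bounded domino tiling (or, equivalently, acceptance of a word by a \NExpTime-bounded nondeterministic Turing machine on a $2^n$-cell tape), and the encoding proceeds along standard lines: one introduces a concept $\mathit{Cell}$ that is forced by a binary-coded cardinality constraint $|\mathit{Cell}| = 2^{2n}$ to represent all cells of a $2^n \times 2^n$ grid; auxiliary concept names $X_0,\dots,X_{n-1}$ and $Y_0,\dots,Y_{n-1}$ encode the bits of the row and column coordinates of each cell; nominals $\{o_0\},\{o_1\}$ together with concept descriptions of the form $\exists r.\{o_i\}$ realize the bit-reading operation and the ``horizontal/vertical neighbour'' role edges; and the local tile-compatibility conditions are expressed by purely propositional $\ALC$ concept inclusions, themselves encoded as constraints $|C\sqcap \neg D|=0$ of ERCBox form.

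The main obstacle to carrying this through is not the existence of the encoding itself, but rather the restriction to the \emph{conjunctive semi-restricted} form imposed by Definition~\ref{RCBox:def}: no disjunction at the ERCBox level is allowed, only non-negative integer constants may appear as the standalone additive term on the left-hand side, and the concept names must appear positively with integer coefficients. I expect this restriction to be harmless, for the following reasons. Equalities $k = \ell$ decompose into a conjunction of two semi-restricted inequalities; inequalities needed in the reduction can, after renaming concepts to their negations, always be cast so that the non-negative-constant side carries $M$ and the remaining terms have non-negative coefficients; and any propositional reasoning that would naively require a disjunction on the ERCBox level (for instance, case analyses over tile types) can be pushed down into the $\ALCO$ concept descriptions, where disjunction is freely available. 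Verifying these rewritings is the routine but mildly tedious part, which is what the author means by ``easy to see.'' Once done, the NExpTime reduction from Tobies goes through and supplies the lower bound that matches the upper bound above, giving \NExpTime-completeness.
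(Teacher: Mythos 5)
Your upper bound is fine: reducing conjunctive $\ALCO$ ERCBox consistency to $\ALCplus$ satisfiability via Proposition~\ref{express:prop} and Theorem~\ref{NExp:upper:thm} is essentially the same move as the paper's (which goes through $\ALC$ ECBoxes instead), and both work.

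The lower bound, however, has a genuine gap, and it sits exactly where the proposition lives. In a semi-restricted constraint (Definition~\ref{RCBox:def}) the standalone constant $M$ may occur only on the left-hand (lower) side, so a nominal-free conjunctive ERCBox can never impose a constant \emph{upper} bound such as $|C|\leq 1$ or $|\mathit{Cell}|\leq 2^{2n}$: models of such ERCBoxes are closed under taking disjoint copies/scaling (this is precisely what keeps the nominal-free problem in \ExpTime, Theorem~\ref{exptime:thm}), so no constant upper bound is enforceable without nominals. Your plan posits the constraint $|\mathit{Cell}|=2^{2n}$ and claims the equality splits into two semi-restricted inequalities, with the awkward direction repaired by ``renaming concepts to their negations''; the $\geq$ half is fine, but the $\leq$ half is not of the allowed form, and the negation trick only trades $|\mathit{Cell}|\leq 2^{2n}$ for $|\top|\leq|\neg \mathit{Cell}|+2^{2n}$, i.e., the constant again lands on the forbidden side. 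The same problem affects every $({\leq}\,1\,C)$-style condition your tiling encoding needs. The repair is precisely the content of the paper's proof: route all upper bounds through a nominal, the one device yielding a concept of cardinality exactly one. Then $({\leq}\,1\,C)$ becomes $|C|\leq|\{o\}|$, and $({\leq}\,2^n{\cdot}2^n\,C)$ is obtained via a doubling chain $|A_0|\leq|\{o\}|\wedge|A_1|\leq 2|A_0|\wedge\ldots\wedge|A_{2n}|\leq 2|A_{2n-1}|$ (keeping all numbers small, so hardness holds even under unary coding), bounding $C$ by $A_{2n}$; with binary coefficients one could also write $|C|\leq 2^{2n}|\{o\}|$ directly. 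Incidentally, the paper does not rebuild the grid encoding from scratch: it observes that Tobies's \NExpTime-hardness proof for $\ALCQ$ with cardinality restrictions uses only $\ALC$ CIs and restrictions of the three forms $({\geq}\,1\,C)$, $({\leq}\,1\,C)$, $({\leq}\,2^n{\cdot}2^n\,C)$, so only these patterns need to be expressed. Until your sketch makes explicit that every upper-bound constraint is anchored to a nominal in this way, the hardness half of your argument does not go through.
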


\begin{proof}
Membership in \NExpTime follows from the fact that $\ALCO$ ERCBox can be expressed
using $\ALC$ ECBoxes, whose consistency problem was shown to be in \NExpTime in \cite{BaEc17}.

In \cite{Tobi01a}, Tobies has shown that consistency of $\ALCQ$ CBoxes is \NExpTime-hard,
using a reduction from a bounded tiling problem. Looking closer at this reduction, one sees that
actually only $\ALC$ concept descriptions, $\ALC$ CIs, and cardinality restrictions of the forms
$({\geq}\,1\,C)$, $({\leq}\,1\,C)$, and $({\leq}\,2^n{\cdot}2^n\,C)$ for $\ALC$ concepts $C$ are needed.
CIs and cardinality restrictions $({\geq}\,1\,C)$ can easily be expressed using semi-restricted cardinality constraints,
as introduced in Definition~\ref{RCBox:def}. Using a new nominal $\{o\}$, we can express $({\leq}\,1\,C)$
as $|C|\leq|\{o\}|$. To express $({\leq}\,2^n{\cdot}2^n\,C)$, we need a new nominal and additional
auxiliary new concept names: the constraints
$$
|A_0|\leq|\{o\}|\wedge |A_1|\leq 2|A_0|\wedge \ldots\wedge |A_{2n}|\leq 2|A_{2n-1}|
$$
ensures that the cardinality of $A_{2n}$ is bounded by $2^{2n} = 2^n{\cdot}2^n$.
\end{proof}

\section{Undecidability of~\hmath$\ALCIplus$}

We next observe that a seemingly harmless extension of $\ALCplus$ turns the satisfiability problem undecidable.
We obtain $\ALCIplus$ by adding \emph{role inverses} to $\ALCplus$ by additionally allowing expressions of the form $r^-$ for any $r \in N_R$ in all places where role names are allowed to occur. The semantics of the expression $r^-$ is defined by $(r^-)^\I = \{(e,d) \mid (d,e)\in r^\I\}$. The key insight for showing our result is that adding this feature enables us to encode multiplication of concept extensions, allowing for a reduction from Hilbert's tenth problem. We first provide an example illustrating how ``class extension multiplication'' can be expressed.

\begin{example}
In order to express that the cardinality of a concept $C$ coincides with the product of the cardinalities of concepts $A$ and $B$, we employ two auxiliary roles $r$ and $s$. We first enforce that role $r$ connects precisely each member of $A$ with every member of $B$:
$$
A \equiv \exists r.\top
\quad\quad
B \equiv \exists r^-.\top
\quad\quad
A \sqsubseteq sat(B=r)
\quad\quad
B \sqsubseteq sat(A=r^-)
$$
Next, we make sure that every domain element has precisely as many outgoing $r$ roles as outgoing $s$ roles:
$$\top \sqsubseteq sat(|r|=|s|)$$
Moreover, the elements with incoming $s$ roles are precisely the instances of concept $C$:
$$C \equiv \exists s^-.\top$$
Finally, no element can have more than one incoming $s$ role (in other words, $s$ is inverse functional):
$$\top \sqsubseteq sat(|s^-| \leq 1)$$ 
\end{example}

A construction very much along the lines of the given example allows us to express Hilbert's tenth problem as an $\ALCIplus$ concept satisfiability problem and hence establish undecidability of the latter. 

\begin{theorem}
Satisfiability of $\ALCIplus$ concept descriptions is undecidable.
\end{theorem}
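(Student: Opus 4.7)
Our plan is to reduce from Hilbert's tenth problem, i.e., the (undecidable) question of whether a Diophantine equation $P(x_1, \ldots, x_n) = 0$ with integer coefficients has a non-negative integer solution. Rewriting $P = 0$ equivalently as $Q = R$, where $Q$ and $R$ are polynomials with non-negative integer coefficients, it suffices to build in polynomial time an $\ALCIplus$ concept description $E_P$ that is satisfiable in a finite interpretation iff $Q = R$ admits a solution in $\Nat^n$.

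The key primitives are addition and multiplication of concept cardinalities. Addition is immediate: to force $|S| = |A|+|B|$ we use a fresh concept name $S$ and conjoin the global constraints $\constr(S = A \cup B)$ and $\constr(|A \cap B| = 0)$. Multiplication is precisely the content of the example preceding the theorem: for every product node $|P| = |A| \cdot |B|$ we introduce a fresh concept $P$ together with a fresh pair of auxiliary roles $r_P, s_P$ and conjoin global versions of the axioms from that example, which together imply $|P^{\I}| = |A^{\I}| \cdot |B^{\I}|$ in any finite model. Different product gadgets use disjoint fresh roles and hence do not interfere. The example is phrased using CIs, but each CI of the form $C \sqsubseteq D$ can be absorbed into the top-level concept description via a global constraint $\constr(|C \cap D^c| = 0)$, whose extension in any interpretation is either the full domain or empty; as long as we require $E_P$ to have at least one witness, all such global constraints are then enforced everywhere. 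Starting from concept names $X_1, \ldots, X_n$ for the variables, we inductively build concepts $C_Q$ and $C_R$ mirroring the syntax trees of $Q$ and $R$, and finally conjoin $\constr(|C_Q| = |C_R|)$.

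Correctness in one direction is routine: from any finite model $\I$ of $E_P$ the values $a_i := |X_i^{\I}|$ form a solution to $Q = R$, as is verified by straightforward structural induction over the subexpression trees mirroring the construction of $C_Q$ and $C_R$. The main technical obstacle is the converse direction: given a solution $(a_1, \ldots, a_n) \in \Nat^n$, we must exhibit a single finite interpretation that simultaneously satisfies every gadget. We proceed bottom-up along the subexpression structure: for each sub-term $T$ with computed value $v_T$ we pick a pairwise-disjoint set $C_T^{\I}$ of cardinality $v_T$, and for each product sub-term $T = T_1 \cdot T_2$ we populate $r_T^{\I}$ as the full bipartite product $C_{T_1}^{\I} \times C_{T_2}^{\I}$ and define $s_T^{\I}$ so that every element has as many outgoing $s_T$-edges as outgoing $r_T$-edges while its inverse is functional with image exactly $C_T^{\I}$, which is feasible because $|r_T^{\I}| = v_{T_1} \cdot v_{T_2} = v_T = |C_T^{\I}|$. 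Since fresh roles are used per gadget, no conflicts arise across the recursion, and the resulting interpretation witnesses $E_P$, completing the reduction and hence the undecidability proof.
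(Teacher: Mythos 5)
Your overall strategy coincides with the paper's: reduce from Hilbert's tenth problem, realize multiplication of concept cardinalities via an auxiliary role whose edge count is $|A|\cdot|B|$ together with an inverse-functional role collecting these edges, and handle addition and constants by cardinality constraints. However, importing ``the axioms from that example'' verbatim into each product gadget introduces a side condition that breaks your converse direction. The axioms $A \equiv \exists r_P.\top$ and $B \equiv \exists r_P^{-}.\top$ force, in \emph{every} model, that $A^{\I}=\emptyset$ iff $B^{\I}=\emptyset$ (if $A^{\I}=\emptyset$ there are no $r_P$-edges at all, hence $B^{\I}=\emptyset$; if $B^{\I}=\emptyset$, an element of $A^{\I}$ could not have the required outgoing $r_P$-edge). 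This is not a consequence of $|P|=|A|\cdot|B|$, and there are solvable instances all of whose solutions violate it: for example $x y+(y-1)^2=0$, rewritten with non-negative coefficients as $x y+y^2+1=2y$, has the unique natural solution $x=0$, $y=1$; any model of your $E_P$ would have to satisfy $|X_x^{\I}|=0$ and $|X_y^{\I}|=1$, contradicting the forced equivalence, so $E_P$ is unsatisfiable although the equation is solvable --- the reduction as stated is not faithful. Concretely, your model construction step ``populate $r_T^{\I}$ as the full bipartite product'' cannot satisfy $B\equiv\exists r_T^{-}.\top$ when $v_{T_1}=0<v_{T_2}$. The paper's actual proof avoids this by a leaner one-role gadget: non-$A_y$ elements have no outgoing $s_{eq}$-edges, $A_y$ elements have exactly $|A_z|$ of them, $s_{eq}$ is inverse functional, and $A_x$ is exactly the set of elements with an incoming $s_{eq}$-edge; this yields $|A_x|=|A_y|\cdot|A_z|$ also when a factor is $0$. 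Alternatively, you could first pass to solvability over \emph{positive} integers (also undecidable) so that all factor concepts are forced non-empty; either repair closes the gap.

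A secondary, smaller issue is your addition gadget: encoding $|S|=|A|+|B|$ via $\constr(S = A\cup B)\sqcap\constr(|A\cap B|=0)$ fails when the two summands are the same concept (e.g.\ a sub-term $x+x$ forces $|X_i|=0$), and your write-up leaves the treatment of coefficients and numeric constants implicit. Since PA expressions already allow addition and multiplication by integer constants, the direct constraints $\constr(|S|=|A|+|B|)$ and $\constr(|A_x|=n)$ (as used in the paper) avoid all of this with no disjointness bookkeeping.
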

\begin{proof}
We show the claim via a reduction from Hilbert's tenth problem, i.e., the solvability of Diophantine equations.
Note that any Diophantine equation $D$ can be transformed (possibly introducing fresh auxiliary variables) into a system $\E$ of equations, where each equation has one of the following three forms: (i)~$x=y \cdot z$, (ii)~$x= y + z$, or (iii)~$x = n$, for a natural number $n$, such that $D$ has an integer solution if and only if $\E$ has a solution in the natural numbers.

Given such a system $\E$ of equations over a set $Var$ of variables, we now construct an $\ALCIplus$ concept expression $C_\E$ containing concept names $A_x$ for all variables $x$ occurring in $\E$, such that satisfiability of $C_\E$ coincides with the existence of a natural solution for $\E$. 
We let $C_\E = \bigsqcap_{eq\in \E} C_{eq}$, where $C_{eq}$ stands for
\begin{itemize}
\item the concept expression $C^\mathrm{1}_{eq} \sqcap C^\mathrm{2}_{eq} \sqcap C^\mathrm{3}_{eq} \sqcap C^\mathrm{4}_{eq}$ if $eq$ is of the form $x=y \cdot z$, where
\begin{itemize}
\item 
$C^\mathrm{1}_{eq} = sat\big(\neg A_y  \subseteq  sat(|s_{eq}| {=} 0)\big)$,
\item 
$C^\mathrm{2}_{eq} = sat\big(A_y  \subseteq  sat(|s_{eq}| {=} |A_z|)\big)$,
\item 
$C^\mathrm{3}_{eq} = sat\big(\top \subseteq  sat(|s_{eq}^-| {\leq} 1)\big)$, and
\item 
$C^\mathrm{4}_{eq} = sat\big(A_x = sat(|s_{eq}^-|{\geq} 1)\big)$.
\end{itemize}
\item $sat(|A_x| = |A_y|+|A_z|)$ if $eq$ is of the form $x= y + z$, 
\item $sat(|A_x| = n)$ if $eq$ is of the form $x = n$.
\end{itemize}
We now show that $\E$ has a solution in the natural numbers if and only if $C_\E$ is satisfiable.

For the ``if'' direction, assume there is some finite interpretation $\I$ and domain element $d \in \Delta^\I$ such that $d \in C_\E^\I$. Let $\sigma: Var \to \mathbb{N}$ be the variable assignment mapping every variable $x$ in $\E$ to $|A_x^\I|$. Then, clearly, $\sigma$ maps every equation $eq$ of the form (ii) or (iii) to a true statement due to $\delta \in C_{eq}^\I$. Now consider some equation $x=y+z$ of the form (i).
For this, we obtain  
$$
\begin{array}{ll}
|A_x^\I| & = |(sat(|s^-|{\geq} 1))^\I| \mbox{ due to $C^4_{eq}$}\\
& = |\{e \mid (e',e) \in s_{eq}^\I \}| \\
& = |\{(e',e) \mid (e',e) \in s_{eq}^\I \}| \mbox{ due to $C^3_{eq}$}\\ 
& = |s_{eq}^\I| \\ 
& = \sum_{e' \in \Delta^\I} |\{e \mid (e',e) \in s_{eq}^\I \}| \\
& = \sum_{e' \in A_y^\I} |\{e \mid (e',e) \in s_{eq}^\I \}| \mbox{ due to $C^1_{eq}$}\\
& = \sum_{e' \in A_y^\I} |A_z^\I| \mbox{ due to $C^2_{eq}$}\\
& = |A_y^\I| \cdot |A_z^\I|,\\
\end{array}
$$
which finishes the proof of the ``if'' direction.

For the ``only if'' direction, let $\sigma: Var \to \mathbb{N}$ be a variable mapping satisfying all equations in $\E$. We now construct a model $\I$ of $C_{eq}$ as follows:
\begin{itemize}
\item $\Delta^\I = \{n \in \mathbb{N} \mid 1 \leq n\leq \max_{v \in Var} \sigma(v)\}$ 
\item $A_v^\I = \{n \in \mathbb{N} \mid 1 \leq n \leq \sigma(v)\}$
\item $s_{x=y\cdot z}^\I = \{(i,k\cdot \sigma(z)+i) \mid 0 \leq k \leq \sigma(y)-1,\ 1 \leq i \leq \sigma(z)\}$
\end{itemize}
It is straightforward to check modelhood of $\I$. This concludes the ``only if'' direction and hence the proof.
\end{proof}

\section{Query entailment in~\hmath$\ALCplus$}

The final result of this section is the undecidability of conjunctive query entailment for $\ALCplus$. To this end, we first briefly recap the notion of (Boolean) conjunctive queries and define query entailment. 

\medskip

In queries, we use \emph{variables} from a countably infinite set $\Vlang$.
A Boolean \emph{conjunctive query} (CQ) $q$ is a finite set of atoms of the form $r(x,y)$ or $C(z)$, where $r$ is a role, $C$ is concept, and $x,y,z \in \Vlang$.
A CQ $q$
is \emph{satisfied} by $\I$ (written: $\I \models q$) if there is a \emph{variable assignment} $\pi:\Vlang\to\Delta^\I$ (called \emph{match}) such that $\tuplei{\pi(x),\pi(y)}\in r^\I$ for every $r(x,y) \in q$ and $\pi(z)\in C^\I$ for every $C(z) \in q$.
A CQ $q$ is \emph{(finitely) entailed} from a knowledge base $\mathcal{K}$ (written: $\mathcal{K} \models q$) if every (finite) model $\I$ of $\mathcal{K}$ satisfies $q$.

\medskip

We actually show undecidability of CQ entailment for a much weaker logic, thereby providing a very restricted fragment of constant-free and equality-free two-variable first-order logic for which finite CQ entailment is already undecidable, significantly strengthening and solidifying earlier results along those lines~\cite{PrHa09}. Our proof makes use of deterministic Turing machines (DTMs). For our purposes, it is sufficient to consider only computations starting with an empty tape. For space reasons, we assume the reader to be familiar with standard notions and constructions concerning DTMs. We call a DTM \emph{looping} if its run starting contains repeating configurations,i.e., there are two different (and hence -- due to determinism -- infinitely many) points in time, where the machine's tape content, head position, and state are the same. 
It is easy to see that the problem of determining if a given TM is looping is undecidable. 

\medskip

We show our undecidability result for the DL $\ALC{^\mathrm{cov}}$, a slight extension of $\ALC$ by \emph{role cover axioms} of the form $\mathrm{cov}(r,s)$ for role names $r$ and $s$. An interpretation $\mathcal{I}$ satisfies $\mathrm{cov}(r,s)$ if $r^\mathcal{I} \cup s^\mathcal{I} = \Delta^\mathcal{I} \times \Delta^\mathcal{I}$.
Role cover axioms can be expressed in $\ALCplus$ via $sat\big(\top \subseteq sat (|r \cup s|=|\mathcal{U}|)\big)$, hence $\ALC{^\mathrm{cov}}$ is subsumed by $\ALCplus$.

\medskip

In what follows, assume that a DTM $\mathcal{M}$ is given.
We now describe an $\ALC{^\mathrm{cov}}$ TBox $\mathcal{T}$ and conjunctive query $ \query$ such that $\mathcal{T} \models  \query$ exactly if $\mathcal{M}$ is not looping.
We provide $ \query$ and $\mathcal{T}$ together with the underlying intuitions of our construction.
The goal of our construction is that a countermodel (i.e., an interpretation satisfying $\mathcal{T}$ but not $ \query$) corresponds to a looping configuration sequence of $\mathcal{M}$. Thereby, the domain elements represent tape cells at certain computation steps of $\mathcal{M}$. The role $h$ connects consecutive tape cells of the same configuration, whereas the role $v$ connects a configuration's tape cell with the same tape cell of the successor configuration.       

We start by providing the query. Intuitively, the query is meant to catch the unwanted situation that two corresponding tape cells of consecutive configurations are $v$-connected, but the cells to their right aren't.   
\begin{equation}
\query = \exists x,y,x',y'. v(x,y) \wedge h(x,x') \wedge h(y,y') \wedge \overline{v}(x',y')
\label{eqn:query}
\end{equation}
We proceed by giving the axioms of $\mathcal{T}$.
The following covering axiom ensures that, whenever two elements are not $v$-connected, they must be $\overline{v}$-connected. This is needed to enable the above query to catch the described problem.  
\begin{equation}
\mathrm{cov}(v,\overline{v})
\end{equation}
The remaining TBox axioms can be found in Table~\ref{TMaxioms}.
\begin{table}[t]
\caption{TBox axioms for DTM implementation \label{TMaxioms}}
\vspace{-3ex}
\begin{equation}
\top \sqsubseteq \exists aux.(\mathit{TapeStart} \sqcap \mathit{InitConf} \sqcap \mathit{State}_{ \state_\mathrm{ini}})
\label{eqn:origin}
\end{equation}
\begin{equation}
\top \sqsubseteq \exists h.\top \sqcap \exists v.\top
\label{eqn:succs}
\end{equation}
\begin{equation}
\mathit{TapeStart} \sqsubseteq \forall v.\mathit{TapeStart}
\label{eqn:tapestart}
\end{equation}
\begin{equation}
\mathit{InitConf} \sqsubseteq \forall h.\mathit{InitConf} \quad\quad
\mathit{InitConf} \sqsubseteq\mathit{Symbol}_\Box
\label{eqn:initconf}
\end{equation}
\begin{eqnarray}
& & \hspace{-5ex}\mathit{State}_ \state \sqsubseteq\forall h.\mathit{NoHeadR} \quad\ \   
\mathit{NoHeadR} \sqsubseteq \forall h.\mathit{NoHeadR} \quad  
\mathit{NoHeadR} \sqsubseteq \mathit{NoHead} 
\label{eqn:uniquehead1}\\
& & \hspace{-8.3ex} \exists h.State_ \state \sqsubseteq \mathit{NoHeadL} \quad\quad\!  
\exists h.\mathit{NoHeadL} \sqsubseteq \mathit{NoHeadL} \quad\quad  \ \; 
\mathit{NoHeadL} \sqsubseteq \mathit{NoHead} 
\label{eqn:uniquehead2}\\
& & \hspace{28.3ex}\mathit{State}_ \state \sqcap \mathit{NoHead} \sqsubseteq \bot 
\label{eqn:uniquehead3}
\end{eqnarray}
\begin{equation}
Symbol_\sigma \sqcap\mathit{Symbol}_{\sigma'} \sqsubseteq \bot \quad\quad\quad
State_ \state \sqcap\mathit{State}_{ \state'} \sqsubseteq \bot 
\label{eqn:exclusive}
\end{equation}
\begin{equation}
NoHead \sqcap\mathit{Symbol}_\sigma \sqsubseteq \forall v.Symbol_\sigma
\label{eqn:inertia}
\end{equation}
\begin{eqnarray}
State_ \state \sqcap\mathit{Symbol}_\sigma  &  \sqsubseteq  & \forall v.(Symbol_{\sigma'} \sqcap \forall h.State_{ \state'})
\label{eqn:action1}
\\ 
\exists h.(State_ \state \sqcap\mathit{Symbol}_\sigma) &   \sqsubseteq  & \forall v.(State_{ \state'} \sqcap \forall h.Symbol_{\sigma'})
\label{eqn:action2}
\\ 
\mathit{TapeStart} \sqcap\mathit{State}_ \state \sqcap \mathit{Symbol}_\sigma &   \sqsubseteq  & \forall v.(State_{ \state'} \sqcap\mathit{Symbol}_{\sigma'})
\label{eqn:action3} 
\end{eqnarray}
\end{table}
Axiom~\ref{eqn:origin} ensures (by means of an auxiliary role $aux$ which serves no further purpose) that there is a first tape cell of the first (initial) configuration where the head of the TM is positioned in the initial state.
Axiom~\ref{eqn:succs} enforces that for every cell of every configuration there is both a tape cell to its right and a corresponding tape cell in the successor configuration.
Axiom~\ref{eqn:tapestart} makes sure that, for every cell that is the first on its tape, the corresponding successor configuration's tape cell is also the first.
Axioms~\ref{eqn:initconf} propagates the information that a cell belongs to the initial configuration  along the tape, and fills the tape with blanks.
Axioms~\ref{eqn:uniquehead1}--\ref{eqn:uniquehead3} (instantiated for every state~$\state$) make sure that in every configuration there can only be one cell where the head is positioned. 
Every cell can only carry one symbol and the head can be in only one state, as ensured by Axioms~\ref{eqn:exclusive} (for distinct symbols $\sigma, \sigma'$ and distinct states $ \state,  \state'$). 
Thanks to Axiom~\ref{eqn:inertia}, symbols on head-free cells carry over to the next configuration.
As specified by the DTM's transition function, the head reads a symbol $\sigma$, writes a symbol $\sigma'$, changes its state from $ \state$ to $ \state'$ and moves
right (Axiom~\ref{eqn:action1}) or left (Axiom~\ref{eqn:action2}) or stays in its place whenever it is supposed to move left but is already at the leftmost tape cell (Axiom~\ref{eqn:action3}).
This finishes the description of the TBox $\mathcal{T}$, allowing us to establish the claimed property 
and consequenty the undecidability result.

\begin{proposition}
$\mathcal{M}$ is looping iff 
there is a finite model $\mathcal{I}$ of $\mathcal{T}$ with $\mathcal{I} \not\models Q$.\label{prop:finitemodel}
\end{proposition}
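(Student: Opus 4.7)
The plan is to establish the biconditional by treating the two directions separately, with the forward implication (looping $\Rightarrow$ countermodel) carried out as an explicit model construction and the backward implication (countermodel $\Rightarrow$ looping) carried out as an extraction argument that crucially exploits the grid property enforced by the absence of a $\query$-match together with the covering axiom $\mathrm{cov}(v,\overline{v})$.

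For the forward direction, I would start from the assumption that $\mathcal{M}$ reaches identical configurations at two distinct times $t_1 < t_2$. Since these configurations coincide, only a bounded initial segment of the tape is ever touched; let $K$ be the rightmost cell visited by the head during steps $0,\ldots,t_2-1$. I would construct $\mathcal{I}$ with domain $\{0,\ldots,t_2-1\} \times \{0,\ldots,K+1\}$, reading $(t,i)$ as ``cell $i$ at step $t$''. The role $h^\mathcal{I}$ would link $(t,i)$ to $(t,i+1)$, with a blank self-loop on $(t,K+1)$ so that Axiom~\eqref{eqn:succs} is satisfied; $v^\mathcal{I}$ would link $(t,i)$ to $(t+1,i)$ for $t<t_2-1$ and link $(t_2-1,i)$ back to $(t_1,i)$ to close the loop; $\overline{v}^\mathcal{I}$ would be the set-theoretic complement of $v^\mathcal{I}$, trivially satisfying $\mathrm{cov}(v,\overline{v})$; the auxiliary role would map every element to $(0,0)$; and the concept-name extensions would be read off from the actual computation of $\mathcal{M}$. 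A routine inspection against Axioms~\eqref{eqn:origin}--\eqref{eqn:action3} would confirm modelhood, and since $v^\mathcal{I}$ and $h^\mathcal{I}$ commute by construction, no match for $\query$ can arise.

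For the backward direction, I would fix a finite model $\mathcal{I} \models \mathcal{T}$ with $\mathcal{I} \not\models \query$ and a seed element $d_0$ produced by Axiom~\eqref{eqn:origin}. Axiom~\eqref{eqn:tapestart} confines every $v$-chain starting at $d_0$ to $\mathit{TapeStart}^\mathcal{I}$, while Axiom~\eqref{eqn:succs} ensures such chains never stall, so by finiteness some $d_0^{(t_1)}=d_0^{(t_2)}$ with $t_1<t_2$ must occur along the chain $d_0^{(0)},d_0^{(1)},d_0^{(2)},\ldots$ of $v$-successors. The core step is then an induction on $t$ proving that the configuration read off from $d_0^{(t)}$ (by following any $h$-successor walk and inspecting $\mathit{State}_\state$ and $\mathit{Symbol}_\sigma$ annotations) coincides with the $t$-th configuration of $\mathcal{M}$: the base case follows from the annotations on $d_0$, and the step uses the transition axioms~\eqref{eqn:inertia}--\eqref{eqn:action3} combined with the grid property extracted from $\mathcal{I} \not\models \query$ via $\mathrm{cov}(v,\overline{v})$, namely that whenever $v(x,y)$, $h(x,x')$ and $h(y,y')$ all hold then $v(x',y')$ holds too. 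Applying this induction at $t_1$ and $t_2$ yields identical DTM configurations at these two steps, which is precisely the looping condition.

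The main obstacle I anticipate is the inductive step in the backward direction: $v^\mathcal{I}$ and $h^\mathcal{I}$ are in general just binary relations, and $\mathcal{I}$ may contain many distinct ``copies'' of a given logical cell, so one needs the head-uniqueness axioms~\eqref{eqn:uniquehead1}--\eqref{eqn:uniquehead3} together with the mutual-exclusivity constraints~\eqref{eqn:exclusive} to pin down the configuration extracted from a $\mathit{TapeStart}$ element to a single, well-defined DTM configuration, and then the grid property to propagate this configuration coherently through the $v$-chain despite the non-functional nature of $v^\mathcal{I}$. Carefully verifying that each transition axiom interacts correctly with the commutation of $v$ and $h$ so that the deterministic transitions of $\mathcal{M}$ are recovered is where the bulk of the bookkeeping will reside.
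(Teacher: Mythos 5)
Your plan is correct and is essentially the argument the paper intends (Proposition~\ref{prop:finitemodel} is stated there without an explicit proof, only via the intuitive reading of the axioms): the wrapped grid $\{0,\dots,t_2-1\}\times\{0,\dots,K+1\}$ with an $h$-self-loop on the last column and $\overline{v}$ taken as the complement of $v$ indeed satisfies Axioms~\ref{eqn:origin}--\ref{eqn:action3} and admits no match of $\query$, and conversely the grid property obtained from $\mathrm{cov}(v,\overline{v})$ together with $\mathcal{I}\not\models\query$, combined with the head-uniqueness and exclusivity axioms, supports your induction recovering the run of $\mathcal{M}$ along the repeating $\mathit{TapeStart}$ $v$-chain, whose repetition by finiteness yields looping. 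The only caveat, shared with the paper's implicit setup, is that the backward induction needs every configuration to have a successor (a total transition function), since for a machine that halts without looping one could otherwise still build a finite countermodel by continuing with head-free rows.
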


\begin{theorem}
Finite CQ entailment over $\ALC{^\mathrm{cov}}$ TBoxes is undecidable.
\end{theorem}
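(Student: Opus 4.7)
The plan is to derive the theorem as an almost immediate corollary of Proposition~\ref{prop:finitemodel}. First I would restate that proposition in terms of finite entailment: $\mathcal{T} \models q$ holds in every finite model of $\mathcal{T}$ if and only if $\mathcal{M}$ is \emph{not} looping. Since the encoding of an arbitrary DTM $\mathcal{M}$ into the pair $(\mathcal{T}, q)$ is effective---the query $q$ is fixed and the TBox $\mathcal{T}$ has a constant number of schematic axioms instantiated over the finite sets of states, symbols and transitions of $\mathcal{M}$---this constitutes a computable many-one reduction from the non-looping problem to finite CQ entailment over $\ALC{^\mathrm{cov}}$ TBoxes.

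Next I would invoke the undecidability of the looping problem for DTMs on empty input (as noted in the text just before the construction). Its complement, non-looping, is therefore also undecidable, and via the reduction just described, so is finite CQ entailment over $\ALC{^\mathrm{cov}}$ TBoxes. Since $\ALC{^\mathrm{cov}}$ is subsumed by $\ALCplus$---because role cover axioms are expressible via $\constr\bigl(\top \subseteq \constr(|r \cup s| = |\mathcal{U}|)\bigr)$, as already observed---one obtains as an easy byproduct that finite CQ entailment over $\ALCplus$ knowledge bases is undecidable too.

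No genuine obstacle arises at the level of the theorem itself; the entire substance of the argument is concentrated in Proposition~\ref{prop:finitemodel}. The proof of that proposition, which one would carry out separately, requires verifying two directions: (i)~every finite interpretation satisfying $\mathcal{T}$ but not $q$ must encode, via the $h$- and $v$-edges rooted at the $\mathit{InitConf}$-element forced by Axiom~\ref{eqn:origin}, a configuration sequence that faithfully mirrors $\mathcal{M}$'s deterministic run (with finiteness forcing the sequence to repeat, hence $\mathcal{M}$ loops); and (ii)~conversely, a looping run of $\mathcal{M}$ can be unfolded into a finite interpretation $\mathcal{I}$ with a grid-like $(h,v)$-structure in which the query has no match, because the $v$-edges align perfectly with $h$-successors across consecutive configurations, precluding the forbidden pattern $v(x,y) \wedge h(x,x') \wedge h(y,y') \wedge \overline{v}(x',y')$.
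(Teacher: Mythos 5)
Your proposal is correct and follows exactly the paper's route: the theorem is obtained as an immediate reduction via Proposition~\ref{prop:finitemodel} (whose proof the paper likewise leaves separate), combining the effectiveness of the $(\mathcal{T},\query)$ construction with the undecidability of the DTM looping problem, so that $\mathcal{M}$ loops iff $\mathcal{T}$ does not finitely entail $\query$. Your closing remark about $\ALCplus$ matches the paper's subsequent corollary as well.
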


\begin{proof}
According to Proposition~\ref{prop:finitemodel}, the TM looping problem can be reduced to the problem if for a given $\ALC{^\mathrm{cov}}$ TBox $\mathcal{T}$ and conjunctive query $ \query$, there is a finite interpretation $\mathcal{I}$ with $\mathcal{I} \models \mathcal{T}$ with $\mathcal{I} \not\models  \query$. Note that the latter is the case exactly if $\mathcal{T}$ does not finitely entail $ \query$.
\end{proof}

Finally, taking into account that \ALCplus subsumes $\ALC{^\mathrm{cov}}$ and only allows for finite models, we obtain the wanted result.

\begin{corollary}
Conjunctive query entailment for \ALCplus is undecidable. 	
\end{corollary}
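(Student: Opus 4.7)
The plan is to reduce the previous undecidability result (finite CQ entailment over $\ALC{^\mathrm{cov}}$ TBoxes) to CQ entailment over $\ALCplus$ knowledge bases. The key observations have already been established in the excerpt: first, every role cover axiom $\mathrm{cov}(r,s)$ is expressible as the $\ALCplus$ concept $\constr(\top \subseteq \constr(|r \cup s| = |\universe|))$ which, when required to be universally satisfied (equivalently, conjoined as a top-level constraint), forces $r^\I \cup s^\I = \Delta^\I \times \Delta^\I$; second, $\ALCplus$ TBoxes and ABoxes are expressible as $\ALCplus$ concepts by Proposition~\ref{express:prop}; and third, the semantics of $\ALCplus$ is defined only over finite interpretations, so entailment in $\ALCplus$ automatically coincides with finite entailment.

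Given an $\ALC{^\mathrm{cov}}$ TBox $\T$ together with a Boolean CQ $\query$, I would construct an $\ALCplus$ concept $E_\T$ as the conjunction of (i) the translation of each CI $C \sqsubseteq D$ in $\T$ via a top-level constraint expression enforcing $C \sqsubseteq D$ globally (using the standard $\ALCplus$ encoding of CIs from Proposition~\ref{express:prop}), and (ii) for each role cover axiom $\mathrm{cov}(r,s) \in \T$, the conjunct $\constr(\top \subseteq \constr(|r \cup s| = |\universe|))$. Since all conjuncts here are constraints that are evaluated globally (each such $\constr(\cdot)$ is either satisfied by all domain elements or by none), we have: a finite interpretation $\I$ is a model of $\T$ iff $E_\T^\I = \Delta^\I$ iff $E_\T^\I \neq \emptyset$.

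Hence, for any Boolean CQ $\query$, the equivalence $\T \models_{\mathrm{fin}} \query$ iff $E_\T \models \query$ (in the $\ALCplus$ sense, i.e., every finite model of $E_\T$ satisfies $\query$) holds: any finite countermodel of the former gives a finite interpretation that satisfies $E_\T$ and refutes $\query$, and conversely any finite $\I$ with $E_\T^\I \neq \emptyset$ and $\I \not\models \query$ is a finite model of $\T$ refuting $\query$. Applying the undecidability of finite CQ entailment over $\ALC{^\mathrm{cov}}$ TBoxes established in the previous theorem yields undecidability of CQ entailment in $\ALCplus$.

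There is no real obstacle here; the work has already been done in Proposition~\ref{express:prop} and in setting up $\ALCplus$ with finite-model semantics. The only small point to verify carefully is that the global semantics of top-level constraint expressions in $\ALCplus$ correctly internalizes both the CIs of $\T$ and the role cover axioms, which is immediate from the fact that concepts act as global set variables under $\subs{\I}{d}$ regardless of $d$.
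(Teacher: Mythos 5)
Your proposal is correct and follows essentially the same route as the paper: reduce from the undecidability of finite CQ entailment over $\ALC^{\mathrm{cov}}$ TBoxes, using the fact that role cover axioms are expressible via $\constr(\top \subseteq \constr(|r \cup s| = |\universe|))$, that CIs are internalizable by Proposition~\ref{express:prop}, and that $\ALCplus$ admits only finite models, so entailment coincides with finite entailment. The paper states this in one sentence; you merely spell out the same argument in more detail.
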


\section{Decidable querying for~\hmath$\ALCSCC$}

In stark contrast to the undecidability result just presented, we prove that conjunctive query entailment by~~$\ALCSCC$ ABoxes w.r.t.~$\ALCSCC$ ERCBoxes
is only~\ExpTime-complete, thus not harder than deciding knowledge base consistency for plain~$\ALC$.

Our result employs a construction by Lutz~\cite{Lutz08}, but careful and non-trivial argumentation is needed to show that the idea, conceived for arbitrary models, carries over to our finite-model case.
The approach reduces entailment of some CQ $q$ to an exponential number of \ExpTime inconsistency checks in the spirit of Theorem~\ref{exptime:thm}, resulting in an overall \ExpTime procedure.  
In their entirety, these mentioned checks verify if some model exists that does not admit any matches of $q$ having a specific, forest-like shape. 

It remains to argue that these specific, forest-shaped query matches of~$q$ are the only ones that matter for checking entailment. To this end, we show that all other matches can be ``removed'' by a model transformation consisting of the following three consecutive steps: (i) forward-unraveling, resulting in possibly-infinite structures (in Section~\ref{sec:unrav}) then (ii) cautious collapsing to regain 
finiteness while keeping the model ``forest-like enough'' for small conjunctive queries to match only in a tree-shaped way (in Section~\ref{sec:loose})
and finally (iii) enriching the model by copies of domain elements to again satisfy the global counting constraints which had possibly become violated in the course of the previous steps  (in Section~\ref{sec:dupl}).

To the end of this Section let~$\kb_0 = (\abox_0, \tbox_0, \ercbox_0)$ be an~$\ALCSCC$ 
knowledge base composed of an ABox~$\abox_0$, a Tbox~$\tbox_0$ and an ERCBox~$\ercbox_0$.
Without loss of generality we will assume that~$\kb_0$ is \emph{normalized}, i.e. all concepts appearing in~$\tbox_0$
are of depth at most one and all concepts occurring in~$\abox_0$ and~$\ercbox_0$ are atomic. This can be done via a routine transformations.

\subsection{The construction of sufficiently tree-like models} \label{subsec:nacyclic}

We start with some preliminary definitions on morphism, neighbourhoods and bisimulations.

\paragraph*{Morphisms.}
A \emph{homomorphism} from an interpretation~$\I$ to an interpretation~$\J$ is a function~$\homo : \I \rightarrow \J$ 
satisfying for all concept names~$A$ and all role names~$r$ the following properties: 
if~$d \in A^{\I}$ then~$\homo(d) \in A^{\J}$ and if~$(d,d') \in r^{\I}$ then~$\big( \homo(d), \homo(d') \big) \in r^{\J}$.
An \emph{isomorphism} is a bijection~$\izo$ such that both~$\izo$ and~$\izo^{-1}$ are homomorphisms.

\paragraph*{Neighbourhoods.} For a given interpretation~$\I$ and an element~$d \in \DI$ we denote with~$\SuccI{\I}{d}$ 
the set of role successors of~$d$, i.e. the set~$\bigcup_{r \in N_R} \{ d' : (d,d') \in r^{\I}\}$. 
Note that it is possible that~$d \in \SuccI{\I}{d}$. The \emph{forward neighbourhood} (or simply 
\emph{neighbourhood})~$\Neib{\I}{d}$ of~$d$ is the interpretation~$\Neib{\I}{d} = (\Delta^{\Neib{\I}{d}}, \cdot^{\Neib{\I}{d}})$ 
such that~$\Delta^{\Neib{\I}{d}} = \SuccI{\I}{d} \cup \{ d \}$, $A^{\Neib{\I}{d}} = A^{\I} \cap \Delta^{\Neib{\I}{d}}$ for any 
concept name~$A \in N_C$ and~$r^{\Neib{\I}{d}} = r^{\I} \cap (\{ d \} \times \Delta^{\I})$ for any role name~$r \in N_R$.

The next definition introduces a notion of bisimulation tailored to normalized~$\ALCSCC$ kbs.
\begin{definition} \label{def:bisim}
Let~$\I, \J$ be interpretations with~$d \in \I, d' \in \J$. We say that~$d$ and~$d'$ are \emph{forward-neighbourhood bisimilar} 
(or simply \emph{bisimilar}), denoted with~$d \eqbisim d'$, if there exist a function~$\izo : \Neib{\I}{d} \rightarrow \Neib{\J}{d'}$ 
(called \emph{bisimulation}) satisfying the following conditions:
\begin{itemize}
	\item $\izo : \restr{\Neib{\I}{d}}{\SuccI{\I}{d}} \rightarrow \restr{\Neib{\J}{d'}}{\SuccI{\J}{d'}}$ is a bijection, and
	\item For all~$d' \in \Neib{\I}{d}$, for all concept names~$A \in N_C$ and all role names~$r \in N_R$ equivalences
		$d' \in A^{\I} \Leftrightarrow \izo(d') \in A^{\J}$ and~$(d,d') \in r^{\I} \Leftrightarrow (\izo(d), \izo(d')) \in r^{\J}$ hold.
\end{itemize}
\end{definition}

The following observation simplifies most of the forthcoming proofs. It can be either shown by a straightforward 
structural induction over the shape of~$\ALCSCC$ concepts or deduced from Proposition 2 from~\cite{BaaderB19}, 
where the notion of~$\ALCQt$--bisimulation was developed.
\begin{observation}\label{obs:neighbourhood}
Let~$\I \models \kb$ be a model of a normalized~$\ALCSCC$ knowledge base~$\kb$. 
For any two domain elements~$d,d' \in \DI$, if~$d$ and~$d'$ are bisimilar then they satisfy the same~$\ALCSCC$ concepts of depth at most one.
\end{observation}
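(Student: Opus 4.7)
The plan is to proceed by structural induction on an $\ALCSCC$ concept $C$ of depth at most one, establishing the equivalence $d \in C^\I \Leftrightarrow d' \in C^\J$. The atomic case $C = A$ is exactly the bisimulation clause for concept names, and the Boolean connectives $\neg$, $\sqcap$, $\sqcup$ are handled by a routine application of the induction hypothesis together with the usual set-theoretic laws. The only non-trivial case is $C = \suc(\con)$, which sits at depth one; by assumption, every concept $D$ occurring inside $\con$ therefore has depth zero and is a Boolean combination of concept names. A short sub-induction on such $D$, starting from the bisimulation clause for concept names, shows that for every element $e \in \Neib{\I}{d}$ one has $e \in D^\I \Leftrightarrow \izo(e) \in D^\J$.

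With this preparation, the task reduces to showing that $\tubs{\I}{d}$ satisfies $\con$ if and only if $\tubs{\J}{d'}$ does. By definition, $\tubs{\I}{d}$ assigns $\ars(d)$ to $\universe$, $r^\I(d)$ to each role name $r$, and $D^\I \cap \ars(d)$ to each concept $D$ appearing in $\con$; analogous assignments are made on the $\J$-side. The bisimulation clause for roles yields $\izo(r^\I(d)) = r^\J(d')$, while the sub-induction on depth-zero concepts gives $\izo(D^\I \cap \ars(d)) = D^\J \cap \ars(d')$. Hence the restriction of $\izo$ to $\SuccI{\I}{d}$ is a bijection onto $\SuccI{\J}{d'}$ that transports $\tubs{\I}{d}$ pointwise to $\tubs{\J}{d'}$ on every set variable occurring in $\con$.

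Since set operations and cardinalities both commute with bijections, every set constraint of the form $s = t$ or $s \subseteq t$, every cardinality constraint $k = \ell$, $k < \ell$, or $N \Div \ell$, and any Boolean combination thereof occurring in $\con$ evaluates to the same truth value under the two substitutions. Consequently $\con$ is satisfied by $\tubs{\I}{d}$ iff by $\tubs{\J}{d'}$, which delivers $d \in \suc(\con)^\I \Leftrightarrow d' \in \suc(\con)^\J$ and completes the induction.

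The one delicate point is the depth bookkeeping in the $\suc(\con)$ case: the hypothesis that $C$ has depth at most one is precisely what forbids nested $\suc$ inside $\con$, so no recursive descent into further role successors is ever needed and the purely local, forward-neighbourhood bisimulation suffices. As an alternative that sidesteps the explicit argument, one can linearly translate depth-$\leq 1$ $\ALCSCC$ concepts into $\ALCQt$ concepts of the same depth and directly invoke Proposition~2 of~\cite{BaaderB19}, as the paper indeed hints.
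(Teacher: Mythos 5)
Your proposal is correct and takes essentially the same route as the paper: the paper merely asserts that the observation follows ``by a straightforward structural induction over the shape of $\ALCSCC$ concepts'' (or alternatively from Proposition~2 of the cited $\ALCQt$-bisimulation work), and your argument is precisely that induction carried out in detail, with the only substantive case, $C=\suc(\con)$, handled by noting that the bisimulation bijection on successors transports the local substitution $\tubs{\I}{d}$ to $\tubs{\J}{d'}$ on all set variables of $\con$, so all set and cardinality constraints evaluate identically. No gaps.
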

\subsubsection{Forward-unravelings of finite models} \label{sec:unrav}

For a finite interpretation~$\I$ with~$\DInamed$ we denote those elements~$d \in \DI$ for which~$a^\I = d$ holds for some individual name~$a \in \indA$.

\begin{definition} \label{def:unrav}
Let~$\I$ be a \emph{finite} interpretation. We define a forward-unraveling~$\Iunrav = (\DIunrav, \cdotIunrav)$ of~$\I$ 
as a (potentially infinite) interpretation satisfying the following conditions:
\begin{itemize}
\item $\DIunrav = (\DI)^+ \; \setminus \; \big( \DInamed \cdot \DInamed \cdot (\DI)^* \big)$\\ 
In words,~$\DIunrav$ consists of all nonempty sequences of elements from~$\DI$ except those, 
where the first two elements are named in~$\I$.
\item For any~$a \in \indA$, let~$a^{\Iunrav} = a^{\I}$, i.e.~$a$ is interpreted 
by the one-element sequence consisting of the named element~$a^{\I}$ 
from~$\I$.\footnote{For convenience, we will not syntactically distinguish elements 
from~$\Delta^\I$ and one-element sequences from~$\DIunrav$; in particular this means~$\DI \subseteq \DIunrav$.}
\item
For concept names~$A$, we let~$A^{\Iunrav} = \{w \mid \last{w} \in A^{\I}\}$, where 
for a given element~$w \in \DIunrav$ we use~$\last{w}$ to denote the last\footnote{We define~$\first{w}$ analogously.}~$d \in \DI$ in the sequence~$w$.
\item For role names~$r$, we let~$r^{\Iunrav} =  r^\I \cap (\DInamed \times \DInamed ) \cup \{ (w,wd) \mid (\last{w},d) \in r^\I \}$. 
\end{itemize}
\end{definition}

The notion of forward-unravelings differs only slightly from the classical notion of unraveling. 
The only difference is that the sequences starting from two named individuals are excluded 
from the domain and that roles linking named individuals are assigned manually by the last item 
from Definition~\ref{def:unrav}. It is not surprising that forward-unravellings preserve satisfaction 
of~$\ALCSCC$ Aboxes and Tboxes as well as conjunctive query non-entailment. The proof is standard and 
hinges on the fact that~$w \in \DIunrav$ and~$\last{w} \in \DI$ satisfy the same~$\ALCSCC$ concepts. 
For CQ non-entailment it is enough to see that~$\last{\cdot}$ is a homomorphism from~$\Iunrav$ to~$\I$.

\begin{lemma} \label{lem:unravabox}
For any normalized ABox~$\abox$ and any finite interpretation~$\I$, if~$\I \models \abox$ holds, then also~$\Iunrav \models \abox$ holds.
\end{lemma}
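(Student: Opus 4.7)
The plan is to verify both types of ABox assertions separately, relying on the normalization assumption (which forces all concept assertions in $\abox$ to use only atomic concepts) so that no structural induction on concept descriptions is needed; everything reduces to reading off the definition of $\Iunrav$.

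First, I would handle the concept assertions. Let $A(a) \in \abox$ with $A$ a concept name. By Definition~\ref{def:unrav}, $a^{\Iunrav} = a^{\I}$, viewed as a one-element sequence, which belongs to $\DInamed \subseteq \DIunrav$. Moreover, $\last{a^{\I}} = a^{\I}$ by definition of $\last{\cdot}$ on one-element sequences. Since $\I \models A(a)$ gives $a^{\I} \in A^{\I}$, and $A^{\Iunrav} = \{w \mid \last{w} \in A^{\I}\}$, we immediately obtain $a^{\Iunrav} \in A^{\Iunrav}$, as required.

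Next, I would handle the role assertions. Let $r(a,b) \in \abox$. Then $a^{\Iunrav} = a^{\I}$ and $b^{\Iunrav} = b^{\I}$, and both lie in $\DInamed$. From $\I \models r(a,b)$ we have $(a^{\I}, b^{\I}) \in r^{\I}$, hence $(a^{\Iunrav}, b^{\Iunrav}) \in r^{\I} \cap (\DInamed \times \DInamed)$. The definition of $r^{\Iunrav}$ explicitly includes this set as a subset, so $(a^{\Iunrav}, b^{\Iunrav}) \in r^{\Iunrav}$, which is exactly satisfaction of $r(a,b)$ in $\Iunrav$.

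There is no real obstacle here; the delicate aspect is only to notice that the definition of $r^{\Iunrav}$ has been designed precisely with this case in mind, namely the manual inclusion of $r^\I \cap (\DInamed \times \DInamed)$. Without that clause, role assertions between named individuals could be lost when passing to the unraveling (since pairs of named elements are excluded from sequences of length $\geq 2$ by the definition of $\DIunrav$). With the clause, both cases of an ABox assertion follow by unpacking definitions, and normalization ensures no further work is needed for concept assertions.
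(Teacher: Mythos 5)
Your proof is correct and follows exactly the paper's argument: concept assertions are handled via $a^{\Iunrav}=a^{\I}$, $\last{a^{\I}}=a^{\I}$ and the definition of $A^{\Iunrav}$, while role assertions follow from the explicitly included clause $r^{\I}\cap(\DInamed\times\DInamed)\subseteq r^{\Iunrav}$. No gaps; this matches the paper's proof of Lemma~\ref{lem:unravabox}.
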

\begin{proof}
Take an arbitrary normalized ABox~$\abox$ as well as arbitrary finite interpretation~$\I$. Assume that~$\I \models \abox$ holds.
Note that~$\DInamed = \DIunravnamed$ holds since we agreed that we will not 
syntactically distinguish elements from $\DI$ and one-element sequences. 
First, see that satisfaction of assertions of the form~$A(a) \in \abox$ is 
guaranteed due to the third point of Definition~\ref{def:unrav} and the fact that the 
property~$\last{w} = w$ holds for any~$w \in \DIunravnamed$. Second, we can conclude 
that any assertion of the form~$r(a,b) \in \abox$ is also satisfied in~$\Iunrav$, 
due to the last item of Definition~\ref{def:unrav}, more precisely the 
fact that~$r^\I \cap (\DInamed \times \DInamed ) \subseteq r^{\Iunrav}$ holds. 
Hence~$\Iunrav \models \abox$.
\end{proof}

An important step towards proving that forward unravelings preserve normalized~$\ALCSCC$ TBoxes is to show that 
any sequence~$w \in \DIunrav$ is forward-bisimilar to~$\last{w} \in \DI$, i.e, the element from which~$w$ originated.
\begin{lemma} \label{lem:bisimlast}
Let~$\kb = (\abox, \tbox, \ercbox)$ be a normalized~$\ALCSCC$ knowledge base and let~$\I$ be its arbitrary finite model.
Then for all domain elements~$d \in \DI$ and all sequences~$w \in \DIunrav$ the implication~$d = \last{w} \Rightarrow d \eqbisim w$ holds.
\end{lemma}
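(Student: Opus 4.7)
My plan is to construct an explicit bisimulation $\izo\colon \Neib{\I}{d} \to \Neib{\Iunrav}{w}$ by case analysis on the shape of $w$ and to verify the two conditions of Definition~\ref{def:bisim} directly against Definition~\ref{def:unrav}. The key case split is on whether $|w| = 1$ (so that $w = d$ under the convention that identifies $\DI$ with one-element sequences) or $|w| \ge 2$, and in the one-element case on whether $d$ lies in $\DInamed$.

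First I would read off $\SuccI{\Iunrav}{w}$ from the role clause of Definition~\ref{def:unrav}. When $|w|\ge 2$ only the second disjunct of that clause contributes, since the first is confined to pairs in $\DInamed \times \DInamed$; this yields $\SuccI{\Iunrav}{w} = \{wd' \mid d'\in \SuccI{\I}{d}\}$, and one sets $\izo(d') := wd'$. When $|w| = 1$ and $d \notin \DInamed$, the same recipe applies with $w=d$, giving $\izo(d') := dd'$. When $|w| = 1$ and $d \in \DInamed$, both disjuncts of the role clause contribute, and $\SuccI{\Iunrav}{d}$ splits into $(\SuccI{\I}{d} \cap \DInamed) \cup \{dd' \mid d' \in \SuccI{\I}{d} \setminus \DInamed\}$, since $dd'$ fails to lie in $\DIunrav$ precisely when $d'$ is itself named; in this case $\izo$ is $d' \mapsto d'$ on the named successors and $d' \mapsto dd'$ on the unnamed ones. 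On the center element one would set $\izo(d) := w$, except that if $d$ happens to be its own role successor then $\izo(d)$ is already determined by the successor bijection and the two values must be reconciled.

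Bijectivity of $\izo$ restricted to successors is immediate from the construction, since the named/unnamed split is disjoint, the map is visibly injective, and its image is exactly the set $\SuccI{\Iunrav}{w}$ computed above. The concept-name equivalence is inherited from the identity $\last{\izo(e)} = e$ valid for every $e \in \Neib{\I}{d}$, combined with the unravelling's clause $A^{\Iunrav} = \{v \mid \last{v} \in A^\I\}$. The role-name equivalence $(d,e) \in r^{\I} \Leftrightarrow (w, \izo(e)) \in r^{\Iunrav}$ reduces case by case to a single application of the appropriate disjunct of the role clause of Definition~\ref{def:unrav}: pairs $(w,wd')$ are controlled by the second disjunct via $(\last{w}, d') = (d, d') \in r^\I$, while pairs $(d,d')$ between two named elements are controlled directly by the first disjunct. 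The delicate spot, and essentially the main obstacle, is the bookkeeping around the center element $d$ in the presence of a self-loop: one has to check that the value of $\izo(d)$ dictated by the successor bijection (either $wd$ or $d$ itself, depending on the case) remains consistent with both bisimulation conditions applied at the source, and that $(w,w) \in r^{\Iunrav}$ is excluded exactly when $(d,d) \notin r^\I$. Everything else amounts to routine verification driven by the definitions.
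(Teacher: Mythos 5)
Your computation of $\SuccI{\Iunrav}{w}$ in all three cases is correct, and on the successor sets your map is exactly the inverse of the one the paper uses (the paper works in the opposite direction, taking $\izo(x)=\last{x}$ as a map $\Neib{\Iunrav}{w}\to\Neib{\I}{d}$, which needs no case split on $|w|$ or on whether $d$ is named). The genuine problem is the spot you flag and then defer: the image of the centre when $d\in\SuccI{\I}{d}$. The reconciliation you sketch would not go through. If $(d,d)\in r^\I$ but $w$ is not a named one-element sequence (i.e.\ $|w|\ge 2$, or $|w|=1$ with $d\notin\DInamed$), then $(w,w)\notin r^{\Iunrav}$: the first clause of the role definition requires both components to lie in $\DInamed$, and the second clause only produces pairs of the form $(u,u\,e)$ whose second component properly extends the first --- forward-unravelling destroys anonymous self-loops by design. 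So the equivalence you say ``one has to check'', namely that $(w,w)\in r^{\Iunrav}$ holds exactly when $(d,d)\in r^\I$ does, is simply false here. Nor does setting $\izo(d):=wd$, as dictated by the successor bijection, help under a literal reading of Definition~\ref{def:bisim}: the role condition at the source at $e=d$ would then demand $(d,d)\in r^\I\Leftrightarrow(\izo(d),\izo(d))=(wd,wd)\in r^{\Iunrav}$, and $(wd,wd)\notin r^{\Iunrav}$ for the same reason; while setting $\izo(d):=w$ destroys the required bijection onto $\SuccI{\Iunrav}{w}$, since $w$ is not among its own successors in $\Iunrav$ but $wd$ is.

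The missing idea is that the role condition must be verified with the two centres as first components, i.e.\ one shows $(d,e)\in r^\I\Leftrightarrow(w,\izo(e))\in r^{\Iunrav}$ for all $e\in\SuccI{\I}{d}$; this is how the paper itself reads the definition (in its proof the first components are $w$ and $\izo(w)=\last{w}=d$), and under this reading the needed equivalences follow directly from the last clause of Definition~\ref{def:unrav}, so that taking $\izo(d):=wd$ in the self-loop case is harmless --- indeed $(w,wd)\in r^{\Iunrav}$ iff $(d,d)\in r^\I$. Alternatively, follow the paper and define the bisimulation in the other direction as $\izo(x)=\last{x}$ on $\Neib{\Iunrav}{w}$: then the centre $w$ automatically lands on the centre $d$, the only situation in which $w$ is both centre and its own successor is the named one-element case (where everything coincides), and the conflict you are trying to reconcile never arises.
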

\begin{proof}
We define a function~$\izo : \Neib{\Iunrav}{w} \rightarrow \Neib{\I}{d}$, which maps the 
neighbourhood of~$w$ in~$\Iunrav$ to the neighbourhood of~$d$ in~$\I$, as~$\izo(x) = \last{x}$.
The definition of~$\izo$ is sound, since~$\last{d}$ is defined uniquely for each sequence from~$(\DI)^+$.
Moreover, see that~$\izo^{-1}:\Neib{\I}{d} \rightarrow \Neib{\Iunrav}{w}$
is defined as~$\izo^{-1}(x) = x $ for named individuals and~$\izo^{-1}(x) = wx$ otherwise,
which is also sound due to the second and the last item of Definition~\ref{def:unrav}.

We will first show that~$\izo : \restr{\Neib{\Iunrav}{w}}{\SuccI{\Iunrav}{w}} \rightarrow \restr{\Neib{\I}{d}}{\SuccI{\I}{d}}$ is a bijection. 
One can show it by proving that equations~$\izo \circ \izo^{-1} = \textit{id} = \izo^{-1} \circ \izo$ hold, 
where~$\textit{id}$ is the identity function and~$\circ$ is a function-composition operator. 
Take an arbitrary element~$w$ from~$\Neib{\Iunrav}{w}$ and assume that both~$w, w'$ are named.
Then~$w = \last{w}$ and~$w' = \last{w'}$ (since we identify named individuals with one-element sequences) 
and the following equations hold:
\[
\izo(\izo^{-1}(w')) = \izo(w') = \last{w'} = w' = \izo^{-1}(w') = \izo^{-1}(\last{w'}) =  \izo^{-1}(\izo(w')).
\]
Now assume that one of~$w,w'$ is not named. Then~$w'$ is in the form~$w' = we$ 
and the presented equations~$\izo \circ \izo^{-1} = \textit{id} = \izo^{-1} \circ \izo$ hold again, as it is written below:
\[
\izo(\izo^{-1}(e)) = \izo(we) = \last{we} = e \; \text{and} \; we = \izo^{-1}(e) = \izo^{-1}(\last{we}) = \izo^{-1}(\izo(we)).
\]
Hence~$\izo$ restricted to role successors of~$w$ is a bijection. 
Note that for any atomic concept~$A$ we know that~$w \in A^{\Iunrav}$ holds iff~$d \in A^{\I}$ holds, 
due to the third item of Definition~\ref{def:unrav} (and since~$d = \last{w} = \izo(w))$. Thus, the only thing which remains to be
done is to show that for all~$w' \in \Neib{\Iunrav}{w}$ the equivalence~$(w,w') \in r^{\Iunrav} \Leftrightarrow (\izo(w),\izo(w')) \in r^{\Iunrav}$ holds.

Let us fix an arbitrary neighbour~$w'$ of~$w$, i.e., a domain element~$w \in \DIunrav$
s.t.~$(w,w') \in r^{\Iunrav}$ holds for some role name~$r$. Let~$d' = \last{w'} = \izo(w')$ be the corresponding element in~$\DI$.

We distinguish two cases.
\begin{itemize}
\item~$w,w'$ are not named.\\
Since we agreed that~$\DIunravnamed = \DInamed$ holds,
we infer that~$d = \izo(w) = w$ and~$d' = \izo(w') = w'$. Thus we can use the last item of Definition~\ref{def:unrav}, 
namely the part stating that~$r^\I \cap (\DInamed \times \DInamed ) = r^{\Iunrav} \cap (\DInamed \times \DInamed )$ 
and conclude the mentioned property. 

\item At least one of~$w,w'$ is not named.\\ 
In this case,
from the second part of the third item of Definition~\ref{def:unrav} we know that~$w'$ is actually a sequence in the form~$w \cdot e$.
But from the same definition as above,~$(w,w') = (w, we) \in r^{\Iunrav}$ holds if and only if~$(\last{w}, e) = (d, e) \in r^{\I}$
holds, which is exactly what we wanted to prove. 
\end{itemize}

Since we have shown preservation (and non-preservation) of atomic concepts 
and roles by~$\izo$ and since~$\izo$ is a bijection, we infer that~$\izo$ is a bisimulation. Hence~$w \eqbisim d$ holds.
\end{proof}

As an immediate consequence of Lemma~\ref{lem:bisimlast} we obtain that any two sequences~$w,w' \in \DIunrav$ 
having the same last element are forward-bisimilar, as stated below.
\begin{lemma}
For any finite interpretation~$\I$ being a model of a normalized~$\ALCSCC$ knowledge base~$\kb$
and any sequences~$w, w' \in \DIunrav$ with~$\last{w} = \last{w'}$, the property~$w \eqbisim w'$ holds.
\end{lemma}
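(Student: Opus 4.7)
The plan is to derive the statement as an immediate corollary of Lemma~\ref{lem:bisimlast} by exploiting that the forward-neighbourhood bisimilarity relation $\eqbisim$ is an equivalence relation.

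First I would verify (or simply observe) that $\eqbisim$ is symmetric and transitive. Symmetry is immediate: if $\izo$ witnesses $d \eqbisim d'$, then since $\izo$ restricted to successors is a bijection, its inverse $\izo^{-1}$ is a well-defined function $\Neib{\J}{d'} \to \Neib{\I}{d}$ which again restricts to a bijection on successors, and the equivalences in the second bullet of Definition~\ref{def:bisim} are preserved under swapping sides. Transitivity is equally routine: composing two bisimulations $\izo : \Neib{\I}{d} \to \Neib{\J}{d'}$ and $\izo' : \Neib{\J}{d'} \to \Neib{\mathcal{K}}{d''}$ yields $\izo' \circ \izo$, which restricts to a bijection on successors (as a composition of bijections) and preserves all the required atomic concept and role memberships.

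Then the proof is immediate. Let $d := \last{w} = \last{w'} \in \DI$. By Lemma~\ref{lem:bisimlast} applied to $w$ we have $d \eqbisim w$, and applied to $w'$ we have $d \eqbisim w'$. By symmetry $w \eqbisim d$, and by transitivity $w \eqbisim w'$, as required.

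The main (and only) obstacle would be the bookkeeping involved in checking that $\eqbisim$ is genuinely transitive and symmetric; however, since the definition closely mirrors standard notions of bisimulation (up to the restriction to the forward neighbourhood and the bijectivity condition on successors), both properties follow by routine arguments from Definition~\ref{def:bisim}, and no interaction with the unraveling construction itself is needed beyond the application of Lemma~\ref{lem:bisimlast}.
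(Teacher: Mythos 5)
Your proposal is correct and follows essentially the same route as the paper: the paper likewise applies Lemma~\ref{lem:bisimlast} to both $w$ and $w'$ and then concludes $w \eqbisim w'$ from $\last{w} = \last{w'}$, implicitly using the symmetry and transitivity of $\eqbisim$ that you spell out explicitly.
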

\begin{proof}
By applying Lemma~\ref{lem:bisimlast} to~$w$ and~$w'$, we infer that~$w \eqbisim \last{w}$ and~$w' \eqbisim \last{w'}$ holds.
Since the elements~$\last{w}$ and~$\last{w'}$ are equal, we conclude that~$w$ is bisimilar to~$w'$.
\end{proof}

Once we have shown that~$w \eqbisim \last{w}$ for any~$w \in \DIunrav$, we can employ this fact to show that forward-unraveling 
preserve satisfaction of normalized TBoxes.
\begin{lemma} \label{lem:unravtbox}
For any normalized~$\ALCSCC$ TBox~$\tbox$ and any finite interpretation~$\I$, the implication~$\I \models \tbox \Rightarrow \Iunrav \models \tbox$ holds.
\end{lemma}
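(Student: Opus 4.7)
The plan is to exploit the normalization assumption on $\tbox$ (all concepts are of depth at most one) together with Lemma~\ref{lem:bisimlast} and Observation~\ref{obs:neighbourhood}. The latter two pieces combine to say that, for any $w\in\DIunrav$, the element $w$ (viewed in $\Iunrav$) and the element $\last{w}$ (viewed in $\I$) satisfy exactly the same $\ALCSCC$ concepts of depth at most one. Once this is in hand, satisfaction of a CI transfers between the two interpretations pointwise.

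Concretely, I would argue as follows. Fix an arbitrary CI $C\sqsubseteq D$ of $\tbox$; by normalization both $C$ and $D$ have depth at most one. To show $\Iunrav$ satisfies this CI, pick an arbitrary $w\in C^{\Iunrav}$ and let $d:=\last{w}$. By Lemma~\ref{lem:bisimlast} we have $d \eqbisim w$, and then Observation~\ref{obs:neighbourhood} (applied to the shallow concept $C$) yields $d\in C^{\I}$. Since $\I\models\tbox$, this entails $d\in D^{\I}$. Reapplying Observation~\ref{obs:neighbourhood} to $D$ (also of depth at most one) we recover $w\in D^{\Iunrav}$, as required. As $w$ and the CI were arbitrary, we conclude $\Iunrav\models\tbox$.

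The only mildly delicate point is the invocation of Observation~\ref{obs:neighbourhood} in the cross-interpretation setting: its statement is phrased for two elements of a single model, but the underlying bisimulation notion from Definition~\ref{def:bisim} is defined between possibly different interpretations, and the observation's proof by induction on concept depth is insensitive to this. I would flag this once and then proceed, since the inductive argument uses only the bijective matching of role successors and the preservation of concept-name membership guaranteed by the bisimulation $\izo$ constructed in the proof of Lemma~\ref{lem:bisimlast}. No calculation on QFBAPA constraints is needed here beyond the depth-one observation, because the constraint expressions at the top level of a depth-at-most-one concept only inspect role successors and their concept-name memberships, which are precisely what a forward-neighbourhood bisimulation preserves.
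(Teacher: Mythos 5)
Your proof is correct and follows essentially the same route as the paper's: fix a CI with both sides of depth at most one, use Lemma~\ref{lem:bisimlast} to get $w \eqbisim \last{w}$, and transfer membership in both concepts via Observation~\ref{obs:neighbourhood}. Your explicit remark that the Observation is being applied across two interpretations (which its statement does not literally cover, though Definition~\ref{def:bisim} and its proof do) is a point the paper silently glosses over, so flagging it is a welcome refinement rather than a deviation.
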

\begin{proof}
Let~$w \in \DIunrav$ be an arbitrary domain element from~$\Iunrav$ and let~$d = \last{w}$ be the corresponding element from~$\DI$.
Let~$\varepsilon = C_0 \sqsubseteq C_1$ be an arbitrary GCI from the TBox~$\tbox$. Note that~$C_0, C_1$ are not 
necessary atomic, but since we restricted our attention to normalized knowledge bases only, 
we can assume that~$C_0$ and~$C_1$ are~$\ALCSCC$ concepts of depth at most one. Assume that~$w \in C_0^{\Iunrav}$ holds.
Then, to prove that~$\Iunrav \models \varepsilon$ holds, we need to show that~$w \in C_1^{\Iunrav}$ holds. 
Since~$w \eqbisim d$ holds (by Lemma~\ref{lem:bisimlast}), from Observation~\ref{obs:neighbourhood} we know that~$d$ 
and~$w$ satisfy the same~$\ALCSCC$ concepts of depth at most one. Hence~$d \in C_0^{\I}$. 
From the fact that~$\I$ satisfies~$\varepsilon$ we infer that~$d \in \C_1^{\I}$ holds.
Again, since~$d$ and~$w$ are bisimilar, they satisfy the same~$\ALCSCC$ concepts of depth~$\leq 1$ and thus~$w \in \C_1^{\Iunrav}$ holds too.
Due to the fact that~$w$ and~$\varepsilon$ were arbitrarily chosen, we conclude that~$\Iunrav \models \tbox$ holds.
\end{proof}

From the construction of forward unravelings one can immediately 
see that it also preserves non-entailment of conjunctive queries.
Without loss of generality we can always assume that CQs contains only atomic concepts 
(e.g. by introducing a fresh name~$A_C$ for each concept~$C$ and putting the GCI~$C \equiv A_C$ inside the TBox).

\begin{lemma} \label{lem:unravnonentail}
For any finite interpretation~$\I$ and any conjunctive query~$q$, if~$\I \not\models q$ holds then~$\Iunrav \not\models q$ holds too.
\end{lemma}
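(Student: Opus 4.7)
The plan is to prove the contrapositive: assuming $\Iunrav \models q$, I will produce a match of $q$ in $\I$. The key observation is that the map $\homo : \DIunrav \to \DI$ defined by $\homo(w) := \last{w}$ is a homomorphism from $\Iunrav$ to $\I$; once this is established, a match for $q$ in $\Iunrav$ can be composed with $\homo$ to yield a match in $\I$.

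First I would verify the homomorphism property directly from Definition~\ref{def:unrav}. Preservation of concept names is immediate: by the third item of the definition, $w \in A^{\Iunrav}$ precisely when $\last{w} \in A^\I$, so $w \in A^{\Iunrav}$ implies $\homo(w) \in A^\I$. For role names $r$, assume $(w,w') \in r^{\Iunrav}$; by the fourth item there are two cases. Either $w,w' \in \DInamed$ and $(w,w') \in r^\I$, in which case $\last{w} = w$ and $\last{w'} = w'$ (since named individuals are identified with one-element sequences), yielding $(\homo(w), \homo(w')) \in r^\I$; or $w' = wd$ for some $d \in \DI$ with $(\last{w},d) \in r^\I$, in which case $\last{w'} = d$ and again $(\homo(w), \homo(w')) \in r^\I$.

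With the homomorphism in hand, suppose $\Iunrav \models q$ via a match $\pi : \Vlang \to \DIunrav$. Define $\pi' := \homo \circ \pi : \Vlang \to \DI$. For every concept atom $C(z) \in q$, since we assume queries use only atomic concepts, $\pi(z) \in C^{\Iunrav}$ gives $\pi'(z) = \homo(\pi(z)) \in C^\I$ by the concept-preservation property. For every role atom $r(x,y) \in q$, $(\pi(x),\pi(y)) \in r^{\Iunrav}$ gives $(\pi'(x),\pi'(y)) = (\homo(\pi(x)),\homo(\pi(y))) \in r^\I$ by the role-preservation property. Hence $\pi'$ witnesses $\I \models q$, completing the contrapositive.

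I do not anticipate any real obstacle here: the lemma is a routine consequence of the fact that the ``last element'' projection is a homomorphism, which in turn is built into the definition of $\Iunrav$. The only mildly delicate point is keeping the two cases in the definition of $r^{\Iunrav}$ straight (named-to-named links versus tree-extending links), and relying on the convention that elements of $\DInamed$ are identified with their one-element sequence counterparts so that $\last{a} = a$ for $a \in \indA$.
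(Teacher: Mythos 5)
Your proposal is correct and follows essentially the same route as the paper: the paper also observes that $\homo(w) = \last{w}$ is a homomorphism from $\Iunrav$ to $\I$ (citing the third and last items of Definition~\ref{def:unrav}) and composes it with a match of $q$ in $\Iunrav$ to contradict $\I \not\models q$. Your slightly more explicit case analysis for role edges is just a spelled-out version of the same argument.
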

\begin{proof}
Assume that~$\I \not\models q$ holds but~$\Iunrav$ entails~$q$. Then there exists 
a match~$\pi$ of~$q$ on~$\Iunrav$. Note that~$\homo(x) = \last{x}$ is a homomorphism~$\Iunrav$ to~$\I$.
Indeed, the preservation of atomic concepts by~$\homo$ can be deduced from the third item of Definition~$\ref{def:unrav}$, and
the fact that if~$(d,d') \in r^{\Iunrav}$ holds then~$(\homo(d), \homo(d')) \in r^{\I}$ holds can be inferred from 
the last item of Definition~$\ref{def:unrav}$. 
However, in that case~$\pi'$ with~$\pi'(x) = \homo(\pi(x))$ 
would be a match of~$q$ on~$\I$, which contradicts the initial assumption~$\I \not\models q$. Thus~$\Iunrav \not\models q$ holds.
\end{proof}

\subsubsection{Loosening of finite unravelings} \label{sec:loose}

Unraveling removes non-forest-shaped query matches, however, $\Iunrav$ does not need to be finite even if $\I$ is. To regain finiteness without re-introducing query matches, we are going to introduce the notion of~\emph{$k$-loosening}.

For a given finite interpretation~$\I$, we say that an element~$u \in \DIunrav$ is~\emph{$k$--blocked} 
by its prefix~$w$, if~$u = ww'$ for some~$w'$ of length longer than~$k$, and $w$'s and~$u$'s 
suffixes of length~$k$ coincide. The definition is depicted below.
The definition is depicted below.
\begin{figure}[!h]
\centering
\tikzset{every picture/.style={line width=0.75pt}} 

\vspace{0.09cm}

\begin{tikzpicture}[x=0.75pt,y=0.75pt,yscale=-1,xscale=1]

\draw   (387.5,48) -- (521.5,48) -- (521.5,67) -- (387.5,67) -- cycle ;
\draw   (211.5,34) -- (295.5,34) -- (295.5,60) -- (211.5,60) -- cycle ;
\draw  [fill={rgb, 255:red, 155; green, 155; blue, 155 }  ,fill opacity=1 ] (450.75,48) -- (521.5,48) -- (521.5,67) -- (450.75,67) -- cycle ;
\draw   (130.5,34) -- (211.5,34) -- (211.5,60) -- (130.5,60) -- cycle ;
\draw   (140.5,63) .. controls (140.5,67.67) and (142.83,70) .. (147.5,70) -- (202,70) .. controls (208.67,70) and (212,72.33) .. (212,77) .. controls (212,72.33) and (215.33,70) .. (222,70)(219,70) -- (276.5,70) .. controls (281.17,70) and (283.5,67.67) .. (283.5,63) ;
\draw   (357,18.67) -- (520.5,18.67) -- (520.5,37.67) -- (357,37.67) -- cycle ;
\draw  [fill={rgb, 255:red, 155; green, 155; blue, 155 }  ,fill opacity=1 ] (451.75,18.67) -- (520.5,18.67) -- (520.5,37.67) -- (451.75,37.67) -- cycle ;
\draw    (475.5,25) -- (475.75,56.5) ;

\draw    (488.5,25.17) -- (488.75,56.67) ;

\draw   (450.86,70.43) .. controls (450.89,75.1) and (453.24,77.41) .. (457.91,77.37) -- (474.21,77.24) .. controls (480.88,77.19) and (484.23,79.49) .. (484.26,84.15) .. controls (484.23,79.49) and (487.54,77.13) .. (494.21,77.07)(491.21,77.1) -- (513.34,76.92) .. controls (518.01,76.88) and (520.32,74.53) .. (520.29,69.86) ;

\draw (342.33,24.67) node   {$ww'$};
\draw (341,54.33) node   {$w$};
\draw (171,47) node   {$w$};
\draw (217,84) node   {$ww'$};
\draw (253.5,47) node   {$\ size > k$};
\draw (482.91,90) node   {$\ k$};

\end{tikzpicture}
\end{figure}\\
We also say that~$w$ is \emph{minimally~$k$--blocked} if it is~$k$--blocked (by some prefix), but none of its prefixes is~$k$--blocked. 
With~$\blocked{k}{\Iunrav}$ we denote the set of minimally~$k$--blocked elements in~$\Iunrav$.  

\begin{definition} \label{def:loose}
For a given finite interpretation~$\I$ we define its~$k$--loosening~$\Iloose{k} = (\DIloose{k}, \cdot^{\DIloose{k}})$ 
as an interpretation obtained from~$\Iunrav$ by exhaustively selecting minimally~$k$--blocked elements~$v$ from~$\blocked{k}{\Iunrav}$ 
($k$--blocked by some~$w$), removing all of descendants of~$v$ and identifying~$v$ and~$w$.\\
More formally, we enumerate the set of minimally~$k$--blocked elements~$\blocked{k}{\Iunrav} = \{ v_1, v_2, \ldots, v_n \}$ and 
define a sequence of auxiliary interpretations~$\Ji{0} = \I, \ldots, \Ji{n} = \Iloose{k}$,
where the~$i$--th interpretation~$\J^{i} = (\DJi{i}, \cdotJi{i})$ for any~$i > 0$ is defined as:
\begin{itemize}
	\item $\DJi{i} = \DJi{i-1} \setminus  \big(v_i \cdot ({\DI})^{*} \big)$
	\item $\DJinamed{i} = \DJinamed{i-1}$ and for any~$a \in \indA$ the condition~$a^{\Ji{i}} = a^{\Ji{i-1}}$ is satisfied,
	\item $A^{\Ji{i}} = A^{\Ji{i-1}} \cap \DJi{i}$ for any concept name~$A \in N_C$
	\item $r^{\Ji{i}} = r^{\Ji{i-1}} \cap \left( \DJi{i} \times \DJi{i} \right) \cup \{ (w, v_i') \mid (w,v_i) \in r^{\Ji{i-1}} \}$,
	for any role name~$r \in N_R$, where~$v_i'$ is the element~$k$--blocking~$v_i$ in~$\Iunrav$. 
\end{itemize}
\end{definition}
\begin{figure}[!h]
\centering

\tikzset{every picture/.style={line width=0.75pt}} 

\begin{tikzpicture}[x=0.75pt,y=0.75pt,yscale=-1,xscale=1]

\draw   (333.25,20) -- (499.5,259) -- (167,259) -- cycle ;
\draw    (341.5,115.3) .. controls (381.5,85.3) and (318.5,63) .. (333.25,20) ;

\draw  [fill={rgb, 255:red, 0; green, 0; blue, 0 }  ,fill opacity=1 ] (335,118.5) .. controls (335,116.57) and (336.57,115) .. (338.5,115) .. controls (340.43,115) and (342,116.57) .. (342,118.5) .. controls (342,120.43) and (340.43,122) .. (338.5,122) .. controls (336.57,122) and (335,120.43) .. (335,118.5) -- cycle ;
\draw   (337.5,179) -- (379.5,246) -- (295.5,246) -- cycle ;
\draw  [fill={rgb, 255:red, 0; green, 0; blue, 0 }  ,fill opacity=1 ] (334,175.5) .. controls (334,173.57) and (335.57,172) .. (337.5,172) .. controls (339.43,172) and (341,173.57) .. (341,175.5) .. controls (341,177.43) and (339.43,179) .. (337.5,179) .. controls (335.57,179) and (334,177.43) .. (334,175.5) -- cycle ;
\draw    (338.5,118.5) .. controls (271.82,145) and (305.74,146.95) .. (336.12,174.24) ;
\draw [shift={(337.5,175.5)}, rotate = 222.87] [color={rgb, 255:red, 0; green, 0; blue, 0 }  ][line width=0.75]    (10.93,-3.29) .. controls (6.95,-1.4) and (3.31,-0.3) .. (0,0) .. controls (3.31,0.3) and (6.95,1.4) .. (10.93,3.29)   ;

\draw  [fill={rgb, 255:red, 0; green, 0; blue, 0 }  ,fill opacity=1 ] (323.4,124.05) .. controls (323.4,123.11) and (324.16,122.35) .. (325.1,122.35) .. controls (326.04,122.35) and (326.8,123.11) .. (326.8,124.05) .. controls (326.8,124.99) and (326.04,125.75) .. (325.1,125.75) .. controls (324.16,125.75) and (323.4,124.99) .. (323.4,124.05) -- cycle ;
\draw  [fill={rgb, 255:red, 0; green, 0; blue, 0 }  ,fill opacity=1 ] (312.1,129.5) .. controls (312.1,128.56) and (312.86,127.8) .. (313.8,127.8) .. controls (314.74,127.8) and (315.5,128.56) .. (315.5,129.5) .. controls (315.5,130.44) and (314.74,131.2) .. (313.8,131.2) .. controls (312.86,131.2) and (312.1,130.44) .. (312.1,129.5) -- cycle ;
\draw  [fill={rgb, 255:red, 0; green, 0; blue, 0 }  ,fill opacity=1 ] (301,136.7) .. controls (301,135.76) and (301.76,135) .. (302.7,135) .. controls (303.64,135) and (304.4,135.76) .. (304.4,136.7) .. controls (304.4,137.64) and (303.64,138.4) .. (302.7,138.4) .. controls (301.76,138.4) and (301,137.64) .. (301,136.7) -- cycle ;
\draw  [fill={rgb, 255:red, 0; green, 0; blue, 0 }  ,fill opacity=1 ] (300.6,147.9) .. controls (300.6,146.96) and (301.36,146.2) .. (302.3,146.2) .. controls (303.24,146.2) and (304,146.96) .. (304,147.9) .. controls (304,148.84) and (303.24,149.6) .. (302.3,149.6) .. controls (301.36,149.6) and (300.6,148.84) .. (300.6,147.9) -- cycle ;
\draw  [fill={rgb, 255:red, 0; green, 0; blue, 0 }  ,fill opacity=1 ] (307.8,154.7) .. controls (307.8,153.76) and (308.56,153) .. (309.5,153) .. controls (310.44,153) and (311.2,153.76) .. (311.2,154.7) .. controls (311.2,155.64) and (310.44,156.4) .. (309.5,156.4) .. controls (308.56,156.4) and (307.8,155.64) .. (307.8,154.7) -- cycle ;
\draw  [fill={rgb, 255:red, 0; green, 0; blue, 0 }  ,fill opacity=1 ] (315.65,160.1) .. controls (315.65,159.16) and (316.41,158.4) .. (317.35,158.4) .. controls (318.29,158.4) and (319.05,159.16) .. (319.05,160.1) .. controls (319.05,161.04) and (318.29,161.8) .. (317.35,161.8) .. controls (316.41,161.8) and (315.65,161.04) .. (315.65,160.1) -- cycle ;
\draw   (369.5,175) .. controls (374.17,175) and (376.5,172.67) .. (376.5,168) -- (376.5,157) .. controls (376.5,150.33) and (378.83,147) .. (383.5,147) .. controls (378.83,147) and (376.5,143.67) .. (376.5,137)(376.5,140) -- (376.5,126) .. controls (376.5,121.33) and (374.17,119) .. (369.5,119) ;
\draw  [color={rgb, 255:red, 255; green, 255; blue, 255 }  ,draw opacity=1 ][fill={rgb, 255:red, 208; green, 2; blue, 27 }  ,fill opacity=1 ][line width=0.75]  (290.7,237.89) -- (367.67,201.86) -- (368.71,204.08) -- (291.75,240.12) -- cycle ;
\draw  [dash pattern={on 0.84pt off 2.51pt}]  (202.5,209) .. controls (242.5,179) and (407.5,214) .. (447.5,184) ;

\draw  [dash pattern={on 0.84pt off 2.51pt}]  (226.71,173.08) .. controls (266.71,143.08) and (388.5,181) .. (428.5,151) ;

\draw  [color={rgb, 255:red, 255; green, 255; blue, 255 }  ,draw opacity=1 ][fill={rgb, 255:red, 208; green, 2; blue, 27 }  ,fill opacity=1 ][line width=0.75]  (331.82,168.07) -- (320.65,167.12) -- (320.87,164.47) -- (332.04,165.42) -- cycle ;
\draw  [color={rgb, 255:red, 255; green, 255; blue, 255 }  ,draw opacity=1 ][fill={rgb, 255:red, 208; green, 2; blue, 27 }  ,fill opacity=1 ][line width=0.75]  (321.69,169.66) -- (328.98,161.14) -- (331,162.88) -- (323.71,171.39) -- cycle ;
\draw  [color={rgb, 255:red, 255; green, 255; blue, 255 }  ,draw opacity=1 ][fill={rgb, 255:red, 208; green, 2; blue, 27 }  ,fill opacity=1 ][line width=0.75]  (379.92,237.5) -- (301.23,205.42) -- (302.16,203.14) -- (380.85,235.23) -- cycle ;
\draw [color={rgb, 255:red, 208; green, 2; blue, 27 }  ,draw opacity=1 ]   (319.05,160.1) .. controls (387.46,149.17) and (374.61,136.68) .. (344.11,121.5) ;
\draw [shift={(342.7,120.8)}, rotate = 386.2] [color={rgb, 255:red, 208; green, 2; blue, 27 }  ,draw opacity=1 ][line width=0.75]    (10.93,-3.29) .. controls (6.95,-1.4) and (3.31,-0.3) .. (0,0) .. controls (3.31,0.3) and (6.95,1.4) .. (10.93,3.29)   ;

\draw (398.5,144) node [xslant=-0.05]  {$ >k$};
\draw (336.83,105.23) node   {$w$};
\draw (349.73,168.5) node   {$v$};
\draw (264.83,177.83) node   {$\blocked{k}{\Iunrav}$};

\end{tikzpicture}
\caption{A single step of the construction of~$\Iloose{k}$.}
\end{figure}
We first argue that~$k$--loosening of a finite interpretation is also finite.
\begin{lemma} \label{lem:loosefin}
For any finite interpretation~$\I$, its~$k$--loosening~$\Iloose{k}$ for any natural~$k > 0$ is finite.
\end{lemma}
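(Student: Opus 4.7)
The plan is to exhibit an explicit finite upper bound on the length of sequences in $\DIloose{k}$. Since $\DI$ is finite, a bound on length immediately yields a bound on the total number of words, hence finiteness of $\Iloose{k}$.

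First, I would unfold Definition~\ref{def:loose}: because $v_i \cdot (\DI)^{*}$ includes $v_i$ itself (via the empty continuation), an element $u \in \DIunrav$ survives all iterations precisely if no prefix of $u$ (including $u$ itself) is minimally $k$-blocked. By passing to the shortest $k$-blocked prefix, this is equivalent to the cleaner statement: no prefix of $u$ (including $u$ itself) is $k$-blocked. Reformulated directly on pairs of prefixes: whenever $p$ and $p'$ are two prefixes of $u$ with $|p| < |p'|$ sharing the same suffix of length $k$, we must have $|p'| - |p| \leq k$ (otherwise $p'$ would be $k$-blocked by $p$).

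The main combinatorial step is a pigeonhole argument. Set $N := |\DI|$ and, for every prefix $p$ of $u$ with $|p| \geq k$, define the signature $\mathrm{sig}(p) \in (\DI)^{k}$ to be its last $k$ elements. Consider the prefixes $p_1, p_2, p_3, \ldots$ of $u$ whose lengths are $k, 2k+1, 3k+2, \ldots$ (so consecutive lengths differ by $k+1 > k$). By the characterization above, their signatures must be pairwise distinct, so at most $N^{k}$ such prefixes can exist. This yields $|u| \leq k + (N^{k}-1)(k+1)$.

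Since every element of $\DIloose{k}$ is a word over the finite set $\DI$ of bounded length, $\DIloose{k}$ is finite, and hence so is $\Iloose{k}$. I do not expect a real obstacle beyond routine bookkeeping; the only subtle point is verifying the precise characterization of which words survive the iterative construction, especially that each minimally $k$-blocked element $v_i$ is removed along with its descendants and that removal commutes with the order in which the $v_i$ are enumerated (so that the overall surviving set really is the one described).
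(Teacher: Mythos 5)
Your proof is correct, and it reaches the conclusion by a genuinely more explicit route than the paper does. The paper's argument is non-quantitative: it notes that $\Iloose{k}$ is finitely branching (because $\I$ is finite), that by pigeonhole blocking eventually occurs on every branch of $\Iunrav$, and then invokes the contrapositive of K\"{o}nig's Lemma to pass from ``finitely branching with no infinite branch'' to finiteness. You instead first pin down, order-independently, which words survive the iterative construction (those in $\DIunrav$ none of whose prefixes, including the word itself, is $k$-blocked -- correct, via passing to a shortest blocked prefix), and then turn the same pigeonhole idea, applied to length-$k$ suffixes over the finite alphabet $\DI$, into a uniform bound on the length of every surviving word; finiteness of $\DIloose{k}$ follows since there are only finitely many words of bounded length over $\DI$. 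What your version buys is an elementary, K\"{o}nig-free argument plus an explicit size bound on $\Iloose{k}$ (exponential in $k$), which the paper's proof does not supply; what the paper's version buys is brevity. One bookkeeping remark: from ``at most $|\DI|^k$ prefixes at the sampled lengths $k, 2k+1, 3k+2, \ldots$'' you can only conclude $|u| < k + |\DI|^k(k+1)$, not $|u| \leq k + (|\DI|^k-1)(k+1)$, since $u$ may extend up to $k$ symbols beyond the last sampled prefix; this off-by-$k$ is immaterial for finiteness.
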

\begin{proof}
Take an arbitrary finite~$\I$ and observe that the branching of~$k$--loosening is 
finite due to finiteness of~$\I$ and each element of~$\Iloose{k}$ has only finite 
number of successors (by pigeon-hole principle the blocking eventually occurs on every branch of~$\Iunrav$).
Hence by employing (the contraposition) of the K\"{o}nig's Lemma, we conclude that~$\Iloose{k}$ is finite.
\end{proof}
Like unravelings,~$k$-loosenings preserve satisfaction of normalized Aboxes and Tboxes, as well as CQ non-entailment.
However, \mbox{ERCBoxes} might become violated in the construction.
We startfrom the ABox preservation.

\begin{lemma} \label{lem:unravaboxpreservation}
For any finite~$\I$ and any normalized ABox~$\abox$ and any natural~$k > 0$, 
the implication if~$\I \models \abox$ then~$\Iloose{k} \models \abox$ holds.
\end{lemma}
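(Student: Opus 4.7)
The plan is to leverage Lemma~\ref{lem:unravabox}, which already gives us $\Iunrav \models \abox$, and then to show that the iterative construction of $\Iloose{k}$ from $\Iunrav$ never disturbs anything relevant to assertions about named individuals. The key observation I would establish first is that named individuals are untouched throughout the construction: for a sequence $v = ww'$ to be minimally $k$--blocked, we need $|w'| > k \geq 1$, so $|v| \geq 2$, whereas every named individual corresponds to a one-element sequence in $\DIunravnamed = \DInamed$. Consequently, no $v_i$ in the enumeration $\blocked{k}{\Iunrav} = \{v_1, \ldots, v_n\}$ is named, and since the elements deleted in step $i$ are precisely the sequences in $v_i \cdot (\DI)^*$ (all of length $\geq |v_i| \geq 2$), no named individual is ever removed. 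A straightforward induction on $i$ then gives $\DInamed \subseteq \DJi{i}$ and $a^{\Ji{i}} = a^{\Iunrav} = a^\I$ for every $a \in \indA$.

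For a concept assertion $A(a) \in \abox$, Lemma~\ref{lem:unravabox} yields $a^{\Iunrav} \in A^{\Iunrav}$. Since the definition sets $A^{\Ji{i}} = A^{\Ji{i-1}} \cap \DJi{i}$ and $a^{\Iunrav}$ stays in every $\DJi{i}$ by the above, a trivial induction shows $a^{\Iloose{k}} \in A^{\Iloose{k}}$.

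For a role assertion $r(a,b) \in \abox$, Lemma~\ref{lem:unravabox} gives $(a^{\Iunrav}, b^{\Iunrav}) \in r^{\Iunrav}$; moreover, since $a^{\Iunrav}, b^{\Iunrav} \in \DInamed$, this pair lies in $r^{\Iunrav} \cap (\DInamed \times \DInamed)$. At each step the definition stipulates $r^{\Ji{i}} \supseteq r^{\Ji{i-1}} \cap (\DJi{i} \times \DJi{i})$, and because both named endpoints survive in $\DJi{i}$, the edge $(a^{\Iunrav}, b^{\Iunrav})$ is preserved throughout the iteration. Hence $(a^{\Iloose{k}}, b^{\Iloose{k}}) \in r^{\Iloose{k}}$.

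Combining these three observations yields $\Iloose{k} \models \abox$. I do not foresee a real obstacle here: the whole argument hinges on the length-based observation that named individuals cannot be minimally $k$--blocked and cannot be proper descendants of such a blocked element, and the rest is bookkeeping about how the definition of $\Ji{i}$ preserves the relevant membership/edge data.
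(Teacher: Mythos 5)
Your proposal is correct and follows essentially the same route as the paper's proof: invoke Lemma~\ref{lem:unravabox} to get $\Iunrav \models \abox$, then observe that named individuals (being one-element sequences) can never be $k$--blocked or removed, so the loosening construction leaves the ABox part of $\Iunrav$ untouched. Your version merely spells out in more detail the length-based reason and the step-by-step preservation of concept memberships and role edges that the paper states more briefly.
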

\begin{proof}
Assume that~$\I \models \abox$ holds. Then, due to Lemma~\ref{lem:unravabox} we know that~$\Iunrav \models \abox$ holds. 
Observe that~$\DIloose{k}$ is a subset of~$\DIunrav$, due to the first item of Definition~\ref{def:loose}. 
Moreover the sets~$\DIunravnamed$ and~$\DIloosenamed{k}$ are equal, due to the second item of Definition~\ref{def:loose}. 
Since the~$k$--loosening construction does not affect the ABox part of~$\Iunrav$ (e.g. those 
elements are not~$k$--blocked for any~$k$, see also the second item of 
Definition~\ref{def:loose}) we conclude that~$\Iloose{k}$ is a model of~$\abox$.
\end{proof}

Towards proving the TBox preservation of~$k$--loosening, we prepare a bisimulation argument.
\begin{lemma} \label{lem:loosebisim}
Let~$\kb = (\abox, \tbox, \ercbox)$ be a normalized~$\ALCSCC$ knowledge base and let~$\I$ be its arbitrary finite model.
Then any~$w \in \DIloose{k}$ is bisimilar to~$\last{w} \in \DI$.
\end{lemma}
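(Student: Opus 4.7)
My plan is to mimic the proof of Lemma~\ref{lem:bisimlast}, defining the bisimulation $\izo : \Neib{\Iloose{k}}{w} \rightarrow \Neib{\I}{\last{w}}$ by the same rule $\izo(x) = \last{x}$, and then verifying Definition~\ref{def:bisim} case by case. The additional complication compared to the pure unraveling case is that the neighbourhood of $w$ in $\Iloose{k}$ contains not only surviving sequences of the form $we$ with $(\last{w}, e) \in r^\I$, but also elements $v_i'$ arising from the redirection step whenever some successor $we$ of $w$ lies in $\blocked{k}{\Iunrav}$.

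First I would check that $\last{\cdot}$ restricts to a bijection from $\SuccI{\Iloose{k}}{w}$ onto $\SuccI{\I}{\last{w}}$. Surjectivity is immediate from the last item of Definition~\ref{def:loose}: for every $e \in \SuccI{\I}{\last{w}}$, the corresponding $\Iunrav$-successor $we$ of $w$ either survives (and is an $r$-successor of $w$ in $\Iloose{k}$) or is removed, in which case the edge $(w, we)$ is replaced by $(w, v_i')$ where the shared $k$-suffix condition forces $\last{v_i'} = \last{we} = e$. For injectivity, the only nontrivial case is when two distinct $\Iunrav$-successors $we_1, we_2$ of $w$ with $e_1 \neq e_2$ get redirected to the same element $v' = v_{i_1}' = v_{i_2}'$; however, $v'$ is then a common prefix whose $k$-suffix must simultaneously match the $k$-suffix of $we_1$ and of $we_2$, forcing $\last{v'} = e_1$ and $\last{v'} = e_2$, a contradiction.

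Next I would verify the two preservation clauses. Concept-name preservation follows because $A^{\Ji{i}} = A^{\Ji{i-1}} \cap \DJi{i}$ throughout the inductive construction, so $A^{\Iloose{k}} = A^{\Iunrav} \cap \DIloose{k}$; combined with the characterisation $A^{\Iunrav} = \{x : \last{x} \in A^\I\}$ from Definition~\ref{def:unrav}, this yields $y \in A^{\Iloose{k}}$ iff $\last{y} \in A^\I$ for every $y \in \Neib{\Iloose{k}}{w}$. Role-edge preservation from $w$ to its neighbours is a direct consequence of the last item of Definition~\ref{def:loose}: each outgoing edge of $w$ is either inherited from $\Iunrav$ (and hence, by Lemma~\ref{lem:bisimlast} at the unraveling level, corresponds to the $\I$-edge $(\last{w}, \last{y})$) or is a redirected edge $(w, v_i')$, which likewise corresponds to the $\I$-edge $(\last{w}, \last{v_i'})$.

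The main subtlety, and the only point really requiring care, is well-definedness of the redirection across the iterative stages $\Ji{0}, \ldots, \Ji{n}$: since each $v_i$ is \emph{minimally} $k$-blocked, its blocking prefix $v_i'$ is itself not in $\blocked{k}{\Iunrav}$, so $v_i'$ survives in every $\Ji{j}$ for $j \geq i$ and the redirection target is a legitimate domain element of $\Iloose{k}$. Once this invariant is in place, combining the bijection with the atomic-concept and role-preservation clauses shows that $\izo$ is a bisimulation, yielding $w \eqbisim \last{w}$ as required.
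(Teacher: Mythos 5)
Your proof is correct in substance, but it is packaged differently from the paper's. The paper does not build a bisimulation from $\Neib{\Iloose{k}}{w}$ to $\Neib{\I}{\last{w}}$ directly: it defines a map $\izo:\Neib{\Iloose{k}}{w}\rightarrow\Neib{\Iunrav}{w}$ (identity on surviving successors, and $w'\mapsto w\cdot\last{w'}$ on redirect targets), shows $w_{\Iloose{k}}\eqbisim w_{\Iunrav}$, and then concludes via Lemma~\ref{lem:bisimlast}, implicitly composing bisimulations. You instead define the composite map $\last{\cdot}$ from the $\Iloose{k}$-neighbourhood of $w$ straight into $\I$ and verify Definition~\ref{def:bisim} against $\I$ directly. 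The ingredients are the same in both routes --- the suffix condition forcing $\last{v_i'}=\last{v_i}$, the per-role redirection clause of Definition~\ref{def:loose}, and minimality of blocked elements guaranteeing the construction is coherent --- so what your version buys is that you never need transitivity/composability of $\eqbisim$ (which the paper uses tacitly and never proves), at the cost of redoing some unraveling-level bookkeeping (e.g.\ named successors of a named $w$ are one-element sequences, not of the form $we$, a case your surjectivity sentence glosses but which Lemma~\ref{lem:bisimlast} handles).

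Two small points to tighten, neither a real gap and both at the same level of informality as the paper's own argument: (i) your survival claim for $v_i'$ needs slightly more than ``$v_i'\notin\blocked{k}{\Iunrav}$'' --- an element can also disappear as a descendant of a removed one; but if some $v_j\in\blocked{k}{\Iunrav}$ were a prefix of $v_i'$ it would be a $k$-blocked proper prefix of $v_i$, contradicting minimality, so the invariant does hold for the reason you invoke; (ii) for injectivity you should also rule out the mixed case of a redirect target and a \emph{surviving} successor sharing the same last element --- this is immediate, since $\last{v_i'}=\last{we}$ would force the removed $v_i=we$ to coincide with a surviving element --- in addition to the redirect-versus-redirect clash you treat via the shared $k$-suffix.
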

\begin{proof}
Take an arbitrary domain element~$w = w_{\Iloose{k}} \in \DIloose{k}$ and, since~$\DIloose{k} \subseteq \DIunrav$ holds 
(see: Definition~\ref{def:loose}), let~$w_{\Iunrav} = w$ be the corresponding element from~$\DIunrav$. To show that~$w$ and~$\last{w}$ are bisimilar,
is sufficient prove that~$w_{\Iloose{k}} \eqbisim w_{\Iunrav}$ and use Lemma~\ref{lem:bisimlast}.

We proceed as follows. We define a function~$\izo : \Neib{\Iloose{k}}{w} \rightarrow \Neib{\Iunrav}{w}$ 
as~$\izo(w') = w'$ for all~$w' \in \Neib{\Iloose{k}}{w} \cap \Neib{\Iunrav}{w}$
and~$f(w') = w \cdot \last{w'}$ otherwise (note that in this case~$w'$ is some of minimally~$k$--blocked elements). 

We first argue that~$\izo$ is a function. Since~$\DIloose{k} \subseteq \DIunrav$ holds, we infer that~$\izo$ is an identity function on the set~$\Neib{\Iloose{k}}{w} \cap \Neib{\Iunrav}{w}$, thus well-defined. The problematic case is when~$w'$ is not included in~$\Neib{\Iloose{k}}{w} \cap \Neib{\Iunrav}{w}$. 
Observe that in this case~$w'$ was identified, during the construction of~$\Iloose{k}$, 
with some~$k$--blocked element~$v \in \blocked{k}{\Iunrav}$,  which originally was a successor 
of~$w$. It means that~$v$ was~$k$--blocked by~$w'$ and from the definition of~$k$--blocked elements 
we infer that~$w'$ and~$v$ share the same suffix of length~$k$. Thus~$w'$ and~$v$ share the same last element.  
Since~$v$ is a successor of~$w$, then~$v = w \cdot \last{v} = w \cdot \last{w'}$. Hence the definition of~$\izo$ is sound.

To see that~$\izo : \restr{\Neib{\Iloose{k}}{w}}{\SuccI{\Iloose{k}}{w}} \rightarrow \restr{\Neib{\Iunrav}{w}}{\SuccI{\Iunrav}{w}}$ is a bijection, we 
can restrict our attention only to the elements not included in the set~$\Neib{\Iloose{k}}{w} \cap \Neib{\Iunrav}{w}$, since, as we already mentioned, on such set~$\izo$ is the identity function and thus, also a bijection. 
Observe that~$\izo$ is injection for any~$w' \in \Neib{\Iloose{k}}{w} \setminus \Neib{\Iunrav}{w}$.
Indeed, if there would be~$w', w''$ satisfying~$\izo(w') = \izo(w'')$, then it would
imply that they originated from the same successor of~$w$ in~$\Iunrav$ (since they share the same suffix),
which is clearly not possible.
To see that~$\izo$ is a surjection it is enough to see that for any 
successor~$w'= we$ of~$w$ in~$\Iunrav$ the function~$\izo$ is either identity (thus~$\izo(w')= w'$)
or~$w'$ was minimally~$k$--blocked and hance was identified with an element sharing the same last element. 
Hence,~$\izo$ (restricted to appropriate sets) is a bijection.

We will prove that~$\izo$ is a bisimulation. In the first part we will prove the following statement:
\[
\forall{A \in N_C} \; \forall{w' \in \Neib{\Iloose{k}}{w}} \;
\text{the equivalence} \; w' \in A^{\Iloose{k}} \Leftrightarrow  \izo(w') \in A^{\Iunrav} \; \text{holds.}
\]
Take an arbitrary concept name~$A$ and arbitrary domain element~$w' \in \Neib{\Iloose{k}}{w}$. 
If~$\izo(w') = w'$ then the above condition trivially holds. Assume that~$\izo(w') \neq w'$.
Then~$\izo(w') = w \last{w'}$ and the preservation of concepts follows from Definition~\ref{def:unrav}.

In the second part we will prove:
\[
\forall{r \in N_R} \; \forall{w' \in \Neib{\Iloose{k}}{w}} \;
\text{the equivalence} \; (w, w') \in r^{\Iloose{k}} \Leftrightarrow  (\izo(w), \izo(w')) \in r^{\Iunrav} \; \text{holds.}
\]
Take an arbitrary role name~$r$ and arbitrary domain element~$w' \in \Neib{\Iloose{k}}{w}$.  
Once more, if~$\izo(w') = w'$ then the above condition trivially holds. Assume that~$\izo(w') \neq w'$. 
Then again~$\izo(w') = w \last{w'} = v$ and~$v$ is minimally~$k$--blocked by~$w'$.
From Definition~\ref{def:loose} we know that~$(w,v) \in r^{\Iunrav}$ iff~$(w,w') \in r^{\Iloose{k}}$,
which proves the statement about (non)preservation of roles during the construction of~$\Iloose{k}$.

We conclude that~$\izo$ is a bisimulation and hence~$w_{\Iloose{k}} \eqbisim w_{\Iunrav}$ holds.
\end{proof}

The TBox preservation follows immediately from the previous lemma.

\begin{lemma} \label{lem:loosemodelhood}
For any finite~$\I$ and any normalized TBox~$\tbox$ and any natural~$k > 0$, 
the implication if~$\I \models \tbox$ then~$\Iloose{k} \models \tbox$ holds.
\end{lemma}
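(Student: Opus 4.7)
The plan is to mimic the structure of the proof of Lemma~\ref{lem:unravtbox}, leveraging Lemma~\ref{lem:loosebisim} to replace the role played there by Lemma~\ref{lem:bisimlast}. The core intuition is that, since every element of $\Iloose{k}$ is forward-bisimilar to an element of the original model $\I$, and since $\tbox$ is normalized (so that every GCI involves concepts of depth at most one), all concept memberships relevant to the TBox transfer from $\I$ to $\Iloose{k}$.

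More concretely, I would proceed as follows. Fix a normalized TBox $\tbox$, a finite model $\I$ of $\tbox$, a natural number $k>0$, and an arbitrary GCI $\varepsilon = C_0\sqsubseteq C_1$ in $\tbox$, where $C_0$ and $C_1$ are $\ALCSCC$ concepts of depth at most one. Pick an arbitrary $w\in\DIloose{k}$ with $w\in C_0^{\Iloose{k}}$; the goal is to derive $w\in C_1^{\Iloose{k}}$. By Lemma~\ref{lem:loosebisim}, $w$ is bisimilar to $\last{w}\in\DI$. By Observation~\ref{obs:neighbourhood}, bisimilar elements satisfy exactly the same $\ALCSCC$ concepts of depth at most one, so from $w\in C_0^{\Iloose{k}}$ we conclude $\last{w}\in C_0^{\I}$. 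Using $\I\models\varepsilon$, we obtain $\last{w}\in C_1^{\I}$, and then applying Observation~\ref{obs:neighbourhood} once more in the opposite direction yields $w\in C_1^{\Iloose{k}}$. Since $w$ and $\varepsilon$ were arbitrary, $\Iloose{k}\models\tbox$.

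There is no genuine obstacle here, because all heavy lifting has already been isolated in Lemma~\ref{lem:loosebisim} and Observation~\ref{obs:neighbourhood}; the proof is a verbatim analogue of Lemma~\ref{lem:unravtbox} with $\Iunrav$ replaced by $\Iloose{k}$. The only subtle point worth explicitly flagging is the reliance on normalization: concepts in $\tbox$ have depth at most one, which matches exactly the expressive power preserved by forward-neighbourhood bisimulation as formalized in Observation~\ref{obs:neighbourhood}. Were deeper concepts allowed, $\eqbisim$ would be too coarse and the argument would fail, so this is precisely the place where the normalization assumption made at the start of the section is indispensable.
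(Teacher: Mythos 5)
Your proposal is correct and follows essentially the same route as the paper: the paper's proof likewise reduces the claim to the argument of Lemma~\ref{lem:unravtbox}, invoking Lemma~\ref{lem:loosebisim} (bisimilarity of $w$ and $\last{w}$) together with Observation~\ref{obs:neighbourhood} to transfer satisfaction of the depth-one concepts in each normalized GCI. You merely spell out explicitly what the paper states by reference, including the correct remark on why normalization is needed.
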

\begin{proof}
Take an arbitrary finite interpretation~$\I$, a normalized TBox~$\tbox$ and a positive integer~$k$.
Assume that~$\I \models \tbox$ holds. To prove that each GCI~$\varepsilon$ from~$\tbox$ is also satisfied in~$\Iloose{k}$,
we apply the same reasoning as we already done for~Lemma~\ref{lem:unravtbox}. Namely, it is sufficient to prove that the~$k$--loosening
construction is concept preserving but it can be concluded from Definition~\ref{def:bisim} (of bisimulation) and from Lemma~\ref{lem:loosebisim}.
\end{proof}

\begin{lemma} \label{lem:loosenonentail}
For any~$k \in \Nat$, if~$\I \not\models q$ then~$\Iloose{k} \not\models q$.
\end{lemma}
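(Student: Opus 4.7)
The plan is to mimic the strategy used for Lemma~\ref{lem:unravnonentail}: rather than transporting a match along an isomorphism or bisimulation, we exhibit a homomorphism from $\Iloose{k}$ to $\I$ and push back any putative match of $q$ through it, deriving a contradiction with $\I \not\models q$.

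Concretely, I would define $\homo : \DIloose{k} \to \DI$ by $\homo(w) := \last{w}$, which makes sense because $\DIloose{k} \subseteq \DIunrav$ by the first item of Definition~\ref{def:loose}, so every $w \in \DIloose{k}$ is a nonempty sequence in $(\DI)^{+}$. Preservation of concept names is immediate from the construction: since $A^{\Iloose{k}} = A^{\Iunrav} \cap \DIloose{k}$ by Definition~\ref{def:loose} and $A^{\Iunrav} = \{ w \mid \last{w} \in A^{\I} \}$ by Definition~\ref{def:unrav}, the implication $w \in A^{\Iloose{k}} \Rightarrow \homo(w) \in A^{\I}$ holds trivially.

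The main obstacle is role preservation, because $k$-loosening introduces redirected ``back edges'': whenever a minimally $k$-blocked $v_i$ is removed, every edge $(w, v_i) \in r^{\Ji{i-1}}$ is rerouted to $(w, v_i')$, where $v_i'$ is the $k$-blocker of $v_i$. The critical observation is that, by the very definition of $k$-blocking, $v_i$ and $v_i'$ share a common suffix of length $k \ge 1$, so in particular $\last{v_i} = \last{v_i'}$. Hence for any edge $(w, w') \in r^{\Iloose{k}}$, regardless of whether it is inherited from $\Iunrav$ (in which case $(\last{w}, \last{w'}) \in r^{\I}$ follows directly from the last item of Definition~\ref{def:unrav}) or produced by a redirection step (in which case it reduces to the inherited case via $\last{v_i} = \last{v_i'}$), we obtain $(\homo(w), \homo(w')) \in r^{\I}$. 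Edges internal to $\DInamed \times \DInamed$ are handled uniformly since $\last{a} = a$ for every named individual.

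Once $\homo$ is established as a homomorphism, the rest is standard. Assuming for contradiction that $\Iloose{k} \models q$, fix a match $\pi : \Vlang \to \DIloose{k}$ and consider $\pi' := \homo \circ \pi$; by homomorphism preservation, $\pi'$ is a match of $q$ on $\I$, contradicting $\I \not\models q$. Therefore $\Iloose{k} \not\models q$, as required.
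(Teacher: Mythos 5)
Your proof is correct and takes essentially the same route as the paper, which also pulls a putative match back through the map $\pi'(x) = \last{\pi(x)}$, relying on $\last{\cdot}$ being a homomorphism from $\Iloose{k}$ to $\I$ exactly as in Lemma~\ref{lem:unravnonentail}. Your explicit check that redirected edges $(w,v_i')$ are harmless because the blocker $v_i'$ and $v_i$ share a suffix of length $k\geq 1$ (hence the same last element) is precisely the detail the paper leaves implicit.
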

\begin{proof}
Assume that~$\I \not\models q$, but~$\Iloose{k} \models q$. In this case there exists a match~$\pi$ of~$q$ on~$\Iloose{k}$.
By using the same ideas as for Lemma~\ref{lem:unravnonentail} we argue that in this case~$\pi'$ with~$\pi'(x) = \last{\pi(x)}$ 
would be a match of~$q$ on~$\I$, which contradicts with~$\I \not\models q$. Thus~$\Iunrav \not\models q$ holds.
\end{proof}

For a given interpretation~$\J$, an \emph{anonymous cycle} is simply a 
word~$w \in (\DJ)^{+} \cdot (\DJ \setminus \DJnamed) \cdot (\DJ)^{+}$, 
where first and the last element are the same, and for any two consecutive 
elements~$d_{i},d_{i+1}$ of~$w$ there exists a role~$r$ 
witnessing~$(d_{i}, d_{i+1}) \in r^{\J}$. The \emph{girth} of~$\J$ is the 
length of the smallest anonymous cycle in~$\J$ if such a cycle exists or~$\infty$ otherwise.
The main feature of the~$k$--loosening~$\Iloose{k}$ is that the 
girth of~$\Iloose{k}$ is at least~$k$, as proven below. 

\begin{lemma} \label{lem:loosegirthk}
For any~$k \in \Nat$ and any finite interpretation~$\I$, the girth of~$\Iloose{k}$ is at least~$k$. 
\end{lemma}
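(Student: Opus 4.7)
My plan is to prove the girth bound by classifying the edges of $\Iloose{k}$ and running a depth-counting argument on any would-be short anonymous cycle. First I would show that every role-edge of $\Iloose{k}$ is of one of three kinds: (i) \emph{extension edges} $(w, wd)$ inherited from $\Iunrav$ where both endpoints survive the loosening, which strictly lengthen the sequence by one; (ii) \emph{named-to-named edges} taken from $r^\I \cap (\DInamed \times \DInamed)$, which stay between one-element sequences; and (iii) \emph{redirected edges} $(w, v_i')$ introduced at step $i$ of the construction because $v_i = wd$ is minimally $k$-blocked by its prefix $v_i'$. I would also verify that this classification is stable across the iterative construction: since $v_i$ is minimally $k$-blocked, no $v_j$ processed later can be a proper prefix of $v_i$, so $v_i'$ and $w$ are never removed in a later step, and edges introduced at step $i$ are not themselves redirected again.

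Next I would assign each surviving element $w \in \DIloose{k}$ a \emph{depth} $|w|$, the length of its underlying sequence, and determine the depth-change behaviour along each edge type: type~(i) increases depth by exactly $1$; type~(ii) leaves both endpoints at depth~$1$; type~(iii) decreases depth by at least $k$, because $v_i = v_i' w'$ with $|w'| > k$ gives $|w| = |v_i| - 1 \geq |v_i'| + k$. I would then take an arbitrary anonymous cycle $d_0, d_1, \ldots, d_n = d_0$ in $\Iloose{k}$, let $f, b, r$ denote the number of edges of types (i), (ii), (iii) respectively, and argue that both $f \geq 1$ and $r \geq 1$: type-(ii) edges alone cannot involve the anonymous element mandated by the definition of an anonymous cycle, while pure extension paths strictly raise depth and pure redirection paths strictly lower it. The condition that depth telescopes to zero around the cycle then yields $f \geq k\cdot r \geq k$, so the cycle word has length $f + b + r + 1 \geq k+2$, which is at least $k$.

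The main obstacle I anticipate is not the high-level depth accounting but the low-level bookkeeping for redirected edges. Specifically, I need to carefully argue that at step $i$, when an edge $(w, v_i)$ is replaced by $(w, v_i')$, the target $v_i'$ is never excised at any later step $j > i$, and no subsequent redirection weakens the depth-drop estimate $|w| - |v_i'| \geq k$. As indicated above, this reduces to showing that the minimally $k$-blocked elements form an antichain with respect to the prefix order, which follows directly from the minimality clause in their definition. Once that antichain property is established, the rest of the proof is a clean counting argument.
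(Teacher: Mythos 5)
Your proof is correct, but it follows a genuinely different route from the paper's. The paper proceeds by induction along the auxiliary interpretations $\Ji{0}=\Iunrav,\Ji{1},\ldots,\Ji{n}=\Iloose{k}$: assuming $\Ji{m-1}$ has girth at least $k$, any short anonymous cycle in $\Ji{m}$ would have to traverse the freshly introduced edge $(w,v_m')$, and the remaining portion is then a path from $v_m'$ back to $w$, which must have length at least $k$ because the blocking prefix $v_m'$ lies at depth at least $k$ above the predecessor $w$ of the blocked element $v_m$. You instead argue once, globally, on the final structure: you classify its edges into extension, named-to-named, and redirected edges with depth changes $+1$, $0$, and $\leq -k$ respectively, and let the telescoping of depths around an anonymous cycle force $f\geq k\cdot r$ with $r\geq 1$ (a purely named-to-named cycle cannot contain the required anonymous element), yielding length at least $k+2$. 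Both arguments rest on the same bookkeeping, which you make explicit and the paper leaves largely implicit: by minimality, no element of $\blocked{k}{\Iunrav}$ is a prefix of $w$ or $v_i'$, so the endpoints of redirected edges survive all later steps, redirected edges are never themselves redirected, and the only edge entering a blocked element $v_i$ in $\Ji{i-1}$ comes from its immediate prefix (this last point is what licenses your estimate $|w|\geq|v_i'|+k$ and is worth stating as explicitly as the antichain property). What your version buys is a uniform treatment of cycles passing through several redirected edges, no need to reason about a shortest cycle through the newly added edge at each stage, and in fact a slightly stronger bound ($k+2$ rather than $k$); the paper's induction is shorter to state because each step only has to account for a single bundle of new edges.
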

\begin{proof}
We will prove inductively over immediate structures~$\Ji{0} = \Iunrav, \Ji{1}, \ldots, \Ji{n} = \Iloose{k}$ 
produced in Definition~\ref{def:loose} that each of them have girth greater than~$k$.
For~$i = 0$ it is clear that~$\Ji{0}$ has girth at least~$k$ (actually its girth is~$\infty$).
Assume that for all~$i < m$ the girth of each~$\Ji{i}$ for~$i < m$ is at least~$k$. We will show that the girth of~$\I_m$ is at least~$k$.

For contradiction assume that the girth of~$\Ji{m}$ is smaller than~$k$. We recall that~$v_m$ is the~$m$--th 
minimally~$k$--blocked elements from~$\blocked{k}{\Iunrav}$ and~$v_m'$ is the element~$k$--blocking~$v_m$.
Since~$\Ji{m}$ was obtained from~$\Ji{m-1}$ and the girth of~$\Ji{m-1}$ is at least~$k$ then the only
possibility of a anonymous cycle of length at least~$k$ to be present in~$\Ji{m}$ is to contain 
a freshly added edge between predecessors~$w$ of~$v_m$ and~$v_m'$, namely $(w, v_m')$ for 
some~$r \in N_R$ as a replacement for an original edge~$(w, v_m)$.

Let~$\rho$ be an arbitrary shortest anonymous cycle in~$\Ji{m-1}$. As we already discussed it contains an edge~$(w, v_m)$
between some domain element~$w$. Hence~$\rho$ is in the form~$(w, v_m') \rho'$ where~$\rho'$ is some path from~$v_m'$ to~$w$.
But note that due the definition of~$k$--blocked element the distance between~$v_m$ and~$v_m'$ is at least~$k$. Hence~$\rho'$ is 
of length at least~$k$. Thus~$\rho$ is not shorter than~$k$, which contradict our initial assumption.
Hence the girth of~$\Ji{m}$ is at least~$k$, which allows us to conclude that the girth of~$\Ji{n} = \Iloose{k}$ is also at least~$k$. 
\end{proof}

Once~$k$ is greater than the number of atoms in~$q$ (denoted with~$|q|$), 
the~$k$--loosening of a model is still ``locally acyclic enough'' so the query 
matches only in a ``forest-shaped'' manner. We will exploit this
property when designing an algorithm for deciding conjunctive query entailment
in Section~\ref{sec:queryentailmentalgo}.

\begin{lemma} \label{lem:eqqueriesunravloose}
For every conjunctive query~$q$, a positive integer~$k > |q|$ and a finite interpretation~$\I$, 
the following equivalence~$\Iunrav \models q \Leftrightarrow \Iloose{k} \models q$ holds.
\end{lemma}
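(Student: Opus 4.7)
For the forward direction $\Iunrav \models q \Rightarrow \Iloose{k} \models q$, I define a map $\rho : \DIunrav \to \DIloose{k}$ by iteratively collapsing minimally $k$-blocked prefixes: as long as $w \in \DIunrav$ has a prefix $v \in \blocked{k}{\Iunrav}$, I write $w = v \cdot u$ and replace $w$ by $v' \cdot u$, where $v'$ is the element $k$-blocking $v$. The process is well-defined since $\blocked{k}{\Iunrav}$ is a prefix-antichain (giving a unique $v$ at each step) and it terminates because each replacement strictly shortens the sequence by $|v| - |v'| > k$. I then verify that $\rho$ is a homomorphism $\Iunrav \to \Iloose{k}$ preserving atomic concepts: concept preservation uses that $\last{\cdot}$ is invariant under a single replacement (because $v$ and $v'$ share their length-$k$ suffix), while role preservation is shown by induction on the number of collapses applied to the source vertex, exploiting that every $\Iunrav$-edge $(w, wd)$ is either preserved in $\Iloose{k}$, or turned into the redirect edge $(w, v')$ when $wd$ becomes minimally blocked by $v'$, or reduced to an analogous edge between shorter prefixes. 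Composing $\rho$ with a match $\pi : q \to \Iunrav$ then gives the desired match $\rho \circ \pi : q \to \Iloose{k}$.

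For the backward direction $\Iloose{k} \models q \Rightarrow \Iunrav \models q$, let $\pi : q \to \Iloose{k}$ be a match. Since $\pi(q)$ contains at most $|q|$ role atoms and $\Iloose{k}$ has girth at least $k > |q|$ by Lemma~\ref{lem:loosegirthk}, $\pi(q)$ admits no anonymous cycle. I combine this with two facts from the construction of $\Iloose{k}$: the blocker $v'$ of any minimally $k$-blocked element must have $|v'| \geq k \geq 2$, so redirect edges never target named (length-$1$) sequences; and both original and redirect non-named-to-non-named edges preserve the first element of the involved sequences. Together these imply that every connected component $T$ of non-named vertices in $\pi(q)$ has a common first element $d_0$, and if $d_0 = a^\I$ for some $a \in \indA$ then $T$ is attached in $\pi(q)$ to exactly one named individual (namely $a$); otherwise $T$ is detached. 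I build $\pi' : q \to \Iunrav$ as follows: named variables are kept in place, the anchor variable of an anchored component $T$ is sent to $a^{\Iunrav}$, and a detached component is rooted at an arbitrarily chosen one of its elements (which, since $\DIloose{k} \subseteq \DIunrav$, can serve directly as a position). Then, traversing the undirected tree of each component, $\pi'$ is extended to the remaining variables using the forward-extension correspondence: for both original and redirect edges $(u, w) \in r^{\Iloose{k}}$, one has $(\last{u}, \last{w}) \in r^\I$, so the $\Iunrav$-edge $(\pi'(u), \pi'(u) \cdot \last{w})$ realizes the query atom, and concept atoms are preserved because $\last{\pi'(y)} = \last{\pi(y)}$ is maintained throughout the construction.

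The main technical obstacle is establishing, in the backward direction, that the traversal produces a globally consistent assignment $\pi'$. When a variable $y$ is incident in $\pi(q)$ to multiple edges of different ``types'' (original vs.\ redirect), the lift may force different query variables to share their $\pi'$-image or force $\pi'(y)$ to move away from $\pi(y)$; the consistency of these movements is ensured by the acyclicity of the non-named part of $\pi(q)$ that was extracted from the girth bound. Specifically, each variable in a component is reached along a unique undirected path from the chosen root, which uniquely determines its $\pi'$-image, while the absence of non-named-to-named edges in $\Iloose{k}$ (for $k\geq 2$) prevents conflicts between the rigid named-individual positions and the tree-traversal-induced positions of non-named variables.
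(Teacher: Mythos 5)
Your forward direction is fine, and it is essentially a sharpened version of what the paper does: where the paper invokes the suffix map $\suff{k}{\cdot}$, you make the collapsing of minimally $k$-blocked prefixes explicit and check that last elements are preserved and that each image pair is either a surviving tree edge or one of the redirect edges added in Definition~\ref{def:loose}. I see no problem with that half.

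The backward direction, however, has a genuine gap, and it sits exactly where you put the weight: the step from ``$\pi(q)$ admits no anonymous cycle'' to ``the non-named part of $\pi(q)$ is acyclic, so every variable is reached along a unique undirected path from the root.'' Anonymous cycles, and hence the girth bound of Lemma~\ref{lem:loosegirthk}, are \emph{directed}: every consecutive pair must be a role edge in the given orientation. Undirected acyclicity of $\Iloose{k}$ on short scales does not follow, because redirect edges can create confluences: two distinct minimally $k$-blocked elements can have the same blocker while their parents are adjacent. Concretely, take $\I$ with one named element $o$, anonymous elements $x,a$, and $r$-edges $o\to x$, $x\to x$, $x\to a$, $a\to x$, and let $k=5$. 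Then $v_1 = o\,x^4\,a\,x^5\,a$ and $v_2 = o\,x^4\,a\,x^6\,a$ are both minimally $5$-blocked with the same blocker $v' = o\,x^4\,a$, their parents are $u_1 = o\,x^4\,a\,x^5$ and $u_2 = u_1 x$, and $\Iloose{5}$ contains the edges $(u_1,u_2)$, $(u_1,v')$, $(u_2,v')$: an undirected triangle among anonymous elements containing no directed anonymous cycle, so Lemma~\ref{lem:loosegirthk} does not rule it out. On such a configuration your traversal is not well defined (there is no unique undirected path within the component), and no repair along your lines succeeds: lifting a match of the query $\{r(z_1,z_2),\, r(z_2,z_3),\, r(z_1,z_3)\}$ to $\Iunrav$ would force $\pi'(z_1)=\pi'(z_2)$, since an anonymous element of $\Iunrav$ has a unique role predecessor, and then the atom $r(z_1,z_2)$ demands a self-loop that $\Iunrav$ need not have. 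So the consistency claim in your third paragraph cannot be derived from the girth bound alone; you would need either an undirected-girth statement (which Lemma~\ref{lem:loosegirthk} does not give) or an explicit treatment of confluent redirect edges via fork-style identifications, and the example shows the latter can clash with other query atoms. For comparison, the paper's own proof makes the same leap, asserting that the $k$-reachable part of $\Iloose{k}$ is tree-shaped because a short cycle would contradict the girth bound; so you have reproduced the paper's route, including its weakest step, rather than closed it, and as a self-contained argument your backward direction is not established.
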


\begin{proof}
Let~$\suff{s}{w}$ be a function which for an input word~$w \in (\DI)^+$ returns~$w$ if~$|w| \leq s$ or its suffix of length~$s$ otherwise.
Moreover let~$\Iunrav_{k}$ be the substructure of~$\Iunrav$ with domain restricted to sequences of length at most~$k$ only.
Note that~$\homo(w) = \suff{k}{w}$ is a homomorphism from~$\Iunrav$ to~$\Iunrav$ (since~$w \eqbisim \suff{k}{w}$, 
see the proof of Lemma~\ref{lem:unravtbox}). Hence if there is a match~$\pi$ of~$q$ in~$\Iunrav$, there is also 
a match~$\pi'$ of~$q$ in~$\Iunrav_{k}$. Since~$\Iunrav_{k}$ is a substructure of~$\Iloose{k}$ (due to 
the definition of minimally~$k$--blocked elements and Definition~\ref{def:loose}), hence~$\pi'$ is also a match in~$\Iloose{k}$.

For the opposite way, that i.e.,~$\Iloose{k} \models q$ implies~$\Iunrav \models q$, it is sufficient to show (since~$k > |q|$)
that there is a homomorphism from any substructure of the size~$k$ of~$\Iloose{k}$ to~$\Iunrav$.
Take an arbitrary element~$w \in \DIloose{k}$ and take a interpretation~$\Iloose{k}_w$ be an interpretation obtained
by restricting the domain to elements reachable from~$w$ in at most~$k$ steps. More formally we define the 
sets~$R_i(w)$ of those elements reachable from~$w$ in at most~$i$ steps, i.e.~$R_0(w) = \{ w \}$,
and~$R_i(w) = R_{i-1}(w) \cup \{ v \in \DIloose{k} \mid \exists{r \in N_R} \; (u,v) \in r^{\Iloose{k}} \wedge u \in R_{i-1}(w) \}$ 
for all~$i>0$. We set~$\DIloose{k}_w = R_k(w)$. First see that~$\Iloose{k}_w$ is a tree-shaped. 
Indeed if it would contain an anonymous cycle of length at most~$k$ it would contradict the fact that the girth of~$\Iloose{k}$ is at 
least~$k$ (by Lemma~\ref{lem:loosegirthk}). Hence we take a homomorphism~$\homo : \Iloose{k}_w \rightarrow \Iunrav$ defined as~$\homo(x) = \suff{k}{x}$
and see that if there is a match~$\pi$ of~$q$ in~$\Iloose{k}$, then~$\pi' = (\homo \circ \pi)$ would also be a match of~$q$ in~$\Iunrav$. 
\end{proof}

\subsubsection{Making ERCBoxes be satisfied again} \label{sec:dupl}

We next consider how to adjust a $k$-loosening such that it again satisfies the initial ERCBox.  
Since role inverses are not expressible in~$\ALCSCC$, creating multiple copies of a single element and forward-linking them to other elements precisely in the same way as the original element, can be done without any harm to modelhood nor query-non-entailment. We formalize this intuition below.

\begin{definition} \label{def:repair}
For any interpretation~$\I$ and any sets~$S \subseteq (\DI \times \Nat_+)$ 
we define the~\emph{$S$--duplication} of~$\I$ as the interpretation~$\I_{{+}S} = (\DI_{{+}S}, \cdot^{\I_{{+}S}})$ with:
\begin{itemize}
	\item~$\DI_{{+}S} = \DI \cup \; \bigcup_{(v,n) \in S} \{ v_{\mathit{cpy}}^{(i)} \mid 1 \leq i \leq n \}$,
	\item~$a^{\I_{{+}S}} = a^{\I}$ for each individual name~$a \in \indA$,
	\item For concept names~$A \in N_C$ and role names~$r \in N_R$ we set:
	\begin{itemize}
	\item $A^{\I_{{+}S}} = A^{\I} \cup \; \bigcup_{(v,n) \in S} \Big\{ v_{\mathit{cpy}}^{(i)} \mid 1 \leq i \leq n \wedge v \in A^{\I} \Big\}, \; \text{and,} $
	\item $r^{\I_{{+}S}} = r^{\I} \cup \; \bigcup_{(v,n) \in S} \Big\{ (v_{\mathit{cpy}}^{(i)}, w) \mid 1 \leq i \leq n \wedge (v,w) \in r^{\I} \Big\}$. 
	\end{itemize}
\end{itemize}
\end{definition}

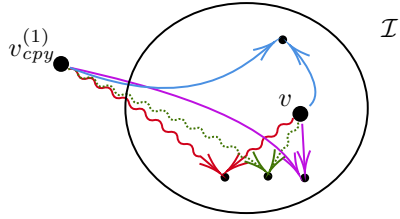
\begin{figure}[!h]
\centering

\tikzset{every picture/.style={line width=0.75pt}} 

\begin{tikzpicture}[x=0.75pt,y=0.75pt,yscale=-1,xscale=1]

\draw  [fill={rgb, 255:red, 0; green, 0; blue, 0 }  ,fill opacity=1 ] (335,118.5) .. controls (335,116.57) and (336.57,115) .. (338.5,115) .. controls (340.43,115) and (342,116.57) .. (342,118.5) .. controls (342,120.43) and (340.43,122) .. (338.5,122) .. controls (336.57,122) and (335,120.43) .. (335,118.5) -- cycle ;
\draw  [fill={rgb, 255:red, 0; green, 0; blue, 0 }  ,fill opacity=1 ] (298.77,150.5) .. controls (298.77,149.56) and (299.53,148.8) .. (300.47,148.8) .. controls (301.41,148.8) and (302.17,149.56) .. (302.17,150.5) .. controls (302.17,151.44) and (301.41,152.2) .. (300.47,152.2) .. controls (299.53,152.2) and (298.77,151.44) .. (298.77,150.5) -- cycle ;
\draw  [fill={rgb, 255:red, 0; green, 0; blue, 0 }  ,fill opacity=1 ] (320.48,149.99) .. controls (320.48,149.05) and (321.24,148.29) .. (322.18,148.29) .. controls (323.12,148.29) and (323.88,149.05) .. (323.88,149.99) .. controls (323.88,150.92) and (323.12,151.69) .. (322.18,151.69) .. controls (321.24,151.69) and (320.48,150.92) .. (320.48,149.99) -- cycle ;
\draw  [fill={rgb, 255:red, 0; green, 0; blue, 0 }  ,fill opacity=1 ] (339.03,150.83) .. controls (339.03,149.89) and (339.79,149.13) .. (340.73,149.13) .. controls (341.67,149.13) and (342.43,149.89) .. (342.43,150.83) .. controls (342.43,151.77) and (341.67,152.53) .. (340.73,152.53) .. controls (339.79,152.53) and (339.03,151.77) .. (339.03,150.83) -- cycle ;
\draw [color={rgb, 255:red, 208; green, 2; blue, 27 }  ,draw opacity=1 ]   (334.75,120.57) .. controls (334.5,122.92) and (333.2,123.96) .. (330.86,123.71) .. controls (328.51,123.46) and (327.22,124.5) .. (326.97,126.85) .. controls (326.72,129.2) and (325.43,130.25) .. (323.08,130) .. controls (320.73,129.75) and (319.44,130.79) .. (319.19,133.14) .. controls (318.94,135.49) and (317.65,136.53) .. (315.3,136.28) .. controls (312.95,136.03) and (311.66,137.07) .. (311.41,139.42) -- (309.16,141.23) -- (302.94,146.26) ;
\draw [shift={(301.38,147.52)}, rotate = 321.08000000000004] [color={rgb, 255:red, 208; green, 2; blue, 27 }  ,draw opacity=1 ][line width=0.75]    (10.93,-3.29) .. controls (6.95,-1.4) and (3.31,-0.3) .. (0,0) .. controls (3.31,0.3) and (6.95,1.4) .. (10.93,3.29)   ;
\draw [color={rgb, 255:red, 65; green, 117; blue, 5 }  ,draw opacity=1 ] [dash pattern={on 0.75pt off 0.75pt}]  (337.26,122.91) .. controls (337.87,125.19) and (337.03,126.63) .. (334.75,127.24) .. controls (332.47,127.84) and (331.63,129.28) .. (332.24,131.56) .. controls (332.84,133.84) and (332,135.28) .. (329.72,135.88) -- (328,138.84) -- (323.98,145.76) ;
\draw [shift={(322.98,147.49)}, rotate = 300.17] [color={rgb, 255:red, 65; green, 117; blue, 5 }  ,draw opacity=1 ][line width=0.75]    (10.93,-3.29) .. controls (6.95,-1.4) and (3.31,-0.3) .. (0,0) .. controls (3.31,0.3) and (6.95,1.4) .. (10.93,3.29)   ;
\draw [color={rgb, 255:red, 189; green, 16; blue, 224 }  ,draw opacity=1 ]   (339.43,123.5) -- (340.67,146.09) ;
\draw [shift={(340.78,148.09)}, rotate = 266.85] [color={rgb, 255:red, 189; green, 16; blue, 224 }  ,draw opacity=1 ][line width=0.75]    (10.93,-3.29) .. controls (6.95,-1.4) and (3.31,-0.3) .. (0,0) .. controls (3.31,0.3) and (6.95,1.4) .. (10.93,3.29)   ;
\draw  [fill={rgb, 255:red, 0; green, 0; blue, 0 }  ,fill opacity=1 ] (214.33,93.3) .. controls (214.33,91.37) and (215.9,89.8) .. (217.83,89.8) .. controls (219.77,89.8) and (221.33,91.37) .. (221.33,93.3) .. controls (221.33,95.23) and (219.77,96.8) .. (217.83,96.8) .. controls (215.9,96.8) and (214.33,95.23) .. (214.33,93.3) -- cycle ;
\draw [color={rgb, 255:red, 208; green, 2; blue, 27 }  ,draw opacity=1 ]   (221.53,96) .. controls (223.84,95.25) and (225.4,96.01) .. (226.23,98.28) .. controls (227.02,100.55) and (228.55,101.34) .. (230.84,100.65) .. controls (233.14,99.98) and (234.5,100.71) .. (234.91,102.83) .. controls (235.58,105.11) and (237.05,105.94) .. (239.34,105.32) .. controls (241.64,104.73) and (243.09,105.58) .. (243.68,107.87) .. controls (244.24,110.15) and (245.66,111.01) .. (247.93,110.46) .. controls (250.22,109.93) and (251.6,110.8) .. (252.08,113.09) .. controls (252.79,115.54) and (254.27,116.52) .. (256.53,116.01) .. controls (258.79,115.53) and (260.1,116.43) .. (260.47,118.7) .. controls (261.05,121.14) and (262.45,122.13) .. (264.68,121.66) .. controls (266.91,121.21) and (268.27,122.2) .. (268.77,124.63) .. controls (268.98,126.86) and (270.29,127.85) .. (272.72,127.6) .. controls (275.15,127.35) and (276.42,128.33) .. (276.55,130.54) .. controls (276.87,132.91) and (278.21,133.98) .. (280.57,133.73) .. controls (282.92,133.48) and (284.21,134.53) .. (284.44,136.86) .. controls (284.62,139.17) and (285.85,140.2) .. (288.13,139.94) -- (290.79,142.19) -- (296.63,147.26) ;
\draw [shift={(297.98,148.44)}, rotate = 221.55] [color={rgb, 255:red, 208; green, 2; blue, 27 }  ,draw opacity=1 ][line width=0.75]    (10.93,-3.29) .. controls (6.95,-1.4) and (3.31,-0.3) .. (0,0) .. controls (3.31,0.3) and (6.95,1.4) .. (10.93,3.29)   ;
\draw [color={rgb, 255:red, 65; green, 117; blue, 5 }  ,draw opacity=1 ] [dash pattern={on 0.75pt off 0.75pt}]  (221.53,96) .. controls (223.84,95.25) and (225.46,96.01) .. (226.39,98.28) .. controls (227.02,100.41) and (228.54,101.11) .. (230.93,100.38) .. controls (233,99.5) and (234.57,100.22) .. (235.62,102.53) .. controls (236.34,104.69) and (237.76,105.34) .. (239.88,104.48) .. controls (242.01,103.62) and (243.64,104.36) .. (244.76,106.7) .. controls (245.53,108.89) and (247,109.56) .. (249.16,108.71) .. controls (251.33,107.87) and (252.81,108.55) .. (253.6,110.74) .. controls (254.39,112.93) and (255.88,113.61) .. (258.07,112.78) .. controls (260.26,111.96) and (261.74,112.65) .. (262.53,114.85) .. controls (263.32,117.05) and (264.8,117.74) .. (266.99,116.92) .. controls (269.54,116.29) and (271.2,117.07) .. (271.96,119.27) .. controls (272.7,121.47) and (274.16,122.17) .. (276.33,121.36) .. controls (278.5,120.56) and (279.94,121.26) .. (280.64,123.46) .. controls (281.31,125.65) and (282.89,126.43) .. (285.38,125.81) .. controls (287.51,125.02) and (288.88,125.71) .. (289.49,127.89) .. controls (290.39,130.24) and (291.88,131.02) .. (293.97,130.23) .. controls (296.37,129.62) and (297.8,130.4) .. (298.27,132.55) .. controls (298.99,134.86) and (300.5,135.71) .. (302.81,135.11) .. controls (305.11,134.52) and (306.53,135.36) .. (307.08,137.63) .. controls (307.54,139.87) and (308.98,140.78) .. (311.41,140.36) -- (313.24,141.58) -- (319.33,146.11) ;
\draw [shift={(320.72,147.29)}, rotate = 221.55] [color={rgb, 255:red, 65; green, 117; blue, 5 }  ,draw opacity=1 ][line width=0.75]    (10.93,-3.29) .. controls (6.95,-1.4) and (3.31,-0.3) .. (0,0) .. controls (3.31,0.3) and (6.95,1.4) .. (10.93,3.29)   ;
\draw [color={rgb, 255:red, 189; green, 16; blue, 224 }  ,draw opacity=1 ]   (222.33,95.2) .. controls (261.12,103.73) and (320.45,119.1) .. (337.12,147.88) ;
\draw [shift={(337.85,149.2)}, rotate = 242.26] [color={rgb, 255:red, 189; green, 16; blue, 224 }  ,draw opacity=1 ][line width=0.75]    (10.93,-3.29) .. controls (6.95,-1.4) and (3.31,-0.3) .. (0,0) .. controls (3.31,0.3) and (6.95,1.4) .. (10.93,3.29)   ;
\draw  [fill={rgb, 255:red, 0; green, 0; blue, 0 }  ,fill opacity=1 ] (327.92,81.1) .. controls (327.92,80.16) and (328.68,79.4) .. (329.62,79.4) .. controls (330.56,79.4) and (331.32,80.16) .. (331.32,81.1) .. controls (331.32,82.04) and (330.56,82.8) .. (329.62,82.8) .. controls (328.68,82.8) and (327.92,82.04) .. (327.92,81.1) -- cycle ;
\draw [color={rgb, 255:red, 74; green, 144; blue, 226 }  ,draw opacity=1 ]   (343.13,114.5) .. controls (350.57,108.01) and (342.52,93.77) .. (334.4,85.07) ;
\draw [shift={(333.13,83.75)}, rotate = 405] [color={rgb, 255:red, 74; green, 144; blue, 226 }  ,draw opacity=1 ][line width=0.75]    (10.93,-3.29) .. controls (6.95,-1.4) and (3.31,-0.3) .. (0,0) .. controls (3.31,0.3) and (6.95,1.4) .. (10.93,3.29)   ;
\draw [color={rgb, 255:red, 74; green, 144; blue, 226 }  ,draw opacity=1 ]   (222.33,95.2) .. controls (284.58,119.42) and (309.98,94.52) .. (323.9,83.4) ;
\draw [shift={(325.38,82.25)}, rotate = 502.59] [color={rgb, 255:red, 74; green, 144; blue, 226 }  ,draw opacity=1 ][line width=0.75]    (10.93,-3.29) .. controls (6.95,-1.4) and (3.31,-0.3) .. (0,0) .. controls (3.31,0.3) and (6.95,1.4) .. (10.93,3.29)   ;
\draw   (250.25,115.58) .. controls (250.25,86.31) and (277.62,62.58) .. (311.38,62.58) .. controls (345.13,62.58) and (372.5,86.31) .. (372.5,115.58) .. controls (372.5,144.85) and (345.13,168.58) .. (311.38,168.58) .. controls (277.62,168.58) and (250.25,144.85) .. (250.25,115.58) -- cycle ;

\draw (330.98,111.68) node    {$v$};
\draw (203.33,83.9) node    {$v_{\mathit{cpy}}^{(1)}$};
\draw (383.48,75.18) node    {$\I$};

\end{tikzpicture}

\caption{The interpretation~$\I_{+\{(v,1)\}}$ obtained from~$\I$ by duplicating a node~$v$.}
\end{figure}

As in the case of previous constructions, one can show that the $S$--duplication of~$\I$ preserves satisfaction of ABoxes and TBoxes.
\begin{lemma} \label{lem:repairmodelhood}
For any finite~$\I$ and normalized ABox~$\abox$ and normalized TBox~$\tbox$, 
if~$\I \models (\abox, \tbox)$, then for any~$S \subseteq (\DI \times \Nat_+)$, 
the~$S$--duplication~$\I_{{+}S}$ of~$\I$ is also a model of~$(\abox, \tbox)$.
\end{lemma}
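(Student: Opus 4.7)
The plan is to exploit forward-neighbourhood bisimulation together with Observation~\ref{obs:neighbourhood}: since both $\tbox$ and $\abox$ are normalized, every concept appearing in them has depth at most one, so it suffices to show that each element of $\I_{{+}S}$ satisfies exactly the same depth-at-most-one $\ALCSCC$ concepts as an appropriately chosen counterpart in $\I$.

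First I would read off two structural facts from Definition~\ref{def:repair}: every newly added role edge has the form $(v_{\mathit{cpy}}^{(i)}, w)$ with the target $w$ lying in the original domain $\DI$, and the atomic concept extensions are preserved on $\DI$ while each copy $v_{\mathit{cpy}}^{(i)}$ inherits precisely the atomic memberships of $v$. From this I would deduce $\SuccI{\I_{{+}S}}{d} = \SuccI{\I}{d}$ for every $d \in \DI$ and $\SuccI{\I_{{+}S}}{v_{\mathit{cpy}}^{(i)}} = \SuccI{\I}{v}$ for every copy. Consequently, the identity map gives a bisimulation between $d \in \DI$ viewed inside $\I_{{+}S}$ and $d$ viewed inside $\I$, while the function sending $v_{\mathit{cpy}}^{(i)}$ to $v$ and acting as the identity on their shared successor set witnesses $v_{\mathit{cpy}}^{(i)} \eqbisim v$ in the sense of Definition~\ref{def:bisim}; in particular the bijectivity requirement and the preservation of atomic concept and role memberships both follow at once from Definition~\ref{def:repair}.

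With these bisimilarities in hand, ABox preservation is immediate: $a^{\I_{{+}S}} = a^{\I}$ for every individual name, and both atomic concept and role extensions only grow under $S$-duplication, so every normalized assertion in $\abox$ remains satisfied. For the TBox, given a CI $C \sqsubseteq D \in \tbox$ and any $d \in \DI_{{+}S}$ with $d \in C^{\I_{{+}S}}$, I would let $d^{\ast}$ denote the corresponding element in $\I$ (either $d$ itself, or $v$ when $d = v_{\mathit{cpy}}^{(i)}$). Since $C$ and $D$ both have depth at most one, a double application of Observation~\ref{obs:neighbourhood} first transports $d \in C^{\I_{{+}S}}$ to $d^{\ast} \in C^{\I}$, modelhood of $\I$ then yields $d^{\ast} \in D^{\I}$, and a further application gives back $d \in D^{\I_{{+}S}}$, as required. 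The entire proof is almost entirely bookkeeping; the only point that deserves real care is to confirm that no freshly added edge enters an original element or connects two copies, because otherwise the forward-neighbourhood of an original element could change and the bisimulation witnessing $d \eqbisim d$ for $d \in \DI$ would break.
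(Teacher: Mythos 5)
Your proposal is correct and follows essentially the same route as the paper: the ABox is preserved because $\I$ sits inside $\I_{{+}S}$ with individual names unchanged, and the normalized TBox is preserved because each copy $v_{\mathit{cpy}}^{(i)}$ is forward-neighbourhood bisimilar to $v$ (and original elements keep their successor sets), so Observation~\ref{obs:neighbourhood} transfers depth-one concept membership. Your write-up merely spells out the bisimulations and the fact that fresh edges only leave copies, points the paper leaves implicit; the only nitpick is that the relevant condition is that no fresh edge \emph{leaves} an original element (incoming edges are irrelevant for forward neighbourhoods), which your own structural observation already establishes.
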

\begin{proof}
Since~$\I$ is a submodel of~$\I_{{+}S}$ we conclude that~$\I_{{+}S} \models \abox$. To see that~$S$--duplication does not violate the TBox~$\tbox$, it is sufficient to see that for any~$i \in \Nat_+$ and~$v \in \DI$ an element~$v_{\mathit{cpy}}^{(i)}$ is bisimilar to~$v$ (which follows immediately from Definition~\ref{def:repair}). 
Hence~$\I_{{+}S} \models (\abox, \tbox)$.
\end{proof}
Moreover a conjunctive query~$q$ has a match in~$\I$ if and only if it has a match in~$\I_{{+}S}$.
\begin{lemma} \label{lem:CQduplthesamematches}
For any conjunctive query~$q$ and any~$S \subseteq (\DI \times \Nat_+)$ and any interpretation~$\I$, 
the equivalence~$\I \models q \Leftrightarrow \I_{{+}S} \models q$ holds.
\end{lemma}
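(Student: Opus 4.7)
The plan is to prove the two directions separately using straightforward structural arguments.

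For the direction $\I \models q \Rightarrow \I_{{+}S} \models q$, I would note that Definition~\ref{def:repair} makes $\I$ a substructure of $\I_{{+}S}$: the extensions of concept names and role names are only enlarged when passing from $\I$ to $\I_{{+}S}$, and individual-name interpretations coincide. Hence any match $\pi$ of $q$ in $\I$, viewed as a map into $\Delta^{\I_{{+}S}}$ via the inclusion $\DI \subseteq \Delta^{\I_{{+}S}}$, remains a match in $\I_{{+}S}$. Here I rely on the convention stated earlier in the paper that the concept atoms of $q$ involve only concept names, so that the inclusion $A^\I \subseteq A^{\I_{{+}S}}$ is all that is needed for concept-atom preservation.

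For the direction $\I_{{+}S} \models q \Rightarrow \I \models q$, the idea is to collapse every copy back to its source. I would define $\homo : \Delta^{\I_{{+}S}} \to \DI$ by $\homo(d) = d$ for $d \in \DI$ and $\homo(v_{\mathit{cpy}}^{(i)}) = v$ for each introduced copy, and then verify that $\homo$ is a homomorphism from $\I_{{+}S}$ to $\I$. This reduces to two easy checks directly from Definition~\ref{def:repair}: copies inherit exactly their source's concept-name memberships (so $v_{\mathit{cpy}}^{(i)} \in A^{\I_{{+}S}}$ iff $v \in A^\I$), and every fresh role edge of the form $(v_{\mathit{cpy}}^{(i)}, w) \in r^{\I_{{+}S}}$ arises by construction from a pre-existing edge $(v, w) \in r^\I$, while original edges are preserved verbatim. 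Given any match $\pi'$ of $q$ in $\I_{{+}S}$, the composition $\homo \circ \pi'$ is then a match of $q$ in $\I$.

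No serious obstacle arises. The key enabling feature of the $S$-duplication is that copies receive only \emph{outgoing} role edges, and only to original elements of $\DI$; there are no incoming edges to copies and no edges between two copies. This is precisely what ensures that the collapsing map $\homo$ is both well-defined and structure-preserving, without any additional bookkeeping or side conditions.
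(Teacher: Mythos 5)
Your proof is correct and follows essentially the same route as the paper: the forward direction via the substructure inclusion $\DI \subseteq \Delta^{\I_{+S}}$, and the backward direction via the collapsing homomorphism $\homo$ sending each copy $v_{\mathit{cpy}}^{(i)}$ to $v$ and composing it with the match. Your extra remark that copies receive only outgoing edges to original elements is exactly the structural feature the paper's (terser) verification of homomorphism relies on.
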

\begin{proof}
Without loss of generality we assume all concepts appearing in~$q$ are atomic. 
If~$\I$ has a match~$\pi$ of~$q$, then trivially~$\pi$ is also a match in~$\I_{{+}S}$ (due to the fact that~$\I$ is a submodel of~$\I_{{+}S}$). 
For the second direction, assume that there is a query match~$\pi$ of~$q$ in~$\I_{{+}S}$. Let us define~$\homo : \I_{{+}S} \rightarrow \I$ 
as~$\homo\big(v^{(i)}_{\mathit{cpy}}\big) = v$ for freshly copied elements and as~$\homo(v) = v$ otherwise. It is easy to see that~$\homo$ is a homomorphism, 
and hence~$\homo \circ \pi$ is a match of~$q$ in~$\I$. Thus the equivalence~$\I \models q \Leftrightarrow \I_{{+}S} \models q$ holds. 
\end{proof}

From Lemma~\ref{lem:CQduplthesamematches} and Lemma~\ref{lem:eqqueriesunravloose} we can immediately conclude:
\begin{lemma} \label{lem:duplTheSameMatches}
For any conjunctive query~$q$, any positive integer~$k > |q|$ and any finite interpretation~$\I$ the 
following equivalence holds:~$\Iunrav \models q \Leftrightarrow \Iloose{k}_{{+}S} \models q$.
\end{lemma}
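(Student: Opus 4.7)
The plan is to obtain the statement as a direct corollary of the two preceding lemmas, simply chaining the two equivalences. Since the claim is stated as ``immediately concluded'' from Lemma~\ref{lem:CQduplthesamematches} and Lemma~\ref{lem:eqqueriesunravloose}, no new construction or induction is needed; the work has already been done in proving those lemmas.

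Concretely, I would fix an arbitrary conjunctive query $q$, an integer $k > |q|$, a finite interpretation $\I$, and an arbitrary duplication parameter $S \subseteq (\DIloose{k} \times \Nat_+)$. First I would apply Lemma~\ref{lem:eqqueriesunravloose} to get the equivalence $\Iunrav \models q \Leftrightarrow \Iloose{k} \models q$, which is precisely the place where the assumption $k > |q|$ is used (this hypothesis was needed there so that matches of $q$ in $\Iloose{k}$, being smaller than the girth of $\Iloose{k}$, are confined to tree-shaped substructures homomorphically mappable back into $\Iunrav$). Then I would apply Lemma~\ref{lem:CQduplthesamematches}, instantiated with the interpretation $\Iloose{k}$ in place of $\I$, to obtain $\Iloose{k} \models q \Leftrightarrow \Iloose{k}_{{+}S} \models q$. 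Transitivity of the biconditional yields the desired equivalence.

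There is essentially no main obstacle beyond observing that the two lemmas compose cleanly: Lemma~\ref{lem:CQduplthesamematches} is formulated for an \emph{arbitrary} interpretation, so nothing prevents us from applying it to $\Iloose{k}$, and the parameter $S$ is allowed to be any subset of $\DIloose{k} \times \Nat_+$. The only minor point worth flagging explicitly in the proof text is that the quantifier over $S$ in the statement matches the quantifier over $S$ in Lemma~\ref{lem:CQduplthesamematches}, so no adjustment of ranges is necessary. Thus the proof is a one-line chaining argument and can be recorded simply as
\[
\Iunrav \models q \;\Leftrightarrow\; \Iloose{k} \models q \;\Leftrightarrow\; \Iloose{k}_{{+}S} \models q,
\]
with the two equivalences justified by Lemma~\ref{lem:eqqueriesunravloose} and Lemma~\ref{lem:CQduplthesamematches}, respectively.
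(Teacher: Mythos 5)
Your proof is correct and matches the paper's own argument: the paper likewise obtains this lemma as an immediate consequence of Lemma~\ref{lem:eqqueriesunravloose} (using $k>|q|$) and Lemma~\ref{lem:CQduplthesamematches} (applied to $\Iloose{k}$), chained by transitivity of the biconditional.
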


Note that for any finite~$\I$ being a model of a normalized~$\kb = (\abox, \tbox, \ercbox)$ it could be the case 
that~$\Iloose{k}$ does not satisfy the ERCBox~$\ercbox$ anymore. However, the inequalities from~$\ercbox$ have the convenient property that
if a vector~$\vec{x}$ containing the cardinalities of all atomic concepts' extensions 
is a solution to~$\ercbox$, then also a vector~$c \cdot \vec{x}$, i.e., the vector obtained by 
multiplying each entry of~$\vec{x}$ by a constant~$c$, is a solution to~$\ercbox$. Thus there is also a solution 
to~$\ercbox$ in the shape~$(1+|\DIloose{k}|) \cdot \vec{x_{\I}}$, where~$\vec{x_{\I}}$ is the solution to~$\ercbox$
describing the atomic concept extensions' cardinalities in~$\I$. Since~$\Iloose{k}$ preserves (non-)emptiness of all concepts from~$\I$, we can
simply duplicate an appropriate number of elements from~$\Iloose{k}$, until the ERCBox~$\ercbox$ will be satisfied again.
The whole procedure is described in the forthcoming lemma.

\begin{lemma} \label{lem:repairingERC}
For any consistent normalized~$\ALCSCC$ knowledge base~$\kb = (\abox, \tbox, \ercbox)$ and for 
any of its finite models~$\I$ there exists a finite~$S \subseteq (\DI \times \Nat_+)$ such that~$\Iloose{k}_{{+}S} \models (\abox, \tbox, \ercbox)$ holds.
\end{lemma}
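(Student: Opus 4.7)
The plan is to choose $S$ so that every atomic-concept extension in $\Iloose{k}_{+S}$ becomes exactly the same integer multiple of the corresponding extension in $\I$, and then to exploit the scale-invariance of the semi-restricted constraints of $\ercbox_0$. Observe first that every single-element sequence $d \in \DI$ survives in $\DIloose{k}$: singletons can neither be $k$-blocked (being too short) nor be removed as a descendant of a minimally $k$-blocked element. So the function $n(d) := |\{w \in \DIloose{k} : \last{w} = d\}|$ is well-defined and satisfies $1 \leq n(d) \leq |\DIloose{k}|$ for every $d \in \DI$. By Lemma~\ref{lem:loosebisim} and preservation of atomic-concept membership under bisimulation (Definition~\ref{def:bisim}), $A^{\Iloose{k}} = \{w \in \DIloose{k} : \last{w} \in A^\I\}$ holds for every concept name $A$, so $|A^{\Iloose{k}}| = \sum_{d \in A^\I} n(d)$.

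Setting $c := 1 + |\DIloose{k}|$ and $S := \{(d, c - n(d)) \mid d \in \DI\}$ yields a well-defined finite subset of $\DIloose{k} \times \Nat_+$, since $c > n(d)$ for every $d$. A direct computation using Definition~\ref{def:repair} gives
$$|A^{\Iloose{k}_{+S}}| \;=\; |A^{\Iloose{k}}| + \sum_{d \in A^\I}(c - n(d)) \;=\; \sum_{d \in A^\I} n(d) + \sum_{d \in A^\I}(c - n(d)) \;=\; c \cdot |A^\I|$$
for every concept name $A$. Since $\kb_0$ is normalized, every concept mentioned in $\ercbox_0$ is atomic, so every cardinality appearing in $\ercbox_0$ is uniformly scaled by $c$. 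For any semi-restricted constraint $N_1|C_1| + \cdots + N_k|C_k| + M \leq N_{k+1}|C_{k+1}| + \cdots + N_{k+\ell}|C_{k+\ell}|$ satisfied by $\I$, multiplying the inequality by $c \geq 1$ and using $M \leq cM$ (since $M \geq 0$) shows that the same constraint is satisfied when the cardinalities come from $\Iloose{k}_{+S}$. Because $\ercbox_0$ is a positive Boolean combination of such constraints, monotonicity yields $\Iloose{k}_{+S} \models \ercbox_0$, while satisfaction of $\abox_0$ and $\tbox_0$ follows immediately from Lemmas~\ref{lem:unravaboxpreservation}, \ref{lem:loosemodelhood}, and \ref{lem:repairmodelhood}.

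The main conceptual step is duplicating only single-element sequences: these are in bijective correspondence with $\DI$, which makes the arithmetic $|A^{\Iloose{k}_{+S}}| = c \cdot |A^\I|$ work cleanly. The delicate verification is the identity $A^{\Iloose{k}} = \{w : \last{w} \in A^\I\}$, which relies on Lemma~\ref{lem:loosebisim} and on the normalization assumption that only atomic concepts occur in $\ercbox_0$; everything else reduces to arithmetic. In particular, no subtlety arises from disjunctions in $\ercbox_0$, because scaling preserves the truth value of each individual semi-restricted constraint and positive Boolean combinations are monotone in the truth of their atomic subformulas.
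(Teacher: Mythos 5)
Your proof is correct and takes essentially the same route as the paper's: both exploit that semi-restricted constraints (non-negative additive constant, positive Boolean structure) remain satisfied when all atomic cardinalities are scaled by $c = 1+|\DIloose{k}|$, and both choose $S$ so that every cardinality in $\Iloose{k}_{{+}S}$ equals exactly $c$ times its value in $\I$, delegating ABox/TBox preservation to Lemmas~\ref{lem:unravaboxpreservation}, \ref{lem:loosemodelhood} and \ref{lem:repairmodelhood}. The only difference is bookkeeping: the paper rewrites $\ercbox$ over type cardinalities and duplicates one representative element per realized type (adding $c\cdot|t^{\I}| - |t^{\Iloose{k}}|$ copies), whereas you duplicate each surviving one-element sequence $d\in\DI$ exactly $c - n(d)$ times, which reaches the same scaled cardinalities a bit more directly.
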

\begin{proof}
\newcommand{\concepts}{\mathbb{C}}
\newcommand{\types}{\mathbb{T}_{\concepts}}
Let~$\concepts$ be the set of all atomic concepts appearing in normalized~$\kb = (\abox, \tbox, \ercbox)$. In this proof, a \emph{type}
means a conjunction of (possibly negated) concepts from~$\concepts$. With~$\types$ we denote the set of all possible types.

The ERCBox~$\ercbox'$ is obtained from~$\ercbox$ by replacing each inequality~$\varepsilon$ from~$\ercbox$ of the form:
\[
\varepsilon = 	N_1 |C_1|+ \ldots + N_k |C_k| + B \le N_{k+1}|C_{k+1}| + \ldots + N_{k+\ell}|C_{k+\ell}|
\]
with the corresponding inequality~$\varepsilon'$:
\[
\varepsilon = \Sigma_{i=1}^{k} N_k \big( \Sigma_{C \in \concepts, C \models C_i} |C| \big) + B \leq
\Sigma_{i=k+1}^{k+\ell} N_k \big( \Sigma_{C \in \concepts, C \models C_i} |C| \big). 
\]
Note that any model~$\I \models (\abox, \tbox, \ercbox)$ is also a model of~$(\abox, \tbox, \ercbox')$ and vice versa.

Let~$\vec{x_{\I}}$ be the solution to~$\ercbox'$ describing the types' cardinalities in~$\I$ (such solution exists since~$\I \models \ercbox'$).
As we have already mentioned before, the inequalities from~$\ercbox$ have the convenient property that
if a vector~$\vec{x}$ is a solution to~$\ercbox'$, then also a vector~$c \times \vec{x}$, i.e., the vector obtained by 
multiplying each entry of~$\vec{x}$ by a constant~$c$, is a solution to~$\ercbox'$. 
Thus there is also a solution~$\vec{y}$ to~$\ercbox'$ in the shape~$\vec{y} = (1+|\DIloose{k}|) \cdot \vec{x_{\I}}$. 

The desired set~$S \subseteq \Nat \times \DIloose{k}$ is defined as follows. It is composed of all 
pairs~$(c - |t^{\Iloose{k}}| , w_t)$ for each type~$t \in \types$ having a non-zero entry~$c$ in~$\vec{y}$ (where~$w_t$ is an arbitrary fixed domain element from~$\Iloose{k}$ having a type~$t$).  Note that such an element~$w_t$ exists since the~$k$--loosening and forward-unravelings preserve types (see e.g proofs of Lemma~\ref{lem:loosemodelhood} and Lemma~\ref{lem:unravtbox}). 

It remains to argue that~$\Iloose{k}_{{+}S} \models (\abox, \tbox, \ercbox)$ holds. 
To see that~$\Iloose{k}_{{+}S} \models \ercbox$ it is enough to see 
that~$\Iloose{k}_{{+}S} \models \ercbox$ holds due to the fact that the vector describing the types' 
cardinalities in~$\Iloose{k}_{{+}S}$ is equal to~$\vec{y}$ (and~$\vec{y}$ was obtained by multiplying 
each entry of the initial solution~$\vec{x_{\I}}$). 
Moreover we conclude~$\Iloose{k}_{{+}S} \models (\abox, \tbox)$ holds from Lemma~\ref{lem:repairmodelhood}. 
Hence~$\Iloose{k}_{{+}S} \models (\abox, \tbox, \ercbox)$.
\end{proof}

This concludes our construction, the core result of which can be informally stated as follows: \textsl{For any~$\ALCSCC$ knowledge base~$\kb$ and every CQ~$q$ holds: if~$\kb \models q$ then there is a forest-shaped query match of~$q$ into every model of~$\kb$.} This follows from the fact that the any model of~$\kb$ not admitting such a match would allow us to construct a model without any query matches, contradicting the assumption.  
We make this statement more formal by introducing the forthcoming notion of~$n$--acyclic models.

\subsubsection{The notion of~$n$--acyclic models}

Given a finite interpretation~$\J$ we say that it is \emph{$k$--acyclic}, 
if there exists a finite interpretation~$\I$ such that~$\J = \Iloose{k}_{{+}S}$ holds
for some finite set~$S \subseteq (\DI \times \Nat_+)$.

The next lemma states that to falsify conjunctive query we do not need to look for arbitrary finite counter-models 
but it is enough to consider the class of~$(|q|+1)$--acyclic models. Indeed:

\begin{lemma} \label{lem:nacyccountermodels}
For any normalized~$\ALCSCC$ knowledge base~$\kb = (\abox, \tbox, \ercbox)$ 
and any conjunctive query~$q$, if there is a finite interpretation such 
that~$\I \models \kb$ but~$\I \not\models q$, then there is a~$(|q|+1)$--acyclic 
model~$\I'$ such that~$\I' \models \kb$ and~$\I' \not\models q$.
\end{lemma}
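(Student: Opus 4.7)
The plan is simply to thread together the three-step model transformation developed in Subsections~\ref{sec:unrav}--\ref{sec:dupl}, with the parameter $k$ chosen as $k := |q|+1$. Starting from the hypothesized finite interpretation $\I$ with $\I \models \kb$ and $\I \not\models q$, I form the forward-unraveling $\Iunrav$, then its $k$-loosening $\Iloose{k}$, and finally the $S$-duplication $\Iloose{k}_{{+}S}$ for a suitable finite $S \subseteq \DIloose{k} \times \Nat_+$ delivered by Lemma~\ref{lem:repairingERC}. By the very definition of $(|q|+1)$-acyclicity, the resulting interpretation $\I' := \Iloose{k}_{{+}S}$ is $(|q|+1)$-acyclic, so all that remains is to verify $\I' \models \kb$ and $\I' \not\models q$.

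Modelhood $\I' \models \kb = (\abox, \tbox, \ercbox)$ is handed to me directly by Lemma~\ref{lem:repairingERC}, which was precisely engineered for this purpose: it produces an $S$ repairing the potentially violated ERCBox while keeping the ABox and TBox satisfied (the latter inherited through Lemma~\ref{lem:repairmodelhood} together with the ABox/TBox preservation of unraveling and loosening, Lemmas~\ref{lem:unravabox}, \ref{lem:unravtbox}, \ref{lem:unravaboxpreservation}, \ref{lem:loosemodelhood}). So this half of the lemma requires essentially no additional work beyond invoking the right statement.

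For the non-entailment $\I' \not\models q$, I chain three preservation results in sequence. Lemma~\ref{lem:unravnonentail} lifts $\I \not\models q$ to $\Iunrav \not\models q$. Then, since $k = |q|+1 > |q|$, Lemma~\ref{lem:eqqueriesunravloose} gives the equivalence $\Iunrav \models q \Leftrightarrow \Iloose{k} \models q$, yielding $\Iloose{k} \not\models q$. Finally, Lemma~\ref{lem:duplTheSameMatches} (or directly Lemma~\ref{lem:CQduplthesamematches} applied to $\Iloose{k}$) propagates non-entailment through the $S$-duplication, giving $\Iloose{k}_{{+}S} = \I' \not\models q$.

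Since the proof is just an orchestration of previously established lemmas, there is no genuinely hard step; the only care needed is to pick $k$ strictly larger than $|q|$ so that the girth condition from Lemma~\ref{lem:loosegirthk} is strong enough for Lemma~\ref{lem:eqqueriesunravloose} to apply, and to apply Lemma~\ref{lem:repairingERC} with exactly this $k$ to obtain the required $S$. The choice $k=|q|+1$ is the minimal one meeting both constraints and matches the statement of the lemma.
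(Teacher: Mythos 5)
Your proposal is correct and follows essentially the same route as the paper: take $\I' = \Iloose{(|q|+1)}_{{+}S}$ with $S$ supplied by Lemma~\ref{lem:repairingERC}, get modelhood from Lemmas~\ref{lem:loosemodelhood} and~\ref{lem:repairmodelhood} (via \ref{lem:repairingERC}), and get query non-entailment by chaining the preservation lemmas through unraveling, loosening, and duplication. The only difference is cosmetic: you route non-entailment through Lemmas~\ref{lem:unravnonentail}, \ref{lem:eqqueriesunravloose}, and~\ref{lem:CQduplthesamematches}, whereas the paper cites Lemmas~\ref{lem:loosenonentail} and~\ref{lem:eqqueriesunravloose}, which amounts to the same argument.
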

\begin{proof}
It is enough to take~$\J = \Iloose{(|q|+1)}_{{+}S}$ for~$S$ given in~\ref{lem:repairingERC}. The modelhood preservation 
follows from Lemma~\ref{lem:loosemodelhood} and Lemma~\ref{lem:repairmodelhood}. 
Query non entailment is due to Lemma~\ref{lem:loosenonentail} and Lemma~\ref{lem:eqqueriesunravloose}. 
\end{proof}

Moreover conjunctive query entailment over~$(|q|+1)$--acyclic models is equivalent to entailment over their forward-unravelings. 
This fact follows directly from Lemma~\ref{lem:duplTheSameMatches}.

\begin{lemma} \label{cor:2}
For any interpretation~$\I$ being a~$(|q|+1)$--acyclic model of an~$\ALCSCC$ knowledge base~$\kb = (\abox, \tbox, \ercbox)$ 
composed of a normalized ABox~$\abox$, TBox~$\tbox$ and ERCBox~$\ercbox$ the equivalence~$\I \models q \Leftrightarrow \Iunrav \models q$ holds. 
\end{lemma}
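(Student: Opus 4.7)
The approach is to unpack $(|q|{+}1)$-acyclicity: by definition there exist a finite interpretation $\J$ and $S\subseteq \Delta^{\J}\times \Nat_+$ with $\I=\J^{[|q|+1]}_{+S}$. Applying Lemma~\ref{lem:duplTheSameMatches} to $\J$ with $k=|q|{+}1$ immediately yields $\J^{\to}\models q \Leftrightarrow \I\models q$, matching the author's remark that the statement ``follows directly.'' It therefore suffices to verify the auxiliary equivalence $\Iunrav\models q \Leftrightarrow \J^{\to}\models q$ to close the chain.

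One direction, $\Iunrav\models q \Rightarrow \I\models q$, is Lemma~\ref{lem:unravnonentail} via the $\mathsf{last}$-homomorphism, and then Lemma~\ref{lem:duplTheSameMatches} provides the step to $\J^{\to}$. For the converse, I would reapply Lemma~\ref{lem:duplTheSameMatches}, now to $\I$ itself with $S=\emptyset$, giving $\Iunrav\models q \Leftrightarrow \Iloose{|q|+1}\models q$; the task then reduces to proving $\I\models q \Rightarrow \Iloose{|q|+1}\models q$. The structural key is that $\I$ itself inherits girth $\geq |q|{+}1$: Lemma~\ref{lem:loosegirthk} gives this bound for $\J^{[|q|+1]}$, and the duplication step cannot shorten it since copies only originate fresh outgoing edges and carry no incoming edges, so they cannot participate in any new anonymous cycle. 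With girth strictly exceeding $|q|$, any match $\pi$ of $q$ in $\I$ has an image of at most $|q|$ role atoms whose restriction to anonymous elements must be a forest; such a forest-shaped image is lifted into $\Iloose{|q|+1}$ by choosing a root in each connected component (an ABox individual when present, otherwise any image vertex) and walking along the sequence-style edges of the form $(w,wd)$ dictated by the loosening construction.

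The main obstacle I foresee is precisely this last lifting step: although acyclicity of the query image is immediate from the girth bound, turning it into a genuine match of $\Iloose{|q|+1}$ still requires respecting the directionality of the freshly introduced sequence edges and consistently anchoring every connected component of the image (with particular care for components that contain no ABox individual). The construction should mirror the corresponding half of the proof of Lemma~\ref{lem:eqqueriesunravloose}, now carried out on the already sufficiently acyclic $\I$ rather than on a generic loosening.
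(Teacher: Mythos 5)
Your reduction is more careful than the paper's own justification (the paper simply declares the statement to follow from Lemma~\ref{lem:duplTheSameMatches}, implicitly conflating the unraveling of the acyclic model $\I$ with the unraveling of the underlying model $\J$ from which $\I=\J^{[|q|+1]}_{+S}$ was built), and the pieces you do carry out are fine: the easy direction via Lemma~\ref{lem:unravnonentail}, the second application of Lemma~\ref{lem:duplTheSameMatches} with $S=\emptyset$, and the observation that $\I$ inherits girth $\geq |q|+1$ because duplicated elements have no incoming edges. But the step you defer --- lifting a match of $q$ from $\I$ into $\Iloose{|q|+1}$, equivalently into $\Iunrav$ --- is not a technicality to be ``mirrored'' from Lemma~\ref{lem:eqqueriesunravloose}; it is the entire content of the lemma, and the justification you sketch for it is wrong. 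A girth bound only excludes short \emph{directed} anonymous cycles; it does not make the image of a match a forest. Indeed, the loosening step that produced $\I$ deliberately creates anonymous elements with several role predecessors: every blocking element $v'$ keeps the edge from its own immediate prefix and additionally receives the redirected edges $(w,v')$ from the predecessors $w$ of each element it blocks (last item of Definition~\ref{def:loose}), and duplication adds further copies pointing into it. Hence a query with a fork, say $r(x,y)\wedge s(z,y)$, can match in $\I$ with $\pi(x)\neq\pi(z)$ both feeding into the same anonymous $\pi(y)$; this image is a DAG, not a forest, and your ``anchor a root and walk along sequence-style edges'' lifting breaks down because in $\Iunrav$ (and in the fresh part of $\Iloose{|q|+1}$) an anonymous element has a \emph{unique} predecessor, so the two predecessors would have to be identified, which your construction cannot do.

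Closing this gap needs an additional idea that your outline does not supply. One option is to exploit the special shape of forks in $(|q|+1)$-acyclic models: since a blocker and the element it blocks share a suffix of length $k\geq 2$, all predecessors of a forked anonymous element have the same $\last{\cdot}$-value, hence identical labels, concepts and (recursively) matching successor structure, so the fork can be eliminated by re-matching --- essentially the fork-rewriting mechanism the paper introduces later. The other, cleaner option is to route the hard direction through $\J^{\to}$ rather than through matches in $\I$: Lemma~\ref{lem:duplTheSameMatches} applied to $\J$ gives $\I\models q \Rightarrow \J^{\to}\models q$, and a match in $\J^{\to}$ may be assumed (per connected component, after shifting to a suffix-root) to lie at depth at most $|q|<k$ below a named element or below a one-element sequence; in that region no blocking has yet occurred, so the matched part embeds isomorphically into $\J^{[k]}\subseteq\I$ and then lifts into $\Iunrav$ by replacing each sequence with the sequence of its prefixes. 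This sidesteps forks entirely, because matches in $\J^{\to}$ are already fork-free at anonymous elements. As it stands, your proposal leaves the decisive implication unproven and rests on the false claim that the girth bound forces forest-shaped images.
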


Due to Lemma~\ref{cor:2} we can restrict our attention to query matches over the unfolding of~$(|q|+1)$--acyclic models only. 
it allow us to use a machinery of spoilers, splittings and fork rewrittings from~\cite{Lutz08}, 
developed for deciding unrestricted CQ entailment, to the case of finite query entailment 
with only some minor modifications.

\subsection{Deciding query entailment in exponential time} \label{sec:queryentailmentalgo}

Now we are ready to employ the announced exponential time method for deciding conjunctive query entailment from~\cite{Lutz08}. 
For a given~$\kb = (\abox, \tbox, \ercbox)$ and a query~$q$, we enumerate a set of~$\ALCHcap$ knowledge 
bases~$\kb_s = (\abox', \tbox')$ called \emph{spoilers} and check whether~$\kb \cup \kb_s$ is consistent.
Spoilers are modeled to prevent forest-shaped query matches. They are constructed by, on the one hand, rolling-up tree-shaped partial query matches into concepts and forbidding existence of such concept in a 
model and, on the other hand, forbidding certain behaviour of the Abox part of a model. Lutz \cite{Lutz08} shows that one can restrict ones attention to exponentially many spoilers and that the size of each such spoiler is only polynomial in~$|\kb|$ and~$|q|$. 
The algorithm for CQ entailment is then obtained by simply replacing Lutz's satisfiability algorithm for~$\ALCHcap$ knowledge bases\footnote{Note that $\ALCHcap$ is a sub-logic of $\ALCSCC$.} by our finite satisfiability algorithm for~$\ALCSCC$ knowledge bases from the previous sections. We derive correctness of the procedure as follows:~$\kb \cup \kb_s$ is satisfiable for some spoiler~$\kb_s$ exactly if there is a model of~$\kb$ without forest-shaped matches of $q$ and hence -- thanks to our above argument -- there is a model without any match of $q$.\\

Let~$q$ be a conjunctive query and let~$\Var(q)$ be the set of variables appearing in~$q$.
Through this Section we always assume that~$q$ contains only atomic concepts and no answer variables.
Note that~$q$ can be seen as a directed graph~$G_q = (V_q, E_q)$, where vertices from~$V_q$
are simply variables from~$\Var(q)$ and for any two nodes~$x,y$ there exists an edge~$(x,y) \in E_q$
between them if and only ifs~$r(x,y) \in q$ for some~$r \in N_r$. We say that~$q$ is \emph{tree-shaped} if~$G_q$ is a directed tree.

We start by introducing a notion of \emph{forks} and \emph{splittings} from~\cite{Lutz08}. 
\paragraph*{Forks.}
For a conjunctive query~$q$ we say that a conjunctive query~$q'$ 
\emph{is obtained from~$q$ by fork elimination}, if~$q'$ is obtained
from~$q$ by selecting two atoms~$r(y,x)$ and~$s(x,z)$ and identifying
variables~$y$ and~$z$. A query~$\forkrev{q}$ is a \emph{fork rewriting} of~$q$
if~$\forkrev{q}$ is obtained from~$q$ by applying fork elimination (possibly multiple times).
A \emph{maximal fork rewriting} fork rewriting of~$q$ is a query~$\maxfr{q}$ obtained 
by exhaustively application of fork elimination. It is known from~\cite{Lutz08} that 
maximal fork rewriting is unique (up to variable renaming), 
thus we speak about \emph{the} maximal fork rewriting.
\begin{figure}[!h]
\centering

\tikzset{every picture/.style={line width=0.75pt}} 

\begin{tikzpicture}[x=0.75pt,y=0.75pt,yscale=-1,xscale=1, scale=1.5]

\draw [color={rgb, 255:red, 65; green, 117; blue, 5 }  ,draw opacity=1 ]   (128.67,140) .. controls (126.31,139.99) and (125.14,138.8) .. (125.15,136.44) .. controls (125.16,134.09) and (123.99,132.9) .. (121.64,132.89) .. controls (119.28,132.88) and (118.11,131.69) .. (118.12,129.33) .. controls (118.14,126.97) and (116.97,125.78) .. (114.61,125.77) .. controls (112.26,125.76) and (111.09,124.57) .. (111.1,122.22) -- (108.36,119.45) -- (102.74,113.76) ;
\draw [shift={(101.33,112.33)}, rotate = 405.35] [color={rgb, 255:red, 65; green, 117; blue, 5 }  ,draw opacity=1 ][line width=0.75]    (10.93,-3.29) .. controls (6.95,-1.4) and (3.31,-0.3) .. (0,0) .. controls (3.31,0.3) and (6.95,1.4) .. (10.93,3.29)   ;

\draw [color={rgb, 255:red, 208; green, 2; blue, 27 }  ,draw opacity=1 ]   (130,80.25) -- (100.92,109.09) ;
\draw [shift={(99.5,110.5)}, rotate = 315.24] [color={rgb, 255:red, 208; green, 2; blue, 27 }  ,draw opacity=1 ][line width=0.75]    (10.93,-3.29) .. controls (6.95,-1.4) and (3.31,-0.3) .. (0,0) .. controls (3.31,0.3) and (6.95,1.4) .. (10.93,3.29)   ;

\draw [color={rgb, 255:red, 208; green, 2; blue, 27 }  ,draw opacity=1 ]   (132.33,138.17) -- (157.79,111.2) ;
\draw [shift={(159.17,109.75)}, rotate = 493.36] [color={rgb, 255:red, 208; green, 2; blue, 27 }  ,draw opacity=1 ][line width=0.75]    (10.93,-3.29) .. controls (6.95,-1.4) and (3.31,-0.3) .. (0,0) .. controls (3.31,0.3) and (6.95,1.4) .. (10.93,3.29)   ;

\draw  [fill={rgb, 255:red, 0; green, 0; blue, 0 }  ,fill opacity=1 ] (128.17,82.08) .. controls (128.17,81.07) and (128.99,80.25) .. (130,80.25) .. controls (131.01,80.25) and (131.83,81.07) .. (131.83,82.08) .. controls (131.83,83.1) and (131.01,83.92) .. (130,83.92) .. controls (128.99,83.92) and (128.17,83.1) .. (128.17,82.08) -- cycle ;
\draw  [fill={rgb, 255:red, 0; green, 0; blue, 0 }  ,fill opacity=1 ] (157.33,109.75) .. controls (157.33,108.74) and (158.15,107.92) .. (159.17,107.92) .. controls (160.18,107.92) and (161,108.74) .. (161,109.75) .. controls (161,110.76) and (160.18,111.58) .. (159.17,111.58) .. controls (158.15,111.58) and (157.33,110.76) .. (157.33,109.75) -- cycle ;
\draw  [fill={rgb, 255:red, 0; green, 0; blue, 0 }  ,fill opacity=1 ] (128.67,140) .. controls (128.67,138.99) and (129.49,138.17) .. (130.5,138.17) .. controls (131.51,138.17) and (132.33,138.99) .. (132.33,140) .. controls (132.33,141.01) and (131.51,141.83) .. (130.5,141.83) .. controls (129.49,141.83) and (128.67,141.01) .. (128.67,140) -- cycle ;
\draw  [fill={rgb, 255:red, 0; green, 0; blue, 0 }  ,fill opacity=1 ] (99.5,110.5) .. controls (99.5,109.49) and (100.32,108.67) .. (101.33,108.67) .. controls (102.35,108.67) and (103.17,109.49) .. (103.17,110.5) .. controls (103.17,111.51) and (102.35,112.33) .. (101.33,112.33) .. controls (100.32,112.33) and (99.5,111.51) .. (99.5,110.5) -- cycle ;
\draw  [fill={rgb, 255:red, 0; green, 0; blue, 0 }  ,fill opacity=1 ] (258.68,109.63) .. controls (258.68,108.65) and (259.47,107.85) .. (260.46,107.85) .. controls (261.44,107.85) and (262.23,108.65) .. (262.23,109.63) .. controls (262.23,110.61) and (261.44,111.4) .. (260.46,111.4) .. controls (259.47,111.4) and (258.68,110.61) .. (258.68,109.63) -- cycle ;
\draw  [fill={rgb, 255:red, 0; green, 0; blue, 0 }  ,fill opacity=1 ] (307.59,110.48) .. controls (307.59,109.47) and (308.41,108.65) .. (309.42,108.65) .. controls (310.43,108.65) and (311.26,109.47) .. (311.26,110.48) .. controls (311.26,111.5) and (310.43,112.32) .. (309.42,112.32) .. controls (308.41,112.32) and (307.59,111.5) .. (307.59,110.48) -- cycle ;
\draw [color={rgb, 255:red, 208; green, 2; blue, 27 }  ,draw opacity=1 ]   (263.62,109.8) -- (304.39,110.02) ;
\draw [shift={(306.39,110.03)}, rotate = 180.3] [color={rgb, 255:red, 208; green, 2; blue, 27 }  ,draw opacity=1 ][line width=0.75]    (10.93,-3.29) .. controls (6.95,-1.4) and (3.31,-0.3) .. (0,0) .. controls (3.31,0.3) and (6.95,1.4) .. (10.93,3.29)   ;

\draw [color={rgb, 255:red, 139; green, 87; blue, 42 }  ,draw opacity=1 ] [dash pattern={on 4.5pt off 4.5pt}]  (257.02,109.8) -- (215.5,110.23) ;
\draw [shift={(213.5,110.25)}, rotate = 359.40999999999997] [color={rgb, 255:red, 139; green, 87; blue, 42 }  ,draw opacity=1 ][line width=0.75]    (10.93,-3.29) .. controls (6.95,-1.4) and (3.31,-0.3) .. (0,0) .. controls (3.31,0.3) and (6.95,1.4) .. (10.93,3.29)   ;

\draw  [fill={rgb, 255:red, 0; green, 0; blue, 0 }  ,fill opacity=1 ] (208.58,110.42) .. controls (208.58,109.41) and (209.4,108.59) .. (210.41,108.59) .. controls (211.43,108.59) and (212.25,109.41) .. (212.25,110.42) .. controls (212.25,111.43) and (211.43,112.25) .. (210.41,112.25) .. controls (209.4,112.25) and (208.58,111.43) .. (208.58,110.42) -- cycle ;
\draw [color={rgb, 255:red, 208; green, 2; blue, 27 }  ,draw opacity=1 ]   (132.39,83.58) -- (156.06,106.44) ;
\draw [shift={(157.5,107.83)}, rotate = 224] [color={rgb, 255:red, 208; green, 2; blue, 27 }  ,draw opacity=1 ][line width=0.75]    (10.93,-3.29) .. controls (6.95,-1.4) and (3.31,-0.3) .. (0,0) .. controls (3.31,0.3) and (6.95,1.4) .. (10.93,3.29)   ;

\draw (130,75) node [scale=0.7]  {$x$};
\draw (97,111) node [scale=0.7]  {$y$};
\draw (163.5,111.33) node [scale=0.7]  {$z$};
\draw (130,147.07) node [scale=0.7]  {$t$};
\draw (115.67,89) node [scale=0.7]  {$r$};
\draw (145.67,88) node [scale=0.7]  {$r$};
\draw (152.13,125.8) node [scale=0.7]  {$r$};
\draw (108,126.67) node [scale=0.7]  {$s$};
\draw (212.69,115.13) node [scale=0.7]  {$y$};
\draw (312.67,115.22) node [scale=0.7]  {$z$};
\draw (259,100) node [scale=0.7]  {$\ xt$};
\draw (241.91,114.58) node [scale=0.5]  {$r\ \cap \ t$};
\draw (286.13,113.24) node [scale=0.5]  {$r\ $};

\end{tikzpicture}

\caption{A query~$q = r(x,y) \wedge r(x,z) \wedge r(t,z) \wedge s(t,y)$ (left) and its fork-rewriting (right) obtained by identifying variables~$x$ and~$t$.}
\end{figure}
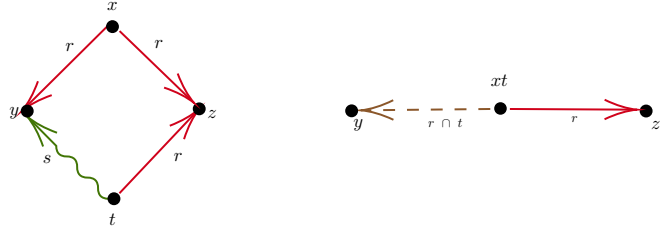

\paragraph*{Splittings.}
The next definition speaks about the abstract way how a conjunctive query
can match a model, without making reference to a concrete model nor a concrete match.

Let~$\kb = (\abox, \tbox, \ercbox)$ be a normalized~$\ALCSCC$ knowledge base composed 
of an Abox~$\abox$, Tbox~$\tbox$ and an ERCBox~$\ercbox$. 
A \emph{splitting} of a conjunctive query~$q$ w.r.t~$\kb$ is a tuple 
\[ \Pi = (R, T, S_1, S_2, \ldots, S_n, \mu, \nu), \]
where
the sets~$R, T, S_i$ induce a partition of the set~$\Var(q)$,
the function~$\mu : \{1,2,\ldots,n\} \rightarrow R$ assigns to each set~$S_i$ a variable~$\mu(i) \in R$, and
the function~$\nu : R \rightarrow \IndA$ assigns to each variable from~$R$ a named individual from~$\abox$.
A splitting~$\Pi$ has to satisfy the following 
conditions:\footnote{With~$\restr{q}{X}$ we denote the restriction of a query to the set of variables~$X$}
\begin{itemize}
	\item the query~$\restr{q}{T}$ is a variable disjoint union of tree-shaped queries,
	\item queries~$\restr{q}{S_i}$ for all~$i \in \{1,2,\ldots,n\}$ are tree-shaped,
	\item for any atom~$r(x,y) \in q$ the variables~$x,y$ either belong to the same set~$R, T, S_1, S_2, \ldots, S_n$
	or~$x \in R, y \in S_i$ with~$x$ being the root of a tree~$\restr{q}{S_i}$, and
	\item for any~$i \in \{ 1, 2, \ldots, n \}$ there is an atom~$r(\mu(i), x_0) \in q$ with~$x_0$ the root of~$\restr{q}{S_i}$.
\end{itemize}

It might be easier to think that a splitting~$\Pi$ actually consists of ``roots''~$R$ (corresponding to the Abox 
part of the model) named by the function~$\nu$), together with their ``subtrees''~$S_i$ 
and of some arbitrary trees~$T$ somewhere far in a model.\\


\paragraph*{Rolling up concepts.}
We employ a known technique~\cite{CaDL98,Lutz08,GHLS07} of \emph{rolling-up 
a tree-shaped query into a concept}. For a given conjunctive query~$q$ we 
define an~$\ALCHcap$ concept~$C_{q,x}$ (for each variable~$x \in \Var(q)$) as follows.
If~$x$ is a leaf in~$G_q$ then 
$$C_{q,x} = \bigsqcap_{C(x) \in q} C.$$
Otherwise we set
$$
C_{q,x} = \bigsqcap_{C(x) \in q} C \sqcap \bigsqcap_{(x,y) \in E_q} \exists( \bigcap_{s(x,y) \in q} s).C_{q,y}.
$$

The forthcoming lemma links together all presented notions.

\begin{definition} \label{def:agrees}
Let~$q$ be a conjunctive query and let~$\kb = (\abox, \tbox, \ercbox)$ be a (consistent) normalized~$\ALCSCC$ knowledge base
with a model~$\I$. We say that a pair~$(\forkrev{q}, \Pi)$, composed of a fork rewriting~$\forkrev{q}$ of~$q$ 
and a splitting~$\Pi = (R,T,S_1, S_2, \ldots, S_n, \mu, \nu)$ w.r.t~$\kb$, \emph{is compatible with}~$\I$, if:  
\begin{itemize}
	\item for each disconnected component~$\widehat{q}$ of~$T$, there is an element~$d \in \DI$ with~$d \in (C_{\widehat{q}})^{\I}$,
	\item if~$C(x) \in \forkrev{q}$ with~$x \in R$, then~$\nu(x)^{\I} \in C^{\I}$,
	\item if~$r(x,y) \in \forkrev{q}$ with~$x,y \in R$, then~$(\nu(x)^{\I}, \nu(y)^{\I}) \in r^{\I}$, and
	\item for all~$1 \leq i \leq n$ we have (for~$x_0$ being the root of~$\restr{\forkrev{q}}{S_i}$):
	\[
		\nu(\mu(i))^{\I} \in 
		\left( \exists \left( \bigcap_{s(\mu(i), x_0) \in \forkrev{q}} s \right).C_{\restr{\forkrev{q}}{S_i}, x_0} \right)^{\I}
	\]
\end{itemize}
\end{definition}

\begin{lemma}
Take~$q$ and~$\kb$ as stated in Definition~\ref{def:agrees} and let~$\I$ be any~$(|q|+1)$--acyclic model of~$\kb$.
Then~$\I \models q$ if and only if there exists a pair~$(\forkrev{q}, \Pi)$ of a fork 
rewriting and splitting such that~$(\forkrev{q}, \Pi)$ is compatible with~$\I$.
\end{lemma}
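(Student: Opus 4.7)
The plan is to prove both implications, with the harder $(\Rightarrow)$ direction exploiting the forest-like structure of $\Iunrav$. For $(\Leftarrow)$, given a compatible pair $(\forkrev{q}, \Pi)$ with $\Pi = (R, T, S_1, \ldots, S_n, \mu, \nu)$, I would first construct a match $\pi'$ of $\forkrev{q}$ in $\I$ and then recover a match of $q$. On $R$, set $\pi'(x) := \nu(x)^\I$; the atoms of $\forkrev{q}$ entirely inside $R$ are satisfied by the second and third bullet of Definition~\ref{def:agrees}. For each $S_i$ with root $x_0$, the fourth bullet yields an element $d \in C_{\restr{\forkrev{q}}{S_i}, x_0}^\I$ reachable from $\nu(\mu(i))^\I$ via all required roles; a routine induction on tree depth following the recursive definition of $C_{q',x}$ extends $\pi'$ to match $\restr{\forkrev{q}}{S_i}$ with $\pi'(x_0) = d$. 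Connected components of $\restr{\forkrev{q}}{T}$ are treated analogously via the first bullet. Undoing all fork eliminations by mapping every pair of identified variables to the image of their common representative produces a match of $q$ in $\I$.

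For $(\Rightarrow)$, assume $\I \models q$. Since $\I$ is $(|q|+1)$-acyclic, Lemma~\ref{cor:2} yields $\Iunrav \models q$ via some match $\pi : \Var(q) \to \DIunrav$. Let $R := \{x \in \Var(q) \mid \pi(x) \in \DIunravnamed\}$ and choose for each $x \in R$ an individual name $\nu(x)$ with $\nu(x)^{\Iunrav} = \pi(x)$. By Definition~\ref{def:unrav}, the anonymous part of $\Iunrav$ is a disjoint union of trees, each hanging off a single element of $\DI$, with role edges into such a tree only coming from its root's named predecessor. I would then obtain $\forkrev{q}$ by iterating fork eliminations forced by $\pi$: whenever $\pi(y) = \pi(z)$ for two variables outside $R$, the unique-predecessor property of trees forces, for every pair of atoms $r(x, y), s(x', z) \in q$, the equality $\pi(x) = \pi(x')$, licensing a fork elimination that identifies $x$ and $x'$; propagating upward makes $\pi$ essentially injective on the anonymous part of the resulting rewriting.

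The splitting $\Pi$ is then read off: partition $\Var(\forkrev{q}) \setminus R$ into the connected components of $\restr{\forkrev{q}}{\Var(\forkrev{q}) \setminus R}$; a component attached to $R$ by some atom $r(x, x_0)$ with $x \in R$ becomes an $S_i$ with $\mu(i) := x$, and unattached components contribute to $T$. Tree-shape of these pieces follows from the forest structure of $\Iunrav$ and the exhaustive fork elimination; compatibility with $\I$ transfers from $\pi$ through the homomorphism $\last{\cdot} : \Iunrav \to \I$, with the rolled-up concept witnesses obtained by inductively collapsing each tree-shaped submatch into a witness for the corresponding $C_{q', x_0}$. The main obstacle is precisely the fork rewriting step: verifying that every collision induced by $\pi$ in the anonymous part is realized by a sequence of fork eliminations. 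This rests on the tree-shape and unique-predecessor property of the anonymous component of $\Iunrav$; the appeal to Lemma~\ref{cor:2} via $(|q|+1)$-acyclicity is what licenses restricting attention to $\Iunrav$ rather than $\I$ itself, thereby ensuring these structural properties.
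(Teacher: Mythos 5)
Your proposal is essentially correct, and it takes a genuinely different, more self-contained route than the paper. The paper's own proof is a short transfer argument: it invokes Lutz's lemma from~\cite{Lutz08} as a black box to obtain ``$\Iunrav \models q$ iff some $(\forkrev{q},\Pi)$ is compatible with $\Iunrav$'', and then moves between $\I$ and $\Iunrav$ using Lemma~\ref{cor:2} together with the ABox- and concept-preservation lemmas for forward-unravelings. You instead reconstruct the Lutz-style argument explicitly: for the ``if'' direction you build the match directly in $\I$ by unfolding the rolled-up concepts and then undoing the variable identifications, which is cleaner and in fact closer to the literal statement than the paper's write-up (which derives $\I \models q$ from compatibility with $\Iunrav$ rather than with $\I$); for the ``only if'' direction you re-derive the fork rewriting and splitting from the forest structure of $\Iunrav$, appealing to Lemma~\ref{cor:2} exactly where the paper does, and you transfer compatibility back to $\I$ along the homomorphism $\last{\cdot}$, which preserves the (positive, existential) rolled-up concepts and the named part. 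What the paper's route buys is brevity; what yours buys is an actual argument where the paper only gestures at the literature.

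One step in your ``only if'' direction is imprecise. Fork elimination, as defined in the paper (and in Lutz), identifies the sources of two atoms with the \emph{same target variable}; your step identifies $x$ and $x'$ for atoms $r(x,y)$, $s(x',z)$ whenever $\pi(y)=\pi(z)$ with $y$ and $z$ possibly distinct variables. That is not in general a fork elimination, and it need not be expressible as a sequence of them, so the resulting query need not be a fork rewriting of $q$; moreover, the goal of making $\pi$ ``essentially injective on the anonymous part'' is neither achieved by this step (it never identifies $y$ with $z$) nor needed. The repair is to perform only the genuine forks: whenever two atoms share the target variable $y$ and $\pi(y)$ lies in the anonymous part, the unique-predecessor property of $\Iunrav$ forces their sources to have equal $\pi$-images, so they may be identified while $\pi$ remains a match. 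After exhaustive elimination, every anonymously mapped variable has in-degree at most one in the query graph; since role edges of $\Iunrav$ leaving anonymous elements strictly increase sequence length (hence there are no directed cycles through, and no edges back into, the named part), each weakly connected component of the anonymously mapped variables is tree-shaped. With this adjustment your construction of $R$, the $S_i$, $T$, $\mu$, $\nu$ and the transfer of compatibility to $\I$ go through unchanged.
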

\begin{proof}
Let~$\Iunrav$ be the forward-unraveling of~$\I$.
A similar lemma was proven in~\cite{Lutz08} and its proof without any changes at 
all can be seen as a proof that~$\Iunrav \models q$ iff~$\Iunrav$ is compatible with some~$(\forkrev{q}, \Pi)$.

Hence if~$\Iunrav$ is compatible with some~$(\forkrev{q}, \Pi)$ we can infer that~$\Iunrav \models q$ holds and 
by Corollary~\ref{cor:2} we conclude that~$\I \models q$. For the opposite way, assume that~$\Iunrav \models q$
holds. Thus~$\Iunrav$ is compatible with some~$(\forkrev{q}, \Pi)$. The construction of forward-unravelings is concept 
preserving (see e.g. the proof of Lemma~\ref{lem:unravtbox}), thus the first 
and the last item of Definition~\ref{def:agrees} are satisfied by~$\I$. To conclude the satisfaction of the second and the third items 
of Definition~\ref{def:agrees} it is enough to see that forward-unravelings preserve Aboxes (namely Lemma~\ref{lem:unravabox}).
Hence~$\I$ is compatible with~$(\forkrev{q}, \Pi)$.
\end{proof}

\paragraph*{Spoilers and super-spoilers.}
Let~$\kb = (\abox, \tbox, \ercbox)$ be normalized~$\ALCSCC$ knowledge base,
let~$q$ be a conjunctive query and let~$\Pi = (R,T,S_1, S_2, \ldots, S_n, \mu, \nu)$ be a splitting of~$q$ w.r.t~$\kb$. 
Moreover, let~$q_1, \ldots, q_n$ be the tree-shaped disconnected components of~$\restr{q}{T}$ with roots~$x_1, \ldots, x_n$.  

We say that~$\ALCHcap$ knowledge base~$\kb_s = (\abox_s, \tbox_s)$ 
is a \emph{spoiler} for~$q$,~$\kb$ and~$\Pi$ if one of the following conditions hold:
\begin{itemize}
	\item $\big(\top \sqsubseteq \neg C_{q_i, x_i} \big) \in \tbox_s$, for some~$1 \leq i \leq k$,
	\item there is an atom~$C(x) \in q$ with~$q \in R$ but~$\neg C(\nu(x)) \in \abox_s$
	\item there is an atom~$r(x,y) \in q$ with~$x,y \in R$ but~$\neg r(\nu(x), \nu(y)) \in \abox_s$
	\item $\neg D(\nu(\mu(i))) \in \abox_s$ for some~$1 \leq i \leq n$, where (for~$x_0$ being the root of~$\restr{q}{S_i}$):
	\[
		D = \left( \exists \left( \bigcap_{(\mu(i), x_0) \in q} s \right).C_{\restr{q}{S_i}} \right)^{\I}
	\]
\end{itemize}
A \emph{super-spoiler} for~$q$ and~$\kb$ is a \emph{minimal}~$\ALCHcap$ 
knowledge base~$\kb_s = (\abox_s, \tbox_s)$ such that for any 
splitting~$\Pi$ of~$q$ w.r.t~$\kb$, the knowledge base~$\kb_s$ is a spoiler for~$q$,~$\kb$ and~$\Pi$.

The following lemma describes the purpose of spoilers:

\begin{lemma} \label{lem:superspoiler1}
Let~$\kb = (\abox, \tbox, \ercbox)$ be a normalized~$\ALCSCC$ knowledge base
and let~$q$ be a conjunctive query. The query~$\kb \not\models q$ if and
only if there exists a super-spoiler~$\kb_s = (\abox_s, \tbox_s)$ such that
the knowledge base~$(\abox \cup \abox_s, \tbox \cup \tbox_s, \ercbox)$ is consistent.
\end{lemma}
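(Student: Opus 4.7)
The plan is to use the compatibility characterization established in the preceding lemma together with the reduction to $(|q|{+}1)$-acyclic models from Lemma~\ref{lem:nacyccountermodels}, and read off spoiler items as the direct syntactic negations of the compatibility conditions in Definition~\ref{def:agrees}.

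For the ``only if'' direction I will start from a finite model $\I$ of $\kb$ that refutes $q$; by Lemma~\ref{lem:nacyccountermodels} I may assume $\I$ is $(|q|{+}1)$-acyclic. By the preceding lemma, no pair $(\forkrev{q},\Pi)$ is compatible with $\I$, so for every splitting $\Pi$ (over every fork rewriting $\forkrev{q}$) at least one of the four clauses of Definition~\ref{def:agrees} fails in $\I$. For each such $\Pi$ I will pick one failing clause and translate it into the syntactically corresponding spoiler item from the list in the definition of spoiler: a failure of the tree-disconnected-component clause becomes a TBox axiom $\top\sqsubseteq\neg C_{q_i,x_i}$, and failures of the other three clauses become the matching negated ABox assertions. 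Collecting all these items yields an $\ALCHcap$ knowledge base that is, by construction, a spoiler for every splitting; taking a minimal such subset produces a super-spoiler $\kb_s$. By choice of items we have $\I\models\kb_s$, so $(\abox\cup\abox_s,\tbox\cup\tbox_s,\ercbox)$ is consistent.

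For the ``if'' direction I will assume that $\kb_s=(\abox_s,\tbox_s)$ is a super-spoiler and that $\kb':=(\abox\cup\abox_s,\tbox\cup\tbox_s,\ercbox)$ is consistent, pick a finite model $\I$ of $\kb'$, and show $\I\not\models q$. Suppose for contradiction $\I\models q$. Apply the three-step transformation of Section~\ref{subsec:nacyclic} (forward-unraveling, $k$-loosening with $k=|q|{+}1$, and $S$-duplication) to obtain a $(|q|{+}1)$-acyclic model $\I'$; by Lemmas~\ref{lem:unravabox}--\ref{lem:repairingERC} this preserves $\kb$, and since $\kb_s$ is expressed in $\ALCHcap$ (which embeds into normalized $\ALCSCC$) the same preservation lemmas guarantee $\I'\models\kb_s$ as well, while Lemma~\ref{cor:2} together with Lemma~\ref{lem:duplTheSameMatches} transfers $\I\models q$ to $\I'\models q$. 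By the preceding lemma, some pair $(\forkrev{q},\Pi)$ is compatible with $\I'$. Since $\kb_s$ is a super-spoiler, it contains, for this particular $\Pi$, one of the four spoiler items; but every such item is the direct negation of one of the clauses of Definition~\ref{def:agrees} that must hold by compatibility, yielding the desired contradiction.

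The main obstacle I anticipate is the bookkeeping in the ``if'' direction: one has to verify that the transformation of $\I$ into a $(|q|{+}1)$-acyclic model really preserves the additional spoiler assertions in $\abox_s$ and $\tbox_s$, which contain negated concept and role atoms not present in the original normal form; this is handled by observing that $\neg C(a)$ is equivalent to $(\neg C)(a)$ after a trivial renormalisation, that bisimulation preservation (Lemmas~\ref{lem:bisimlast} and~\ref{lem:loosebisim}) transports arbitrary $\ALCSCC$ concepts of depth at most one (Observation~\ref{obs:neighbourhood}), and that duplication preserves all concept extensions. A secondary subtlety is the quantification over fork rewritings in the preceding lemma versus the quantification over splittings in the definition of super-spoiler, which is resolved by noting that every splitting of a fork rewriting $\forkrev{q}$ induces, via variable identification, a splitting of $q$ with the same roll-up concepts and ABox atoms, so a single collection of spoiler items suffices.
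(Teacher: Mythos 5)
The ``if'' direction of your argument contains a genuine error. You pick an arbitrary finite model $\I$ of $(\abox\cup\abox_s,\tbox\cup\tbox_s,\ercbox)$, assume $\I\models q$ for contradiction, and claim that Lemma~\ref{cor:2} together with Lemma~\ref{lem:duplTheSameMatches} transfers $\I\models q$ to $\I'\models q$ for the transformed $(|q|{+}1)$-acyclic model $\I'$. Those lemmas do not give this: Lemma~\ref{cor:2} relates an \emph{acyclic} model to \emph{its own} unraveling, and Lemma~\ref{lem:duplTheSameMatches} relates $\Iunrav$ to $\Iloose{k}_{{+}S}$; the missing link $\I\models q\Rightarrow\Iunrav\models q$ is precisely the direction that fails (only its converse holds, Lemma~\ref{lem:unravnonentail}), since destroying non-forest matches is the very purpose of the transformation. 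Worse, the statement you are trying to establish by contradiction --- that the \emph{original} model $\I$ of the spoiler-extended knowledge base refutes $q$ --- is false in general: take $\abox=\{A(a)\}$, empty $\tbox$, trivial $\ercbox$ and $q=\{r(x,y),r(y,x)\}$; the only splitting puts $x,y$ into $R$ with $\nu\equiv a$, the super-spoiler is $(\{\neg r(a,a)\},\emptyset)$, and a model consisting of $a$ plus an anonymous $r$-cycle satisfies the extended knowledge base \emph{and} $q$. The argument must therefore be aimed at the transformed model: $\I'$ is a $(|q|{+}1)$-acyclic model of $\kb$ (and, via your preservation argument, of $\kb_s$); if $\I'\models q$, the compatibility lemma yields a compatible pair, contradicting the spoiler item of $\kb_s$ holding in $\I'$; hence $\I'\not\models q$, and since $\I'\models\kb$ this already gives $\kb\not\models q$. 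This redirection is exactly what the paper does (it works with the unraveling of the acyclic model and invokes Lutz's lemma there), so the gap is repairable, but as written the step is invalid.

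A second, smaller problem is your resolution of the fork-rewriting bookkeeping. A splitting of $\forkrev{q}$ does not in general induce a splitting of $q$ ``with the same roll-up concepts and ABox atoms'': pulling the partition back along the variable identification can reintroduce forks, so $\restr{q}{T}$ and $\restr{q}{S_i}$ need not be tree-shaped, and the roll-up concepts genuinely differ because merged atoms produce role conjunctions that are not roll-ups of subtrees of $q$ itself (see the query $r(x,y)\wedge r(x,z)\wedge r(t,z)\wedge s(t,y)$ and its rewriting). In your ``only if'' direction this is harmless --- incompatibility of the pairs $(q,\Pi)$ alone already supplies, for every splitting of $q$, a spoiler item satisfied by the acyclic countermodel, and that construction is a nice, more self-contained alternative to the paper's wholesale citation of Lutz --- but in the repaired ``if'' direction you need super-spoilers to also block compatible pairs involving \emph{non-trivial} fork rewritings, which is exactly the part of Lutz's machinery (spoilers ranging over splittings of all fork rewritings) that the paper imports rather than reproves; your proposed shortcut does not supply it.
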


\begin{proof}
Note that a similar Lemma was proven in~\cite{Lutz08} for infinite tree-shaped models. Its proof can be read without any changes
as a proof of the following statement: for all unravelings~$\Iunrav$ the condition~$\Iunrav \not\models q$ holds 
iff~$(\abox \cup \abox_s, \tbox \cup \tbox_s, \ercbox)$ is consistent for some super-spoiler~$\kb_s = (\abox_s, \tbox_s)$. 

If~$\kb \not\models q$ then (from Lemma~\ref{lem:nacyccountermodels}) there exists a~$(|q|+1)$--acyclic 
counter-model~$\I$ for~$q$, i.e., a model~$\I$ satisfying~$\I \not\models q$. Then also~$\Iunrav \not\models q$ (follows from Corollary~\ref{cor:2}). 
From~\cite{Lutz08} we infer that there exists a super-spoiler~$\kb_s = (\abox_s, \tbox_s)$ for~$\Iunrav$. Since~$\Iunrav$ and~$\I$ satisfy 
the same~$\ALCSCC$ formulae, we conclude that~$(\abox \cup \abox_s, \tbox \cup \tbox_s, \ercbox)$ is consistent.

For the opposite way assume that there exists a super-spoiler~$\kb_s = (\abox_s, \tbox_s)$ such that~$\kb' = (\abox \cup \abox_s, \tbox \cup \tbox_s, \ercbox)$ is consistent.
Then there is a~$(|q|+1)$--acyclic model~$\I$ of~$\kb'$. Aiming for contradiction assume that~$\kb \models q$. Hence there is a query match in~$\I$ and from 
Corollary~\ref{cor:2} we also know that~$\Iunrav \models q$. But it contradicts the Lutz's Lemma~\cite{Lutz08} for infinite tree-shaped models.
Hence,~$\Iunrav \not\models q$. Thus~$\I \not\models q$ which clearly implies that~$\kb \not\models q$.
\end{proof}

The last ingredient for designing an exponential time algorithm for deciding query
entailment is to estimate the number of super-spoilers as well as their size.
By showing that one can restrict attention only to trees being subtrees of a 
maximal fork rewriting, Lutz~\cite{Lutz08} have shown that (independently 
of the underlying DL formalism) the following lemma holds:
\begin{lemma}[\cite{Lutz08}] \label{lem:superspoilerbounds}
Let~$\kb = (\abox, \tbox, \ercbox)$ be a normalized~$\ALCSCC$ knowledge base and let~$q$ be a conjunctive query.
Then the total number of super-spoilers for~$\kb$ and~$q$ is only exponential in~$(|q| + |\kb|)$ and the size of
each super-spoiler is only polynomial in~$(|q| + |\kb|)$. Moreover the set of super-spoilers can be enumerated in exponential time.
\end{lemma}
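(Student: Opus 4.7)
The plan is to reproduce, with minor adaptations, the counting argument of Lutz~\cite{Lutz08}. The starting observation is that the only ``tree-shaped subqueries'' that can appear in a splitting $\Pi=(R,T,S_1,\ldots,S_n,\mu,\nu)$ of some fork rewriting $\forkrev{q}$ are, up to variable renaming, connected subtrees of the maximal fork rewriting $\maxfr{q}$. Intuitively, fork elimination is confluent up to renaming, so every tree-shape obtainable as $\restr{\forkrev{q}}{S_i}$ (or as a component of $\restr{\forkrev{q}}{T}$) can be identified with a rooted subtree of $\maxfr{q}$. Since $\maxfr{q}$ has size at most $|q|$, the set of rooted subtrees is polynomial in $|q|$, and hence so is the set of rollup concepts $C_{q',x}$ one has to consider.

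Next I would assemble the ``candidate pool'' of atoms out of which any super-spoiler must be built. Inspecting the four clauses in the definition of a spoiler, these candidates are: GCIs $\top\sqsubseteq\neg C_{q',x}$ for each rooted subtree $(q',x)$ of $\maxfr{q}$; assertions $\neg C(a)$ for $a\in\Ind_\A$ and $C$ either an atomic concept from $q$ or a rollup concept of an ``$S_i$-shaped'' subtree over a role intersection from $q$; and assertions $\neg r(a,b)$ for $a,b\in\Ind_\A$ and role names $r$ occurring in $q$. Each candidate has size polynomial in $|q|+|\kb|$, and, crucially, the total number $N$ of candidates is polynomial in $|q|+|\kb|$. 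By the definition of super-spoiler (minimality), any super-spoiler is a subset of this candidate pool, hence of size at most $N$, hence of polynomial size overall.

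For the count of super-spoilers, it suffices to note that the number of subsets of the candidate pool is $2^N=2^{\mathrm{poly}(|q|+|\kb|)}$, giving the claimed exponential upper bound on the number of super-spoilers. To enumerate them in exponential time, I would iterate over these exponentially many subsets, and for each one check (i)~whether it spoils every splitting $\Pi$ and (ii)~whether it is minimal with this property. For (i), it is enough to enumerate splittings up to the variable renamings that are already absorbed into subtrees of $\maxfr{q}$, and to enumerate the naming function $\nu:R\to\Ind_\A$; there are only exponentially many such $\Pi$ to inspect, each check being polynomial. For (ii), one performs one additional linear sweep removing redundant atoms.

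The main obstacle, as in Lutz's proof, is the structural claim about subtrees of $\maxfr{q}$: one must verify that any compatibility witness obtainable by first applying fork eliminations to $q$ and then splitting can be matched by a spoiler built only from rollups of subtrees of $\maxfr{q}$, rather than of subtrees of arbitrary intermediate fork rewritings. Once this stabilisation property is established, the polynomial bound on candidates, and hence both the size and cardinality bounds on super-spoilers, follow by straightforward counting.
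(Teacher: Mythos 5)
Your outline is, in essence, the same argument the paper relies on: the paper's own ``proof'' of this lemma is a one-line appeal to Lemmas~4--6 of~\cite{Lutz08}, and what you reconstruct -- restriction of the relevant tree-shaped subqueries to rooted subtrees of the maximal fork rewriting, a polynomial candidate pool of GCIs and negated assertions, minimality giving polynomial-size super-spoilers, subset counting giving exponentially many, and enumeration by checking the spoiler condition against all splittings -- is precisely the content of those cited lemmas. The ``main obstacle'' you flag (that trees arising from splittings of arbitrary fork rewritings stabilise to subtrees of $\maxfr{q}$, including the role conjunctions on edges from $R$-variables into roots) is exactly what Lutz's lemmas establish and what the paper imports wholesale, so your proposal matches the paper's approach rather than deviating from it.
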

\begin{proof}
Immediate conclusion from Lemma~$4$, Lemma~$5$ and Lemma~$6$ from~\cite{Lutz08}.
\end{proof}

The algorithm for deciding conjunctive query entailment for~$\ALCSCC$ knowledge bases~$\kb = (\abox, \tbox)$ w.r.t Aboxes, Tboxes and ERCBoxes is quite simple. 
We enumerate all super-spoilers~$\kb_s = (\abox_s, \tbox_s)$ (from Lemma~\ref{lem:superspoilerbounds} we know that 
there are only exponentially many of them and the enumeration process can be done in exponential time) 
and run a satisfiability test for~$\kb' = (\abox \cup \abox_s, \tbox \cup \tbox_s, \ercbox)$
by employing an algorithm described in Theorem~\ref{exptime:thm}. Since the size of~$\kb_s$ is only polynomial in~$(|q| + |\kb|)$
then the size of~$\kb'$ is also only polynomial in~$(|q| + |\kb|)$. Hence the satisfiability check can be done in~$\ExpTime$ (by~Theorem~\ref{exptime:thm} again).
We return the answer that~$q$ is not entailed by~$\kb$ if~$\kb'$ is satisfiable for some 
super-spoiler and that the query is entailed otherwise. Correctness of the procedure is guaranteed by Lemma~\ref{lem:superspoiler1}.
Hence we obtain:
\begin{theorem}
Conjunctive query entailment from~$\ALCSCC$ ERCBoxes wrt.~$\ALCSCC$ ABoxes is~$\ExpTime$-complete.
\end{theorem}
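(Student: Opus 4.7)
The plan is to combine the three main ingredients already established in this section: the spoiler characterisation of non-entailment (Lemma~\ref{lem:superspoiler1}), the enumeration bound on super-spoilers (Lemma~\ref{lem:superspoilerbounds}), and the \ExpTime decision procedure for ABox consistency w.r.t.\ ERCBoxes (Theorem~\ref{exptime:thm}). Before running the algorithm, I first normalise the input $\kb=(\abox,\tbox,\ercbox)$ by the routine concept-flattening transformation, which only causes a polynomial blow-up in $|\kb|+|q|$ and preserves both (non-)consistency and (non-)entailment of $q$.

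For the upper bound, I would then describe the following procedure. First, enumerate the set of all super-spoilers $\kb_s=(\abox_s,\tbox_s)$ for $q$ and $\kb$; by Lemma~\ref{lem:superspoilerbounds} there are at most exponentially many of them, each of polynomial size in $|q|+|\kb|$, and the enumeration itself runs in exponential time. For each such super-spoiler, form the extended knowledge base $\kb'=(\abox\cup\abox_s,\tbox\cup\tbox_s,\ercbox)$, which is still an $\ALCSCC$ knowledge base (since $\ALCHcap\subseteq\ALCSCC$) of size polynomial in $|q|+|\kb|$, and invoke the consistency procedure from Theorem~\ref{exptime:thm}. The algorithm outputs ``not entailed'' iff at least one such $\kb'$ is consistent. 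Correctness follows immediately from Lemma~\ref{lem:superspoiler1}, and the overall runtime is exponential: an exponential number of super-spoilers, each giving rise to a consistency check of size polynomial in the input and hence solvable in \ExpTime by Theorem~\ref{exptime:thm}.

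For the matching lower bound I would appeal to Proposition~\ref{prop:exptime}: \ExpTime-hardness already holds for consistency of $\ALCSCC$ ABoxes w.r.t.\ $\ALCSCC$ ERCBoxes, and this reduces in the standard way to CQ (non-)entailment by picking a Boolean CQ $q_\bot$ that is trivially falsified in every model (for instance the query $\{A(x),\neg A(x)\}$ simulated via a fresh concept name forced to be empty, or simply by taking any query whose body cannot be matched into a chosen canonical model), so that $\kb\not\models q_\bot$ iff $\kb$ is consistent.

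The main conceptual obstacle, already handled in Section~\ref{subsec:nacyclic}, is that Lutz's spoiler technique was developed for arbitrary (typically infinite, tree-shaped) models, whereas our semantics restricts to finite models: without the forest-like model construction of Sections~\ref{sec:unrav}--\ref{sec:dupl} the equivalence used inside the proof of Lemma~\ref{lem:superspoiler1}---namely that a query match on $\I$ lifts to its forward-unravelling $\Iunrav$ and vice versa through the $(|q|{+}1)$-acyclic intermediate model---would not be available. Since that equivalence is exactly what Corollary~\ref{cor:2} provides and has already been plugged into Lemma~\ref{lem:superspoiler1}, the final theorem is then a direct consequence of stringing together the three cited results, so the remaining work is purely bookkeeping.
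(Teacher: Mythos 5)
Your proposal is correct and follows essentially the same route as the paper: enumerate the exponentially many polynomial-size super-spoilers (Lemma~\ref{lem:superspoilerbounds}), test consistency of each $\kb'=(\abox\cup\abox_s,\tbox\cup\tbox_s,\ercbox)$ via Theorem~\ref{exptime:thm}, and conclude correctness from Lemma~\ref{lem:superspoiler1}. The only cosmetic difference is the lower bound, where you reduce from Proposition~\ref{prop:exptime} via a trivially unmatched query while the paper simply inherits \ExpTime-hardness from $\ALC$ concept satisfiability w.r.t.\ TBoxes; both are fine.
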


Moreover, since~$\ALCHQ$ is a sublogic of~$\ALCSCC$ 
(in a sense that for every~$\ALCHQ$ concept we find an equisatisfiable~$\ALCSCC$ concept), 
as a corollary we obtain the first known exponential time algorithm for deciding finite 
query entailment over~$\ALCHQ$ knowledge bases.
\begin{corollary}
Conjunctive query entailment from~$\ALCHQ$ TBoxes wrt.~$\ALCHQ$ ABoxes is~$\ExpTime$-complete.
\end{corollary}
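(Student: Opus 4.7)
The plan is to polynomially reduce (finite) CQ entailment over $\ALCHQ$ knowledge bases to (finite) CQ entailment over $\ALCSCC$ ABoxes w.r.t.\ $\ALCSCC$ ERCBoxes, and then invoke the preceding theorem. Concretely, I would define a syntactic translation $\tau$ mapping $\ALCHQ$ concepts to $\ALCSCC$ concepts that is the identity on concept names and Boolean operators, and sends qualified number restrictions to the obvious successor expressions, i.e.\ $\tau((\atleastq{n}{r}{C})) := \suc(|r\cap\tau(C)|\geq n)$ and $\tau((\atmostq{n}{r}{C})) := \suc(|r\cap\tau(C)|\leq n)$. An $\ALCHQ$ TBox $\tbox$ translates element-wise, and role hierarchy axioms $r\sqsubseteq s$ are encoded by adding the CI $\top\sqsubseteq\suc(r\subseteq s)$, which enforces containment of $r$-successors in $s$-successors at every individual and hence is equivalent to the global role inclusion. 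ABoxes and queries are translated atom-wise, with each $\ALCHQ$ concept replaced by its $\tau$-image.

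The next step is to verify, by a straightforward induction on concept structure, that for every finite interpretation $\I$ and every $\ALCHQ$ concept $C$ one has $C^\I = \tau(C)^\I$, and consequently $\I\models \kb$ iff $\I\models\tau(\kb)$. Since the translation does not alter the underlying interpretation domain or role extensions, $\I\models q$ iff $\I\models\tau(q)$ for any CQ $q$. Therefore $\kb\models q$ over finite models iff $\tau(\kb)\models\tau(q)$ over finite models. Because $\tau$ is polynomial (in particular, only one extra CI per role hierarchy axiom is added, and no ERCBox is needed beyond the trivially true one), the previous theorem yields the \ExpTime upper bound for finite CQ entailment in $\ALCHQ$.

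The matching lower bound is inherited from the classical \ExpTime-hardness of KB consistency, equivalently instance checking, in $\ALC$ w.r.t.\ general TBoxes: instance checking $\kb\models A(a)$ is the special case of CQ entailment where $q = \{A(x)\}$ and the match is forced to $a$ via a nominal-style trick, or more directly by noting that $\kb\not\models q$ iff $\kb\cup\{\neg A(a)\}$ is consistent. Since $\ALC$ is a fragment of $\ALCHQ$, hardness transfers.

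The only real point that needs care is that the encoding of role hierarchies via $\top\sqsubseteq\suc(r\subseteq s)$ faithfully captures global role inclusion in \emph{all} finite models, but this is immediate from the semantics of successor expressions since $\tubs{\I}{d}$ assigns $r^\I(d)$ to the set variable $r$, so the constraint holds at every $d$ exactly when $r^\I\subseteq s^\I$. No issue with finite versus unrestricted models arises, since both the source problem (by the framing of the paper) and the target problem are interpreted over finite models, and the translation is model-preserving on the level of concrete finite interpretations.
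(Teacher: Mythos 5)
Your proposal is correct and follows essentially the same route as the paper: the paper also obtains the result by observing that $\ALCHQ$ embeds into $\ALCSCC$ (qualified number restrictions becoming successor expressions, role hierarchies becoming CIs of the form $\top\sqsubseteq\suc(r\subseteq s)$) and then invoking the preceding \ExpTime theorem, with the lower bound inherited from $\ALC$. You merely spell out the translation and its model-preservation in more detail than the paper, which states the embedding in one sentence.
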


The~$\ExpTime$ lower bounds comes already from~$\ALC$ concept satisfiability w.r.t TBoxes.

\section{Conclusion}

We have introduced the DL~$\ALCplus$, which allows for mixing local and global cardinality constraints. 
Though being considerably more expressive than previously investigated DLs with cardinality constraints,
reasoning in~$\ALCplus$ has turned out to be not harder that reasoning in~$\ALC$ with very simple cardinality restrictions.
However, extending~$\ALCplus$ with inverse roles causes undecidability for the standard inference satisfiability, as does considering 
the non-standard inference of query entailment in~$\ALCplus$. We were able to show that decidability of query entailment can be regained
by considering restricted cardinality constraints (ERCBoxes) in the sub-logic $\ALCSCC$ of~$\ALCplus$. The~\ExpTime upper bound proved for this task
depends on the ExpTime upper bound for ABox consistency in $\ALCSCC$ w.r.t.\ ERCBoxes shown for the first time in the present paper.
 
Some of the results presented here have already been sketched in a paper at the DL workshop~\cite{BaBR19}. However, there
the positive result for query entailment was restricted to a setting without ABox since we did not yet have the result for ABox consistency, 
and only a~\TwoExpTime upper bound for the complexity was shown. In addition, the undecidability result for~$\ALCIplus$ is also not contained
in~\cite{BaBR19}.

Regarding future work, it would be interesting to investigate the impact that adding inverse roles has on reasoning 
in~$\ALCSCC$ w.r.t.\ different kinds of terminological boxes (TBox, ERCBox, ECBox), though this will probably be a very hard task. 
From an application point of view,
as a first step towards a more practical query answering algorithm, we intend to investigate the ABox consistency problem in~$\ALCSCC$ w.r.t.\ ERCBoxes.
Since type elimination algorithms are not only worst-case, but also best-case exponential, we will try to devise a tableau-based algorithm 
for this problem, which may use numerical algorithms and satisfiability checkers for QFBAPA as sub-procedures.

\section*{Acknowledgements}
Franz Baader was partially supported by the German Research Foundation (DFG) within the Research Unit 1513 Hybris and
grant 389792660 as part of TRR 248.
Bartosz Bednarczyk was supported by the European Research Council (ERC) through the Consolidator Grant~771779 (DeciGUT) and the Polish Ministry of Science and Higher Education program ``Diamentowy Grant'' no. DI2017 006447.
Sebastian Rudolph was supported by the European Research Council (ERC) through the Consolidator Grant~771779 (DeciGUT).

\bibliographystyle{plain}
\bibliography{bibliography}

\end{document}